\newcolumntype{+}{!{\vrule width 2pt}}
\newlength\savedwidth
\renewcommand{\@biblabel}[1]{\quad#1.}
\tikzstyle{tre}=[circle,draw,minimum size=2.5mm, inner sep=3pt]
\tikzstyle{mintre}=[circle,draw,minimum size=1mm, inner sep=0pt]
\newcommand{\etq}[1]{%
\draw (#1) node {\tiny $#1$};
}
 \theoremstyle{plain}
 \newtheorem{theorem}{Theorem}
 \newtheorem{lemma}[theorem]{Lemma}
 \newtheorem{corollary}[theorem]{Corollary}
 \newtheorem{proposition}[theorem]{Proposition}
 \newtheorem{example}[theorem]{Example}
 \newtheorem{definition}[theorem]{Definition}
\renewcommand{\leq}{\leqslant}
\renewcommand{\geq}{\geqslant}
\newcommand{\TT}{\mathcal{T}}
\newcommand{\ZZ}{\mathbb{Z}}
\newcommand{\RR}{\mathbb{R}}
\newcommand{\NN}{\mathbb{N}}
\newcommand{\var}{\mathrm{var}}
\newcommand{\sd}{\mathit{sd}}
\newcommand{\MDM}{\mathrm{MDM}}
\newcommand{\FS}{\mathit{FS}}
\begin{document}
\vspace*{0.2in}

\begin{flushleft}
{\Large
\textbf\newline{Sound Colless-like balance indices for multifurcating  trees} 
}
\newline
\\
Arnau Mir\textsuperscript{13},
Francesc Rossell\'o\textsuperscript{13},
Luc\'\i a Rotger\textsuperscript{23}
\\
\bigskip
\textbf{1} Dept. of Mathematics and Computer Science, University of the Balearic Islands, E-07122 Palma, Spain
\\
\textbf{2} Dept. of Mathematics and Computing, University of La Rioja, E-26004 Logro\~no, Spain
\\
\textbf{3} Balearic Islands Health Research Institute (IdISBa), E-07010 Palma, Spain
\end{flushleft}

\section*{Abstract}
The Colless index is one of the most popular and natural balance indices for bifurcating phylogenetic trees, but it makes no sense for multifurcating trees. In this paper we propose a family of \emph{Colless-like} balance indices $\mathfrak{C}_{D,f}$, which depend on a dissimilarity $D$ and a function $f:\NN\to \RR_{\geq 0}$, 
 that generalize  the Colless index  to multifurcating phylogenetic trees.  We provide two   functions $f$ such that the most balanced phylogenetic trees according to the corresponding indices $\mathfrak{C}_{D,f}$  are exactly the fully symmetric ones. Next, for each one of these two functions $f$ and for three popular dissimilarities  $D$ (the variance, the standard deviation, and the mean deviation from the median),  we determine the range of values of  $\mathfrak{C}_{D,f}$ on the sets of phylogenetic trees with a given number $n$ of leaves. We end the paper by assessing the performance of one of these indices on TreeBASE and using it to show that the trees in this database do not seem to follow either the uniform model for multifurcating trees or the $\alpha$-$\gamma$-model, for any values of $\alpha$ and $\gamma$.


\section*{Introduction}

Since the early 1970s, the shapes of phylogenetic trees have been used to test hypothesis about the evolutive forces underlying their assembly \cite{MH97}. The topological feature of phylogenetic trees most used  in this connection is their symmetry, which captures the symmetry of the evolutionary histories described by the phylogenetic trees. The symmetry of a tree is usually measured by means of its \emph{balance}  \cite[pp. 559--560]{fel:04}, the tendency of the children of any given node to have the same number of descendant leaves. Several \emph{balance  indices} have been proposed so far to quantify the balance of a phylogenetic tree. The two most popular ones are the \emph{Colless index} \cite{Colless:82}, which only works for bifurcating trees, and the \emph{Sackin index} \cite{Sackin:72,Shao:90}, which can be used on multifurcating trees, but there are many others: see for instance \cite{MMRV,cherries,MRR:12,Shao:90} and  \cite[pp. 562--563]{fel:04}. 

The \emph{Colless index} $C(T)$ of a  bifurcating phylogenetic tree $T$ is defined as follows:  if we call the  \emph{balance value} of every internal node $v$ in $T$ the absolute value of the difference between the number of descendant leaves of its pair of children, then $C(T)$ is the sum of the balance values of its internal nodes. In this way, 
the Colless index of a bifurcating tree measures the average balance value of its internal nodes, and therefore it quantifies  in a very intuitive way its balance. In particular, $C(T)=0$ if, and only if, $T$ is a fully symmetric  bifurcating tree with $2^m$ leaves, for some  $m$. 

Unfortunately, the Colless index can only be used as it stands on bifurcating trees. A natural generalization to multifurcating trees would be to define the balance value of a node as some measure of the spread of the numbers of descendant leaves of its children, like the standard deviation, the mean deviation from the median, or any other dissimilarity applied to these numbers, and then to add up all these balance values. But this definition has a drawback: this sum can be 0 on a non-symmetric multifurcating tree, and hence the resulting index does not capture the symmetry of a tree in a sound way. For an example of this misbehavior,  consider the tree depicted in Fig. \ref{fig:nocoll1}: all children of each one of its nodes have the same number of descendant leaves and therefore the balance value of each node in it would be 0, but the tree is not symmetric. Replacing the number of descendant leaves by the number of descendant nodes, which in a bifurcating tree is simply twice the number of descendant leaves minus 1, does not save the day: again, all children of each node in the tree depicted in Fig. \ref{fig:nocoll1} have the same number of descendant nodes.


In this paper we overcome this drawback by taking a suitable  function $f:\NN\to \RR_{\geq 0}$ and then replacing in this schema the number of descendant leaves or the number of descendant nodes of a node by the \emph{$f$-size} of the subtree rooted at the node, defined as the sum of the images under $f$ of the degrees of the nodes in the subtree. Then, we  define the \emph{balance value} (relative to such a function $f$ and a dissimilarity $D$) of an internal node in a phylogenetic tree as the value of $D$  applied to the $f$-sizes of the subtrees rooted at the children of the node. Finally, we define the \emph{Colless-like index} $\mathfrak{C}_{D,f}$ of a phylogenetic tree as the sum of the balance values relative to $f$ and $D$ of its internal nodes. 

The advantage of such a general definition is that there exist functions $f$ such that, for every dissimilarity $D$, the resulting index $\mathfrak{C}_{D,f}$ satisfies that $\mathfrak{C}_{D,f}(T)=0$ if, and only if, $T$ is \emph{fully symmetric}, in the sense that, for every internal node $v$, the subtrees rooted at the children of  $v$ have all the same shape. Two such functions turn out to be $f(n)=\ln(n+e)$ and $f(n)=e^n$.

The different growth pace of these two functions make them to quantify balance in different ways. We show it by finding the trees with largest  $\mathfrak{C}_{D,f}$ value when $f$ is one of these two functions and $D$ is  the variance, the standard deviation, or the mean deviation from the median. We show that the choice of the dissimilarity $D$ does not mean any major difference in the maximally balanced trees relative to $\mathfrak{C}_{D,f}$ for a fixed such $f$, but that changing the function $f$ implies completely different maximally unbalanced trees. 

Finally, we perform some experiments on the TreeBASE phylogenetic database \cite{Treebase}. On the one hand, we compare the behavior of one of our Colless-like indices, the one obtained by taking $f(n)=\ln(n+e)$ and as dissimilarity the mean deviation from the median, $\MDM$, with that of two other balance indices for multifurcating trees: the Sackin index and the total cophenetic index \cite{MRR:12}. On the other hand, we use this Colless-like index to contrast the goodness of fit of the trees in TreeBASE to the uniform distribution and to the $\alpha$-$\gamma$-model for multifurcating trees \cite{Ford2}. 

\section*{Materials}

\subsection*{Notations and conventions}

\noindent Throughout this paper, by a \emph{tree} we always mean a rooted, finite tree  without out-degree 1 nodes. As usual, we understand such a tree as a directed graph, with its arcs pointing away from the root. Given a tree $T$, we shall denote its sets of nodes, of \emph{internal} (that is, non-leaf) nodes, and of arcs by $V(T)$, $V_{int}(T)$, and $E(T)$, respectively, and the out-degree of a node $v\in V(T)$ by $\deg(v)$.
A tree $T$ is \emph{bifurcating} when $\deg(v)=2$ for every $v\in V_{int}(T)$. Whenever we want to emphasize the fact that a tree need not be bifurcating, we shall call it \emph{multifurcating}. The \emph{depth} of a node in a tree $T$ is the length (i.e., number of arcs) of the directed path from the root to it, and the \emph{depth} of $T$ is the largest depth of a leaf in it. 
We shall always make the abuse of language of saying that two isomorphic trees are equal, and hence we shall always identify any  tree with its isomorphism class. We shall denote by $\TT_n^*$ the set of (isomorphism classes  of) trees with $n$ leaves, and by $\TT^*$ the union $\bigcup_{n\geq 1}\TT_n^*$.

A \emph{phylogenetic tree} on a (non-empty, finite) set $X$ of  \emph{labels}  is  a tree with its leaves bijectively labelled in the set $X$.   We shall always identify every leaf in a phylogenetic tree $T$ on $X$ with its label, and in particular we shall denote its set of leaves by $X$.  
Two phylogenetic trees $T_1,T_2$ on $X$ are \emph{isomorphic}  when there exists an isomorphism of directed graphs between them that preserves the labelling of the leaves.  We shall also  make always the abuse of language of considering two isomorphic phylogenetic trees as equal.  Given a set of labels $X$, we shall denote by $\TT_X$  the set of (isomorphism classes of) phylogenetic trees on $X$, and we shall denote by
$\TT_n$, for every $n\geq 1$, the set $\TT_{\{1,2,\ldots,n\}}$. Notice that if $|X|=n$, then any bijection $X\leftrightarrow \{1,2,\ldots,n\}$ induces a bijection  $\TT_X\leftrightarrow\TT_n$. 
Moreover, if $|X|=n$, there is a forgetful mapping $\pi_X:\TT_X\to \TT_n^*$ that sends 
every phylogenetic tree to the corresponding unlabeled tree, which we shall call its \emph{shape}.

No closed formula is known for the numbers $\big|\TT^*_n\big|$ or $\big|\TT_n\big|$. Felsenstein \cite[Ch. 3]{fel:04} gives an easy recurrence to compute $\big|\TT_n\big|$ and describes how to obtain such a recurrence for $\big|\TT_n^*\big|$; an explicit algorithm to compute the latter is provided in \cite{XZL}.  These numbers $\big(\big|\TT_n\big|\big)_n$ and $\big(\big|\TT_n^*\big|\big)_n$ form sequences A000311
and A000669, respectively, in Sloane's \textsl{On-Line Encyclopedia of Integer Sequences}  \cite{Sloane}, where more information about them can be found.

A \emph{comb} is a bifurcating phylogenetic tree with all its internal nodes having a leaf child: see Fig.~\ref{fig:exs1}. We shall generically denote  every comb in $\TT_n$, as well as their shape in $\TT_n^*$, by $K_n$.
A \emph{star} is a phylogenetic tree of depth 1: see Fig.~\ref{fig:exs2}. For consistency with later notations, we shall denote the star in $\TT_n$, and its shape in $\TT_n^*$, by $\FS_n$.

Let $T_1,\ldots,T_k$ be phylogenetic trees on pairwise disjoint sets of labels $X_1,\ldots,X_k$, respectively. The phylogenetic tree
$T_1\star\cdots\star T_k$ on  $X_1\cup \cdots\cup X_k$ is obtained by adding to the disjoint union of $T_1,\ldots,T_k$ a new node $r$ and new arcs from $r$ to the root of each $T_i$. In this way, the trees $T_1,\ldots,T_k$ become the subtrees 
of $T_1\star\cdots\star T_k$ rooted at the children of its root $r$; cf. Fig.  \ref{fig:star}. A similar construction produces a tree $T_1\star\cdots\star T_k$ from a set of (unlabeled) trees $T_1,\ldots,T_k$.


Given a node $v$ in a tree $T$, we shall denote by $T_v$ the subtree of $T$ rooted at $v$ and by $\kappa_v$ its number of descendant leaves, that is, the number of leaves of $T_v$.
An internal node $v$ of a tree $T$ is \emph{symmetric} when, if $v_1,\ldots,v_k$ are its children,  the trees $T_{v_1},\ldots,T_{v_k}$ are isomorphic. A  tree $T$ is \emph{fully symmetric} when all its internal nodes are symmetric, and a phylogenetic tree is \emph{fully symmetric} when its shape is so.

Given a  number $n$ of leaves, there may exist several    fully symmetric  trees with $n$ leaves. For instance, there are 
three fully symmetric  trees with 6 leaves, depicted in Fig. \ref{fig:1}.
As a matter of fact, every fully symmetric  tree with $n$ leaves is characterized by an ordered factorization $n_1\cdots n_k$ of $n$, with  $n_1,\ldots,n_k\geq 2$. More specifically,
for every $k\geq 1$ and $(n_1,\ldots,n_k)\in \NN^k$ with $n_1,\ldots,n_k\geq 2$, let
$\FS_{n_1,\ldots,n_k}$ be the  tree defined, up to isomorphism, recursively as follows:
\begin{itemize}
\item $\FS_{n_1}$ is the star  with $n_1$ leaves.

\item If $k\geq 2$, $\FS_{n_1,\ldots,n_k}$ is a tree whose root has $n_1$ children, and the subtrees  at each one of these children are (isomorphic to) $\FS_{n_2,\ldots,n_k}$.
\end{itemize}
Every $\FS_{n_1,\ldots,n_k}$ is fully symmetric, and every fully symmetric  tree is isomorphic to some $\FS_{n_1,\ldots,n_k}$. 
Therefore, for every $n$, the number of fully symmetric  trees with $n$ leaves is equal to the number $H(n)$ of ordered factorizations of $n$
(sequence A074206 in Sloane's \textsl{On-Line Encyclopedia of Integer Sequences}  \cite{Sloane}).

\subsection*{The Colless index}\label{sec:colless}

The \emph{Colless index} $C(T)$ of a  bifurcating tree $T$ with $n$ leaves  is defined as follows \cite{Colless:82}: if, 
for every $v\in V_{int}(T)$, we denote by $v_1$ and $v_2$  its two children and by
$\kappa_{v_1}$ and $\kappa_{v_2}$ their respective numbers of descendant leaves, then 
$$
C(T)=\sum_{v\in V_{int}(T)} |\kappa_{v_1}-\kappa_{v_2}|.
$$
The Colless index of a phylogenetic tree is simply defined as the Colless index of its shape

It is well-known that the maximum Colless index on the set of bifurcating trees with $n$ leaves is reached at the comb $K_n$, and it is 
$$
C(K_n)=\binom{n-1}{2}
$$ 
(see, for instance, \cite{Rogers93}). As a matter of fact, for every $n$ this maximum is only reached at the comb. Since we have not been able to find 
an explicit reference for this last result in the literature and we shall make use of it later, we provide a proof here.

\begin{lemma}\label{lem:cat-C}
For every bifurcating  tree $T$ with $n$ leaves, if $T\neq K_n$,  then $C(T)<C(K_n)$.
\end{lemma}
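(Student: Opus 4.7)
The plan is to prove the lemma by strong induction on $n$, using the recursive decomposition of the Colless index under the root-join operation $\star$. For $n\le 3$, every bifurcating tree with $n$ leaves is isomorphic to $K_n$, so the statement is vacuous. For the inductive step, assume the lemma holds for all bifurcating trees with fewer than $n$ leaves, and let $T$ be a bifurcating tree with $n$ leaves, $T\neq K_n$. Write $T=T_1\star T_2$, where $T_1,T_2$ are the two subtrees rooted at the children of the root of $T$, with $a$ and $n-a$ leaves respectively; by exchanging them if necessary we may assume $a\le n-a$. Directly from the definition,
\[
C(T)=|n-2a|+C(T_1)+C(T_2)=(n-2a)+C(T_1)+C(T_2).
\]

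I would then split into two cases according to the value of $a$. If $a=1$, then $T_1$ is a single leaf (so $C(T_1)=0$) and $T_2$ is a bifurcating tree with $n-1$ leaves; since $T\neq K_n$, we have $T_2\neq K_{n-1}$, so by the inductive hypothesis $C(T_2)<C(K_{n-1})=\binom{n-2}{2}$, whence $C(T)<(n-2)+\binom{n-2}{2}=\binom{n-1}{2}$.

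If $a\ge 2$ (so in particular $n\ge 4$), I would use the already-known maximum bound $C(T_i)\le \binom{|T_i|-1}{2}$ for $i=1,2$ together with the algebraic identity
\[
\binom{n-1}{2}-(n-2a)-\binom{a-1}{2}-\binom{n-a-1}{2}=(a-1)(n+1-a),
\]
which one verifies by direct expansion (the only routine computation in the argument). This gives
\[
C(T)\le (n-2a)+\binom{a-1}{2}+\binom{n-a-1}{2}=\binom{n-1}{2}-(a-1)(n+1-a),
\]
and since $2\le a\le n-a\le n-2$ we have $(a-1)(n+1-a)\ge 1\cdot 3>0$, so $C(T)<\binom{n-1}{2}$ as required.

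The main obstacle is essentially bookkeeping: one has to keep the $a=1$ branch separate because the bound $C(T_2)\le \binom{n-2}{2}$ can be tight (precisely when $T_2=K_{n-1}$), so strictness must come from the inductive hypothesis rather than from the algebraic slack $(a-1)(n+1-a)$, which vanishes exactly at $a=1$. The $a\ge 2$ branch instead gets its strictness from the algebraic identity, regardless of whether $T_1$ or $T_2$ is a comb. No other subtlety is expected.
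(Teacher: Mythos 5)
Your proof is correct, but it takes a genuinely different route from the paper's. The paper argues by local tree surgery: it locates an internal node $x$ of minimal depth with no leaf child, prunes one of the four grandchild subtrees and regrafts it onto the sibling arc, checks by a short computation that this strictly increases $C$, and then invokes finiteness of $\TT_n^*$ to conclude that iterating must terminate at the comb, which is therefore the unique maximizer. You instead run a strong induction on $n$ via the root decomposition $T=T_1\star T_2$, splitting on whether the lighter subtree is a single leaf. I checked your algebraic identity: with $a\le n-a$ one has $\binom{n-1}{2}-\binom{a-1}{2}-\binom{n-a-1}{2}=a(n-a)-1$, and subtracting $n-2a$ indeed gives $(a-1)(n+1-a)$, which is $\geq 3$ for $2\le a\le n/2$, $n\ge 4$; and your $a=1$ branch correctly observes that $T\neq K_n$ forces $T_2\neq K_{n-1}$ so strictness comes from the inductive hypothesis. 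One small remark: you cite $C(T_i)\le\binom{|T_i|-1}{2}$ as "already known" (the paper does cite Rogers for it), but it also follows directly from your own strong induction hypothesis since both $T_i$ have fewer than $n$ leaves, so your argument is self-contained and in fact reproves the maximum value rather than assuming it. The trade-off: your induction is shorter and purely arithmetic, while the paper's surgery argument is reused almost verbatim later (in the supplementary file, the analogous Corollary for the quadratic Colless index $C^{(2)}$ recycles exactly the same pruning-and-regrafting move), so the authors get two results from one computation.
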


\begin{proof}
Let $T$ a bifurcating tree with $n$ leaves different from the comb $K_n$. Let $x$ be an internal node of smallest depth in it without any leaf child, and let $T_1\star T_2$ and $T_3\star T_4$ be the subtrees rooted at its children (see Fig.~\ref{fig:cat1}); for every $i=1,2,3,4$, let $t_i$ be the number of leaves of $T_i$. Assume, without any loss of generality, that $t_1\leq t_2$ and $t_1+t_2\leq t_3+t_4$. Let then $T'$ be the tree obtained by pruning $T_2$ from $T$ and regrafting it to the other arc starting in $x$
(see again Fig.~\ref{fig:cat1}).

It turns out that $C(T')>C(T)$. Indeed, the only nodes whose children change their numbers of descendant leaves from $T$ to $T'$ are (cf.  Fig.~\ref{fig:cat1}): the node $x$; the parent $y$ of the roots of $T_1$ and $T_2$ in $T$, which is removed in $T'$; and the parent $z$ of the root of $T_2$ in $T'$, which does not exist in $T$. Therefore,
$$
\begin{array}{l}
C(T')-C(T)\\
\qquad\quad  =|t_3+t_4-t_2|+[t_3+t_4+t_2-t_1|-|t_2-t_1|-|t_3+t_4-t_2-t_1|\\
\qquad\quad = t_3+t_4-t_2+t_3+t_4+t_2-t_1-t_2+t_1-t_3-t_4+t_2+t_1\\
\qquad\quad =t_1+ t_3+t_4>0.
\end{array}
$$

So, this procedure takes a bifurcating tree  with $n$ leaves $T\neq K_n$ and  produces a new bifurcating tree 
$T'$ with the same number $n$ of leaves and strictly larger Colless index. Since 
the number of bifurcating trees with $n$ leaves is finite, the Colless index cannot increase indefinitely, which means that 
if we iterate this procedure, we must eventually stop at a comb $K_n$. And since the Colless index strictly increases at each iteration, we conclude that if $T\neq K_n$, then $C(T)<C(K_n)$.
\end{proof}

\section*{Methods}

\subsection*{Colless-like indices}

\noindent  Let  $f:\NN\to\RR_{\geq 0}$ be a function that sends each natural number to a positive real number. The \emph{$f$-size} of a tree $T\in \TT^*$ is defined as
$$
\delta_f(T)=\sum_{v\in V(T)} f(\deg(v)).
$$
If $T\in \TT_X$, for some set of labels $X$, then $\delta_f(T)$ is defined as $\delta_f(\pi_X(T))$.

So, $\delta_f(T)$ is the sum of the degrees of all nodes in $T$, with these degrees weighted by means of the function $f$.
Examples of $f$-sizes include:
\begin{itemize}
\item The \emph{number of leaves}, $\kappa$, which is obtained by taking $f(0)=1$ and $f(n)=0$ if $n>0$.
\item The \emph{order} (the number of nodes), $\tau$, which corresponds to $f(n)=1$ for every $n\in \NN$.
\item The usual \emph{size} (the number of arcs), $\theta$, which corresponds to $f(n)=n$ for every $n\in \NN$.
\end{itemize}
Notice that $\delta_f$ satisfies the following recursion:
$$
\delta_f(T_1\star\cdots \star T_k)=\delta_f(T_1)+\cdots+\delta_f(T_k)+f(k).
$$
Table 1 in the Supporting File S2 gives the abstract values of $\delta_f(T)$ for every $T\in \TT^*_n$ with $n=2,3,4,5$.

\begin{example}\label{ex:deltabin}
If $T$ is a bifurcating tree with $n$ leaves, and hence with $n-1$ internal nodes, all of them of out-degree 2, then
$$
\delta_f(T)=(f(0)+f(2))n-f(2).
$$
\end{example}

\begin{example}\label{ex:fullysym}
For every fully symmetric tree $\FS_{n_1,\ldots,n_k}$,
$$
\delta_f(\FS_{n_1,\ldots,n_k}) = n_1\cdots n_k\cdot f(0)+n_1\cdots n_{k-1}\cdot f(n_k)+\cdots+n_1\cdot f(n_2)+f(n_1).
$$
\end{example}

Let now 
$$
\RR^+=\bigcup\limits_{k\geq 1}  \RR^k=\big\{(x_1,\ldots,x_k)\mid k\geq 1, x_1,\ldots,x_k\in \RR\big\}
$$
be the set of all non-empty finite-length sequences of real numbers.
A \emph{dissimilarity} on $\RR^+$ is any mapping $D:\RR^+\to \RR_{\geq 0}$ satisfying the following conditions: for every  $(x_1,\ldots,x_k)\in \RR^+$,
\begin{itemize}
\item $D(x_1,\ldots,x_k)=D(x_{\sigma(1)},\ldots,x_{\sigma(k)})$, for every permutation $\sigma\in \mathcal{S}_k$;
\item $D(x_1,\ldots,x_k)= 0$ if, and only if, $x_1=\cdots=x_k$. 
\end{itemize}
The dissimilarities that we shall explicitly use in this paper are
the \emph{mean deviation from the median},
$$
\MDM(x_1,\ldots,x_k)=\frac{1}{k}\sum_{i=1}^k\big| x_i-\mbox{\it Median}(x_1,\ldots,x_k)|,
$$
 the (\emph{sample}) \emph{variance},
$$
\var(x_1,\ldots,x_k)=\frac{1}{k-1}\sum_{i=1}^k\big(x_i-\mbox{\it Mean}(x_1,\ldots,x_k)\big)^2,
$$
and the  (\emph{sample}) \emph{standard deviation},
$$
\sd(x_1,\ldots,x_k)=+\sqrt{\var(x_1,\ldots,x_k)}.
$$

Let  $D$ be a dissimilarity on $\RR^+$, $f:\NN\to \RR_{\geq 0}$ a function,
and $\delta_f$ the corresponding $f$-size, and let $T\in \TT^*$. For every internal node $v$ in $T$, with children $v_1,\ldots,v_k$,  the \emph{$(D,f)$-balance value} of $v$ is
$$
{bal}_{D,f}(v)=D(\delta_f(T_{v_1}),\ldots,\delta_f(T_{v_k})).
$$
So, ${bal}_{D,f}(v)$ measures, through $D$, the spread of the $f$-sizes of the subtrees rooted at the children of $v$.
In particular, ${bal}_{D,f}(v)=0$ if, and only if, $\delta_f(T_{v_1})=\cdots=\delta_f(T_{v_k})$.

\begin{definition}
Let  $D$ be a dissimilarity on $\RR^+$ and $f:\NN\to \RR_{\geq 0}$ a function. For every $T\in \TT^*$, its \emph{Colless-like  index relative to $D$ and $f$}, $\mathfrak{C}_{D,f}(T)$, is the sum of the $(D,f)$-balance values of the internal nodes of $T$:
$$
\mathfrak{C}_{D,f}(T)=\displaystyle\sum_{v\in V_{int}(T)} {bal}_{D,f}(v).
$$
If $T\in \TT_X$, for some set of labels $X$, then $\mathfrak{C}_{D,f}(T)$ is defined as $\mathfrak{C}_{D,f}(\pi_X(T))$.
\end{definition}

\begin{example}
If we take $D=\MDM$ and $f$ the constant mapping 1, so that $\delta_f=\tau$,  the usual  order  of a tree, then
$$
\begin{array}{rl}
\mathfrak{C}_{\MDM,\tau}(T) & =\displaystyle\sum_{v\in V_{int}(T)} \MDM(\tau_{v_1},\ldots,\tau_{v_{\deg(v)}})\\
& =\displaystyle\sum_{v\in V_{int}(T)} \dfrac{1}{\deg(v)}\sum_{i=1}^{\deg(v)} |\tau_{v_1}-\mbox{\it Median}(\tau_{v_1},\ldots,\tau_{v_{\deg(v)}})|,
\end{array}
$$
where, for every $v\in V_{int}(T)$, $v_1,\ldots,v_{\deg(v)}$ denote its children and  $\tau_{v_1},\ldots,\tau_{v_{\deg(v)}}$ their numbers of descendant nodes.
\end{example}

Notice that $\mathfrak{C}_{D,f}$ gets larger as the $f$-sizes of the subtrees rooted at siblings get more different, and therefore it behaves as a balance index for trees, in the same way as, for instance, the Colless index for bifurcating trees: the smaller the value of $\mathfrak{C}_{D,f}(T)$, the more balanced is $T$ relative to the $f$-size $\delta_f$.

It is clear that $\mathfrak{C}_{D,f}$ satisfies the following recursion:
$$
\mathfrak{C}_{D,f}(T_1\star\cdots \star T_k)=\mathfrak{C}_{D,f}(T_1)+\cdots+\mathfrak{C}_{D,f}(T_k)+D(\delta_f(T_1),\ldots,\delta_f(T_k)).
$$
Therefore these Colless-like  indices are \emph{recursive tree shape statistics} in the sense of \cite{Matsen}, relative to the $f$-size $\delta_f$. Table 1 in the  Supporting File S2 also gives the abstract values of $\mathfrak{C}_{D,f}(T)$, for $D=\MDM$, $\var$, and $\sd$, and for every $T\in \TT^*_n$ with $n=2,3,4,5$.

Next result shows that, if we take $D=\MDM$ or $D=\sd$, then the restriction of any index $\mathfrak{C}_{D,f}$ to bifurcating trees defines, up to a constant factor, the usual Colless index.


\begin{proposition}\label{lem:Phibin}
Let $T$ be a bifurcating tree with $n$ leaves and $f:\NN\to \RR_{\geq 0}$ any function.  
Then, 
$$
\mathfrak{C}_{\MDM,f}(T)=\frac{f(0)+f(2)}{2}\cdot C(T),\qquad
\mathfrak{C}_{\sd,f}(T)=\frac{f(0)+f(2)}{\sqrt{2}}\cdot C(T).
$$
\end{proposition}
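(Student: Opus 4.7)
The plan is to reduce both identities to a direct, pointwise computation at each internal node, using the fact that in a bifurcating tree every internal node has exactly two children and the $f$-size of any subtree is an affine function of its leaf count.

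First I would evaluate $\MDM$ and $\sd$ on two-element sequences. For any $x_1,x_2\in\RR$, the median of $(x_1,x_2)$ is $(x_1+x_2)/2$, so
\[
\MDM(x_1,x_2)=\tfrac{1}{2}\bigl(|x_1-(x_1+x_2)/2|+|x_2-(x_1+x_2)/2|\bigr)=\tfrac{1}{2}|x_1-x_2|.
\]
A similarly short calculation gives $\var(x_1,x_2)=\tfrac{1}{2}(x_1-x_2)^2$ and hence $\sd(x_1,x_2)=|x_1-x_2|/\sqrt{2}$.

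Next I would use Example \ref{ex:deltabin}: for any bifurcating subtree $T_{v_i}$ with $\kappa_{v_i}$ leaves,
\[
\delta_f(T_{v_i})=(f(0)+f(2))\kappa_{v_i}-f(2).
\]
Thus, at every internal node $v$ with children $v_1,v_2$, the constant terms cancel, giving
\[
|\delta_f(T_{v_1})-\delta_f(T_{v_2})|=(f(0)+f(2))\cdot|\kappa_{v_1}-\kappa_{v_2}|.
\]
Combining with the first step yields
\[
bal_{\MDM,f}(v)=\tfrac{f(0)+f(2)}{2}|\kappa_{v_1}-\kappa_{v_2}|,\qquad bal_{\sd,f}(v)=\tfrac{f(0)+f(2)}{\sqrt{2}}|\kappa_{v_1}-\kappa_{v_2}|.
\]

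Finally I would sum these balance values over all $v\in V_{int}(T)$. By definition of the Colless index, $\sum_v|\kappa_{v_1}-\kappa_{v_2}|=C(T)$, so the two identities drop out immediately. There is no real obstacle here; the statement is essentially a bookkeeping check that the affine relationship between $\delta_f$ and $\kappa$ on bifurcating trees, together with the explicit two-point formulas for $\MDM$ and $\sd$, recovers the classical Colless summand up to the stated multiplicative constant.
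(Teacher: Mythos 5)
Your proof is correct and follows essentially the same route as the paper's: compute the two-point formulas $\MDM(x,y)=\frac{1}{2}|x-y|$ and $\sd(x,y)=\frac{1}{\sqrt{2}}|x-y|$, apply the affine formula $\delta_f(T_{v_i})=(f(0)+f(2))\kappa_{v_i}-f(2)$ from Example \ref{ex:deltabin} so the constant terms cancel, and sum over internal nodes. Nothing is missing.
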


\begin{proof}
Notice that, for every $x,y\in \RR$, $\MDM(x,y) =\frac{1}{2}|x-y|$ and $\sd(x,y) =\frac{1}{\sqrt{2}}|x-y|$.
We shall prove the statement for $MDM$; the proof for $\sd$ is identical, replacing the 2 in the denominator by $\sqrt{2}$. 
For every internal node $v$ in a bifurcating tree $T$, if $v_1$ and $v_2$ denote its children,
$$
\begin{array}{l}
 {bal}_{MDM,f}(v)  =\dfrac{1}{2}|{\delta_f}(T_{v_1})-{\delta_f}(T_{v_2})|\\[2ex] 
 \qquad\quad   =
\dfrac{1}{2}|((f(0)+f(2))\kappa_{v_1}-f(2))-((f(0)+f(2))\kappa_{v_2}-f(2))|\\[1ex] 
 \qquad\qquad   \mbox{(by Example \ref{ex:deltabin})}\\[1ex]
\qquad\quad    =
\dfrac{f(0)+f(2)}{2}\cdot |\kappa_{v_1}-\kappa_{v_2}|
\end{array}
$$
and therefore
$$
\begin{array}{rl}
\mathfrak{C}_{MDM,f}(T)  & \displaystyle = \sum\limits_{v\in V_{int}(T)}  {bal}_{MDM,f}(v)
=\frac{f(0)+f(2)}{2} \cdot\hspace*{-2ex} \sum\limits_{v\in V_{int}(T)}  |\kappa_{v_1}-\kappa_{v_2}|\\[3ex] & \displaystyle =\frac{f(0)+f(2)}{2}\cdot C(T),
\end{array}
$$
as we claimed. 
\end{proof}

If we define the \emph{quadratic Colless index} of a bifurcating tree $T$ as
$$
C^{(2)}(T)=\sum_{v\in V_{int}(T)} (\kappa_{v_1}-\kappa_{v_2})^2,
$$
where, for every $v\in V_{int}(T)$, $v_1,v_2$ denote its children,
then, using that $\var(x,y)=\frac{1}{2}(x-y)^2$, a similar argument proves the following result.

\begin{proposition}\label{prop:C2}
Let $T$ be a bifurcating tree with $n$ leaves and $f:\NN\to \RR_{\geq 0}$ any function.  
Then, 
$$
\mathfrak{C}_{\var,f}(T)=\frac{(f(0)+f(2))^2}{2}\cdot C^{(2)}(T).
$$
\ \hspace*{\fill}\qed
\end{proposition}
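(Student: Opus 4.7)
The plan is to mimic the proof of Proposition \ref{lem:Phibin} verbatim, replacing the linear dissimilarity identity $\MDM(x,y)=\frac{1}{2}|x-y|$ by its quadratic analogue $\var(x,y)=\frac{1}{2}(x-y)^2$. So I would first verify this identity: for two values, the sample mean is $(x+y)/2$, each deviation from the mean is $\pm(x-y)/2$, and the sum of the two squared deviations is $(x-y)^2/2$, which divided by $k-1=1$ yields $\var(x,y)=\frac{1}{2}(x-y)^2$.

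Next, I would fix an internal node $v$ of the bifurcating tree $T$ with children $v_1,v_2$ and apply Example \ref{ex:deltabin} to both $T_{v_1}$ and $T_{v_2}$: each subtree rooted at a child of $v$ is bifurcating, so
\[
\delta_f(T_{v_i})=(f(0)+f(2))\kappa_{v_i}-f(2)\quad (i=1,2).
\]
Subtracting, the constants $-f(2)$ cancel and one obtains
\[
\delta_f(T_{v_1})-\delta_f(T_{v_2})=(f(0)+f(2))(\kappa_{v_1}-\kappa_{v_2}).
\]
Squaring and applying $\var(x,y)=\frac{1}{2}(x-y)^2$ then gives
\[
{bal}_{\var,f}(v)=\frac{(f(0)+f(2))^2}{2}(\kappa_{v_1}-\kappa_{v_2})^2.
\]

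Finally, I would sum this equality over all $v\in V_{int}(T)$, pull the common factor $(f(0)+f(2))^2/2$ out of the sum, and recognize the remaining sum as the quadratic Colless index $C^{(2)}(T)$, yielding the desired identity. There is no real obstacle here; the argument is entirely parallel to the $\MDM$ case and the only thing that has to be checked separately is the two-point variance identity, which is a one-line computation.
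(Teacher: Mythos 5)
Your proposal is correct and is exactly the argument the paper intends: the paper itself omits the proof, stating only that ``using that $\var(x,y)=\frac{1}{2}(x-y)^2$, a similar argument proves the result,'' and your write-up fills in precisely that argument via Example \ref{ex:deltabin} and the two-point variance identity. No issues.
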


As far as the cost of computing  Colless-like indices goes, we have the following result.

\begin{proposition}\label{cost}
If the cost of computing $D(x_1,\ldots,x_k)$ is  in $O(k)$ and the cost of computing 
each $f(k)$ is  at most in $O(k)$, then,
for every $T\in \TT_n^*$, the cost of computing $\mathfrak{C}_{D,f}(T)$ is in $O(n)$.
\end{proposition}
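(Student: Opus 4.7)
The plan is to exploit the recursive formulas already established in the paper, namely
\[
\delta_f(T_1\star\cdots\star T_k)=\delta_f(T_1)+\cdots+\delta_f(T_k)+f(k)
\]
and
\[
\mathfrak{C}_{D,f}(T_1\star\cdots\star T_k)=\mathfrak{C}_{D,f}(T_1)+\cdots+\mathfrak{C}_{D,f}(T_k)+D(\delta_f(T_1),\ldots,\delta_f(T_k)),
\]
and carry out a single post-order traversal of $T$ that, at every node $v$, stores the pair $(\delta_f(T_v),\mathfrak{C}_{D,f}(T_v))$. This converts the global complexity into a sum of local costs, one per internal node.

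First, I would quantify the work done at a single internal node $v$ with out-degree $k=\deg(v)$, assuming its children already carry their stored pairs. By hypothesis, evaluating $f(k)$ is in $O(k)$, and computing $\delta_f(T_v)$ from the stored values $\delta_f(T_{v_i})$ and $f(k)$ via the first recursion is a sum of $k+1$ terms, hence $O(k)$. Similarly, evaluating $D(\delta_f(T_{v_1}),\ldots,\delta_f(T_{v_k}))$ is in $O(k)$ by hypothesis, and then obtaining $\mathfrak{C}_{D,f}(T_v)$ by the second recursion is another sum of $k+1$ terms, hence $O(k)$. Therefore the total processing cost at $v$ is $O(\deg(v))$, and at a leaf it is $O(1)$.

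The second step is to add up these local costs over all nodes. The overall cost is
\[
O\Bigl(\,\sum_{v\in V_{int}(T)}\deg(v)\Bigr)=O(|E(T)|)=O(|V(T)|-1),
\]
since the sum of out-degrees of internal nodes equals the number of arcs of $T$. Finally, because $T$ has $n$ leaves and, by our standing convention, no out-degree $1$ nodes, every internal node has out-degree at least $2$; a standard counting argument (or an easy induction) then yields $|V_{int}(T)|\leq n-1$ and hence $|V(T)|\leq 2n-1$. Substituting, the whole computation runs in $O(n)$ time, as claimed.

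There is no serious obstacle here: the argument is essentially bookkeeping. The only point that must be stated cleanly is the bound $|V(T)|=O(n)$, which relies crucially on the no-out-degree-1 convention adopted at the start of the Materials section; without it, a chain of in-out-degree-1 nodes could inflate $|V(T)|$ arbitrarily and the claimed linear bound would fail.
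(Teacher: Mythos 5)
Your proof is correct and follows essentially the same route as the paper's: a post-order traversal computing $\delta_f$ and the balance values bottom-up, with cost $O(\deg(v))$ per internal node, summed via $\sum_{v\in V_{int}(T)}\deg(v)=|E(T)|=O(n)$ (the paper writes this sum as $\sum_{k\geq 2}m_k\cdot k$ where $m_k$ counts internal nodes of out-degree $k$). Your explicit justification that $|V(T)|\leq 2n-1$ via the no-out-degree-1 convention is a detail the paper leaves implicit, but the argument is the same.
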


\begin{proof}
Assume that  every $f(k)$ is computed in time at most $O(k)$. For every $k\geq 2$, let $m_k$ the number of internal nodes in $T$ of out-degree $k$.
Since the sizes $\delta_f(v)$  are additive, in the sense that if $v$ has children $v_1,\ldots,v_k$, then $\delta_f(v)=\sum_{i=1}^k \delta_f(v_i)+f(k)$, we can compute the whole vector $\big(\delta_f(v)\big)_{v\in V(T)}$  in  time $O(n+\sum_{k\geq 2} m_k\cdot k)=O(n)$ by traversing the tree in post-order.
 
 Assume now that $D(x_1,\ldots,x_k)$ can be computed in time $O(k)$.
 Then,  for every internal node $v$ of out-degree $k$, ${bal}_{D,f}(v)=D(\delta_f(T_{v_1}),\ldots,\delta_f(T_{v_k}))$ can be computed in time $O(k)$, by simply reading the $k$ sizes of its children (which are already computed) and applying $D$ to them. This shows that the whole vector $\big({bal}_{D,f}(v)\big)_{v\in V(T)}$ can be computed again in time
 $O(\sum_{k\geq 2} m_k\cdot k)=O(n)$. Finally, 
 we compute $\mathfrak{C}_{D,f}(T)$ by adding the entries of $\big({bal}_{D,f}(v)\big)_{v\in V(T)}$, which still can be done in time  
 $O(n)$.
 \end{proof}

The dissimilarities mentioned previously in this subsection can be computed in a number of sums and multiplications that is linear in the length of the input vector, and the specific functions $f$ that we shall consider in the next subsection, basically exponentials and logarithms, can be approximated to any desired precision in constant time by using addition and look-up tables \cite{Yan}.

\subsection*{Sound Colless-like  indices}

\noindent It is clear that, for every dissimilarity $D$, for every function $f:\NN\to \RR_{\geq 0}$ and for every fully symmetric tree
$\FS_{n_1,\ldots,n_k}$, $\mathfrak{C}_{D,f}(\FS_{n_1,\ldots,n_k})=0$, because ${bal}_{D,f}(v)=0$ for every $v\in V_{int}(\FS_{n_1,\ldots,n_k})$. We shall say that a Colless-like  index $\mathfrak{C}_{D,f}$ is \emph{sound} when the converse implication is true.

\begin{definition}
A Colless-like  index $\mathfrak{C}_{D,f}$ is  \emph{sound} when, for every $T\in \TT^*$, $\mathfrak{C}_{D,f}(T)=0$ if, and only if, $T$ is fully symmetric. 
\end{definition}

In other words, $\mathfrak{C}_{D,f}$ is sound when, according to it, the most balanced trees  are exactly the fully symmetric trees.

The  Colless index $C$ and its quadratic version $C^{(2)}$ are  sound  for \emph{bifurcating}  trees. Unfortunately, neither their direct generalizations  $\mathfrak{C}_{\MDM,\kappa}$, $\mathfrak{C}_{\sd,\kappa}$ and $\mathfrak{C}_{\var,\kappa}$ ---where $\kappa$ denotes the number of leaves--- nor 
 $\mathfrak{C}_{\MDM,\tau}$, $\mathfrak{C}_{\sd,\tau}$ and $\mathfrak{C}_{\var,\tau}$ ---where $\tau$ denotes the number of nodes--- or even replacing $\tau$ by $\theta$, the usual size, which is simply $\tau-1$,
are  sound for multifurcating trees. For example,
the tree $T$ in Fig. \ref{fig:nocoll1} is not fully symmetric, but 
$\mathfrak{C}_{\MDM,\kappa}(T)=\mathfrak{C}_{\var,\kappa}(T)=\mathfrak{C}_{\sd,\kappa}(T) =\allowbreak\mathfrak{C}_{\MDM,\tau}(T)=\mathfrak{C}_{\var,\tau}(T)=\mathfrak{C}_{\sd,\tau}(T)=0$. 

%

As a matter of fact, the soundness of $\mathfrak{C}_{D,f}(T)=0$ does not depend on $D$, but only on $f$, as the following lemma shows.

\begin{lemma}\label{lem:sound1}
$\mathfrak{C}_{D,f}$ is sound if, and only if, $\delta_f(T_1)\neq \delta_f(T_2)$ for every pair of  different fully symmetric trees $T_1,T_2$.
\end{lemma}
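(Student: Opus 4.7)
The proof naturally splits into the two implications, and both are reasonably direct once one uses the recursion $\mathfrak{C}_{D,f}(T_1\star\cdots\star T_k)=\sum_i\mathfrak{C}_{D,f}(T_i)+D(\delta_f(T_1),\ldots,\delta_f(T_k))$ and the fact that $D\geq 0$.

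For the forward (contrapositive) direction, I would assume there exist two different fully symmetric trees $T_1\neq T_2$ with $\delta_f(T_1)=\delta_f(T_2)$ and exhibit a non-fully-symmetric tree with vanishing Colless-like index. The natural candidate is $T=T_1\star T_2$: its root has two children whose rooted subtrees are $T_1$ and $T_2$, so the balance value at the root is $D(\delta_f(T_1),\delta_f(T_2))=0$ by the assumed equality of $f$-sizes and the defining property of a dissimilarity; meanwhile every other internal node lies in $T_1$ or $T_2$ and therefore has balance value $0$ because $T_1$ and $T_2$ are fully symmetric. Summing gives $\mathfrak{C}_{D,f}(T)=0$, but $T$ is not fully symmetric since its root has two non-isomorphic children-subtrees $T_1\neq T_2$. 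Hence $\mathfrak{C}_{D,f}$ fails to be sound.

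For the backward direction, I would argue by induction on the number of leaves (equivalently, on the number of internal nodes). The base case of a single leaf is vacuous since a leaf is fully symmetric. For the inductive step, suppose $\mathfrak{C}_{D,f}(T)=0$ and write $T=T_{v_1}\star\cdots\star T_{v_k}$ where $v_1,\ldots,v_k$ are the children of the root. The recursion plus non-negativity of the summands yields ${bal}_{D,f}(\text{root})=0$ and $\mathfrak{C}_{D,f}(T_{v_i})=0$ for each $i$. By the induction hypothesis, every $T_{v_i}$ is fully symmetric. The vanishing of the root's balance value means $\delta_f(T_{v_1})=\cdots=\delta_f(T_{v_k})$; but then the assumed injectivity of $\delta_f$ on fully symmetric trees forces $T_{v_1}=\cdots=T_{v_k}$, so the root is a symmetric node. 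Combined with the full symmetry of each $T_{v_i}$, this shows $T$ is fully symmetric.

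The main (mild) subtlety is ensuring the induction reduces cleanly: the children-subtrees $T_{v_i}$ of a non-trivial $T$ have strictly fewer leaves than $T$, so induction on leaves is well-founded, and when any $T_{v_i}$ is a leaf it trivially coincides with any other leaf in the sibling set (all leaves have the same $f$-size $f(0)$), so the application of the hypothesis $\delta_f(T_1)\neq\delta_f(T_2)$ for distinct fully symmetric trees is coherent with single-leaf subtrees as well. No further obstacle arises.
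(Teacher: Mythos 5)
Your proof is correct and follows essentially the same route as the paper's: the same counterexample $T_1\star T_2$ for the ``only if'' direction, and the same induction on the number of leaves using the recursion for $\mathfrak{C}_{D,f}$ and the non-negativity of its summands for the converse. No gaps.
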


\begin{proof}
As far as the ``only if'' implication goes, if there exist two different (i.e., non isomorphic)  fully symmetric trees $T_1,T_2$ such that
$\delta_f(T_1)= \delta_f(T_2)$, then the tree $T=T_1\star T_2$ is not fully symmetric, but
$$
\mathfrak{C}_{D,f}(T)=\mathfrak{C}_{D,f}(T_1)+\mathfrak{C}_{D,f}(T_2)+D(\delta_f(T_1), \delta_f(T_2))=0.
$$

Conversely, assume that,  for every pair of fully symmetric trees $T_1,T_2$, if 
$\delta_f(T_1)= \delta_f(T_2)$ then $T_1= T_2$. We shall prove by complete induction on $n$  that
if $T$ is a tree with $n$ leaves such that
$\mathfrak{C}_{D,f}(T)=0$, then $T$ is fully symmetric.  If $T$ has only one leaf, it is clearly fully symmetric. Now, assume that $n>1$ and hence that $T$ has depth at least 1. Let $T_1,\ldots,T_k$, $k\geq 2$, be its subtrees rooted at the children of its root,
so that $T=T_1\star\cdots\star T_k$.
 Then,
$$
0=\mathfrak{C}_{D,f}(T)=\sum_{i=1}^k\mathfrak{C}_{D,f}(T_i)+D(\delta_f(T_1),\ldots, \delta_f(T_k))
$$
implies, on the one hand, that $\mathfrak{C}_{D,f}(T_1)=\cdots=\mathfrak{C}_{D,f}(T_k)=0$, and hence, by induction, that $T_1,\ldots,T_k$ are fully symmetric, and, on the other hand, that $D(\delta_f(T_1),\ldots, \delta_f(T_k))=0$, and hence that $\delta_f(T_1)=\cdots= \delta_f(T_k)$, which, by assumption, implies that $T_1=\cdots=T_k$: in summary,
 $T$ is fully symmetric. 
\end{proof}

The following problem now arises:
\medskip

\noindent\textbf{Problem.} \emph{To find functions $f:\NN\to \RR_{\geq 0}$ such that  $\mathfrak{C}_{D,f}$ is  sound.}
\medskip

Unfortunately, many natural functions $f$ do not define sound Colless-like  indices, as the following examples show.

\begin{example}\label{ex:quad}
If $f(n)=an^2+bn+c$, for any $a,b,c$, then $\mathfrak{C}_{D,f}$ is not  sound, because, for example, $\delta_f(\FS_{2,2,2,7})=\delta_f(\FS_{14,4})=420a+70b+71c$.
\end{example}

\begin{example}\label{ex:monom}
 If $f(n)=n^d$, for any $d\geq 0$, then $\mathfrak{C}_{D,f}$ is not  sound.
Indeed, for every $d\geq 3$ (the case when $d\leq 2$ is a particular case of the last example), take
\begin{itemize}
\item $k=2^d+1$ and $l=2$;
\item $n_i=2^{(d-1)^id^{k-i-1}}$ for $i=1,\ldots,k-1$;
\item $n_k=2$;
\item $m_1=2^{(d-1)d^{k-2}+1}$;
\item $m_2=2^{((d-1)^2(d^{k-2}-(d-1)^{k-2})+d-1)/d}$; notice that this exponent is an integer number, because $k$ is odd and therefore $d$ divides $(d-1)^k+1$.
\end{itemize}
Then
$$
n_1\ldots n_{i-1}\cdot n_i^{d}=n_1^d
$$
and hence, on the one hand,
$$
\begin{array}{rl}
n_1^d+\cdots +n_1\cdots n_{k-2}\cdot n_{k-1}^d & =(k-1)n_1^d=2^d\cdot 2^{(d-1)d^{k-1}}\\
& =\Big(2^{1+(d-1)d^{k-2}}\Big)^d=m_1^d,
\end{array}
$$
and, on the other hand,
$$
\begin{array}{rl}
n_1\cdots n_{k-1}\cdot n_{k}^d  & \displaystyle = n_1^{\frac{(1-(\frac{d-1}{d})^{k-1})}{{(1-\frac{d-1}{d})}}} \hspace*{-1ex}\cdot n_k^d=
n_1^{\frac{d^{k-1}-(d-1)^{k-1}}{d^{k-2}}}n_k^d\\[2ex]
& =2^{(d-1)(d^{k-1}-(d-1)^{k-1})+d}=m_1m_2^d.
\end{array}
$$
Therefore, $\delta_{n^d}(\FS_{n_1,\ldots,n_k})=\delta_{n^d}(\FS_{m_1,m_2})$. 

Of course, for any given $d$ there may exist ``smaller'' counterexamples: for instance, $\delta_{n^3}(\FS_{2,10,4})=\delta_{n^3}(\FS_{6,8})=3288$ and $\delta_{n^4}(\FS_{2,6,2,3 })=\delta_{n^4}(\FS_{8,3})=4744$.
\end{example}

\begin{example}
 If $f(n)=\log_a(n)$ (for some $a>1$) when $n>0$, and $f(0)=0$, then $\mathfrak{C}_{D,f}$ is not  sound: for instance,
$\delta_f(\FS_{2,2})=\delta_f(\FS_{8})=\log_a(8)$. In a similar way,  if $f(n)=\log_a(n+1)$ (for some $a>1$), then $\mathfrak{C}_{D,f}$ is not sound, either: for instance,
$\delta_f(\FS_{2,3,3})=\delta_f(\FS_{5,7})=\log_a(196608)$.
\end{example}

On the positive side, we shall show now two functions that define sound indices.
The following lemmas will be useful to prove it.

\begin{lemma}\label{nk-neq-ml}
For every $k,l\geq 1$ and $n_1,n_2,\ldots,n_k,m_1,m_2,\ldots,m_l\geq 2$,
if $\delta_f(\FS_{n_1,n_2,\ldots,n_k})=\delta_f(\FS_{m_1,m_2,\ldots,m_l})$,
$n_1\cdot n_2\cdots n_k=m_1\cdot m_2\cdots m_l$, and $n_k=m_l$, then
$\delta_f(\FS_{n_1,\ldots,n_{k-1}})=\delta_f(\FS_{m_1,\ldots,m_{l-1}})$.
\end{lemma}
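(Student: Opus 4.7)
The plan is to exploit the closed-form expression for $\delta_f$ on fully symmetric trees given in Example \ref{ex:fullysym} and to peel off the contribution of the last factor on each side. Writing that formula as
$$
\delta_f(\FS_{n_1,\ldots,n_k})=n_1\cdots n_k\cdot f(0)+\sum_{i=1}^{k}n_1\cdots n_{i-1}\cdot f(n_i)
$$
(with the empty product convention $n_1\cdots n_0=1$), one can isolate the two summands that involve $n_k$ and rewrite this as the ``peel-off'' identity
$$
\delta_f(\FS_{n_1,\ldots,n_k})=\delta_f(\FS_{n_1,\ldots,n_{k-1}})+n_1\cdots n_{k-1}\bigl((n_k-1)f(0)+f(n_k)\bigr),
$$
and symmetrically with the $m_j$'s in place of the $n_i$'s.

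Next I would check that, under the hypotheses, the two correction terms coincide. Dividing $n_1\cdots n_k=m_1\cdots m_l$ by the common value $n_k=m_l\geq 2$ yields $n_1\cdots n_{k-1}=m_1\cdots m_{l-1}$; combined with $n_k=m_l$ this immediately gives
$$
n_1\cdots n_{k-1}\bigl((n_k-1)f(0)+f(n_k)\bigr)=m_1\cdots m_{l-1}\bigl((m_l-1)f(0)+f(m_l)\bigr).
$$
Substituting back into the two peel-off identities and using $\delta_f(\FS_{n_1,\ldots,n_k})=\delta_f(\FS_{m_1,\ldots,m_l})$, a single cancellation produces the required equality $\delta_f(\FS_{n_1,\ldots,n_{k-1}})=\delta_f(\FS_{m_1,\ldots,m_{l-1}})$.

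There is no real obstacle: the argument is essentially bookkeeping with the closed form. The only minor point worth noting is the boundary case $k=1$, in which $\FS_{n_1,\ldots,n_{k-1}}$ has no indices; but then $n_1=m_l$ together with $n_1=m_1\cdots m_l$ and all $m_j\geq 2$ forces $l=1$ too, so the conclusion is vacuous (or trivial, under the natural convention that an empty list of factors corresponds to a single leaf). Hence the main calculation above covers all substantive cases.
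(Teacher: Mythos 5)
Your proof is correct and follows essentially the same route as the paper's: both first deduce $n_1\cdots n_{k-1}=m_1\cdots m_{l-1}$ from the product equality and $n_k=m_l$, and then cancel the contribution of the last level ($f(0)$ and $f(n_k)$ terms) from the closed-form expression of Example \ref{ex:fullysym}; your ``peel-off'' identity is just a tidier packaging of that cancellation. The explicit treatment of the boundary case $k=1$ is a small bonus the paper omits.
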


\begin{proof}
If $n_1 \cdots n_k=m_1\cdots m_l$ and $n_k=m_l$, then
$n_1\cdots n_{k-1}=m_1\cdots m_{l-1}$. Thus, if, moreover, $\delta_f(\FS_{n_1,n_2,\ldots,n_k})=\delta_f(\FS_{m_1,m_2,\ldots,m_l})$, that is, 
$$
\begin{array}{l}
n_1\cdots n_kf(0)+n_1\cdots n_{k-1} f(n_k)+n_1\cdots n_{k-2} f(n_{k-1})+\cdots+f(n_1)\\
\quad=m_1\cdots m_lf(0)\! +m_1\cdots m_{l-1} f(m_l)\! +m_1\cdots m_{l-2} f(m_{l-1})\! +\cdots+\! f(m_1),
\end{array}
$$
then
$$
\begin{array}{l}
n_1\cdots n_{k-2} f(n_{k-1})+\cdots+n_1f(n_2)+f(n_1)\\
\qquad\hspace*{\fill}=m_1\cdots m_{l-2} f(m_{l-1})+\cdots+m_1f(m_2)+f(m_1)
\end{array}
$$
and hence 
$$
\begin{array}{l}
\delta_f(\FS_{n_1,n_2,\ldots,n_{k-1}})\\
\quad=n_1\cdots n_{k-1} f(0)+n_1\cdots n_{k-2} f(n_{k-1})+\cdots+n_1f(n_2)+f(n_1)\\
\quad=m_1\cdots m_{l-1} f(0)+m_1\cdots m_{l-2} f(m_{l-1})+\cdots+m_1f(m_2)+f(m_1)\\
\quad=\delta_f(\FS_{m_1,\ldots,m_{l-1}})\end{array}
$$
as we claimed.
\end{proof}

\begin{lemma}\label{lem:aux}
If $n_1,\ldots,n_k\geq 2$, then
$$
1+n_1+n_1n_2+\cdots+n_1\cdots n_{k-1}< n_1\cdots n_k.
$$
\end{lemma}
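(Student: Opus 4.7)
The plan is to proceed by induction on $k$. For the base case $k=1$, the claimed inequality is simply $1<n_1$, which holds because $n_1\geq 2$. For the inductive step, assuming the inequality holds for $k-1$ factors, I would write
$$1+n_1+n_1n_2+\cdots+n_1\cdots n_{k-1}=\big(1+n_1+\cdots+n_1\cdots n_{k-2}\big)+n_1\cdots n_{k-1},$$
bound the parenthesised sum by $n_1\cdots n_{k-1}$ via the inductive hypothesis, and then use $n_k\geq 2$ to conclude
$$1+n_1+\cdots+n_1\cdots n_{k-1}<2\cdot n_1\cdots n_{k-1}\leq n_k\cdot n_1\cdots n_{k-1}=n_1\cdots n_k.$$

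As a sanity check, there is also a one-line non-inductive argument: dividing the desired inequality through by $n_1\cdots n_k$ turns it into
$$\frac{1}{n_1\cdots n_k}+\frac{1}{n_2\cdots n_k}+\cdots+\frac{1}{n_k}<1,$$
and the left-hand side is bounded above by the geometric sum $\tfrac{1}{2^k}+\tfrac{1}{2^{k-1}}+\cdots+\tfrac{1}{2}=1-\tfrac{1}{2^k}<1$ because each $n_i\geq 2$. Either route is short; I would present the inductive version since it matches the recursive flavour of the surrounding definitions.

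There is no real obstacle here: the only place where the hypothesis $n_i\geq 2$ is essential is in the final step, where we need the factor $n_k$ to be large enough to absorb the doubling produced by combining the inductive estimate with the extra summand $n_1\cdots n_{k-1}$. The statement would fail, for example, if some $n_i$ were allowed to equal $1$, so this is the point worth flagging in the write-up.
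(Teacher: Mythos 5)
Your inductive argument is correct and is essentially identical to the paper's own proof: both bound the first $k-1$ summands by $n_1\cdots n_{k-1}$ via the induction hypothesis and then absorb the resulting factor of $2$ into $n_k\geq 2$. The alternative geometric-series argument you sketch is also valid, but since you would present the inductive version, there is nothing further to flag.
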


\begin{proof}
By induction on $k$. If $k=1$, the statement says that $1<n_1$, which is true by assumption.
 Assume now that the statement is true for any $n_1,\ldots,n_{k}\geq 2$, and let $n_{k+1}\geq 2$. Then,
$$
\begin{array}{r}
1+n_1+n_1n_2+\cdots+n_1\cdots n_{k-1}+n_1\cdots n_{k}  <
n_1\cdots n_{k}+ n_1\cdots n_k\qquad\\
=2n_1\cdots n_k\leq
n_1\cdots n_k\cdot n_{k+1}.
\end{array}
$$

\end{proof}

\begin{proposition}
If $f(n)=e^n$, then $\mathfrak{C}_{D,f}$ is sound.
\end{proposition}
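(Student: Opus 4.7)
The plan is to invoke Lemma \ref{lem:sound1}: since $f(n) = e^n$, I need to show that $\delta_f$ takes distinct values on distinct fully symmetric trees. So I must prove that if $(n_1,\ldots,n_k)$ and $(m_1,\ldots,m_l)$ are ordered factorizations with entries $\geq 2$ and
$$\delta_f(\FS_{n_1,\ldots,n_k}) = \delta_f(\FS_{m_1,\ldots,m_l}),$$
then $k = l$ and $n_i = m_i$ for every $i$. Using Example \ref{ex:fullysym}, each $\delta_f$ value is the evaluation at $x = e$ of a polynomial with nonnegative integer coefficients,
$$P_n(x) = n_1\cdots n_k + \sum_{\nu \geq 2} c_\nu^{(n)} x^\nu, \qquad c_\nu^{(n)} = \sum_{j:\, n_j = \nu} n_1 \cdots n_{j-1},$$
and analogously $P_m(x)$. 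Since $e$ is transcendental, $P_n(e) = P_m(e)$ forces $P_n \equiv P_m$ as polynomials, giving both $N := n_1\cdots n_k = m_1\cdots m_l$ and $c_\nu^{(n)} = c_\nu^{(m)}$ for every $\nu \geq 2$.

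The crux of the plan is to identify $n_k$ from the coefficient sequence. I claim that $n_k$ is the unique value of $\nu \geq 2$ maximizing $c_\nu^{(n)}$: the $j = k$ contribution alone gives $c_{n_k}^{(n)} \geq n_1\cdots n_{k-1} = N/n_k$, while for any $\nu \neq n_k$ only indices $j < k$ contribute to $c_\nu^{(n)}$, so
$$c_\nu^{(n)} \leq 1 + n_1 + n_1 n_2 + \cdots + n_1\cdots n_{k-2} < n_1\cdots n_{k-1} = N/n_k$$
by Lemma \ref{lem:aux} applied to $n_1,\ldots,n_{k-1}$. Running the same argument on the $m$-side identifies $m_l$ as the unique maximizer of $c_\nu^{(m)}$, and the equality $c_\nu^{(n)} = c_\nu^{(m)}$ then forces $n_k = m_l$.

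With $N = M$ and $n_k = m_l$ in hand, Lemma \ref{nk-neq-ml} yields $\delta_f(\FS_{n_1,\ldots,n_{k-1}}) = \delta_f(\FS_{m_1,\ldots,m_{l-1}})$, and I would finish by strong induction on $k+l$. The base $k=l=1$ is immediate because $x \mapsto x + e^x$ is strictly increasing on $\NN$, and a mixed base (say $k=1 < l$) is excluded because the argmax identification already forces $n_1 = m_l$ and hence $m_1\cdots m_{l-1} = N/m_l = n_1/n_1 = 1$, contradicting $m_i \geq 2$. The step I expect to be the main obstacle is precisely this quantitative identification of $n_k$; once one recognizes that Lemma \ref{lem:aux} provides exactly the gap needed between the $j = k$ contribution and the sum of all earlier contributions, the rest of the argument is a clean recursion.
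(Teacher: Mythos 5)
Your proposal is correct and follows essentially the same route as the paper's proof: reduce via Lemma \ref{lem:sound1}, use the transcendence of $e$ to pass to an identity of integer polynomials, identify $n_k=m_l$ as the exponent carrying the dominant coefficient via the gap supplied by Lemma \ref{lem:aux}, and descend with Lemma \ref{nk-neq-ml}. The only (harmless) differences are presentational: you collect coefficients of repeated exponents explicitly and run a strong induction on $k+l$, where the paper picks a counterexample of minimal depth $l$ and derives a contradiction.
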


\begin{proof}
Assume that there exist two  non-isomorphic fully symmetric trees $\FS_{n_1,\ldots,n_k}$ and $\FS_{m_1,\ldots,m_l}$ such that
$$\delta_{e^n}(\FS_{n_1,\ldots,n_k})=\delta_{e^n}(\FS_{m_1,\ldots,m_l}),$$
 that is, such that
\begin{equation}
\begin{array}{l}
n_1\cdots n_k +n_1\cdots n_{k-1}e^{n_k}+\cdots+n_1e^{n_2}+e^{n_1}\\
\qquad =
m_1\cdots m_l+m_1\cdots m_{l-1}e^{m_l}+\cdots+m_1e^{m_2}+e^{m_1}.
\end{array}
\label{eq:exp}
\end{equation}
Assume that $l$ is the smallest depth of a fully symmetric tree with $e^n$-size equal to the $e^n$-size of another fully symmetric tree  non-isomorphic to it.

Since $e$ is transcendental, equality (\ref{eq:exp}) implies the equality of polynomials in $\ZZ[x]$
$$
\begin{array}{l}
n_1\cdots n_k +n_1\cdots n_{k-1}x^{n_k}+\cdots+n_1x^{n_2}+x^{n_1}\\
\qquad\qquad =
m_1\cdots m_l+m_1\cdots m_{l-1}x^{m_l}+\cdots+m_1x^{m_2}+x^{m_1}.
\end{array}
$$
If $l=1$, the right-hand side polynomial is simply $m_1+x^{m_1}$ and then the equality of polynomials implies that $k=1$ and $n_1=m_1$, which contradicts the assumption that $\FS_{n_1,\ldots,n_k}\neq \FS_{m_1,\ldots,m_l}$. Now assume that $l\geq 2$. 
This equality of polynomials implies the equality of their independent terms:
$n_1\cdots n_k=m_1\cdots m_l$. On the other hand, the non-zeroth power of $x$ with the largest coefficient in the left-hand side polynomial  is $x^{n_k}$ (because all coefficients are non-negative, and,  by Lemma \ref{lem:aux}, 
$n_1\cdots n_{k-1}$ alone is larger than the sum
$n_1\cdots n_{k-2}+\cdots+n_1+1$
of all other coefficients of non-zeroth powers of $x$) and, by the same reason,
the non-zeroth power of $x$ with the largest coefficient in the right-hand side polynomial is $x^{m_l}$.
The equality of polynomials implies then that $n_k=m_l$ and hence, by Lemma \ref{nk-neq-ml}, that 
$\delta_{e^n}(\FS_{n_1,\ldots,n_{k-1}})=\allowbreak \delta_{e^n}(\FS_{m_1,\ldots,m_{l-1}})$, against the assumption on $l$. We reach thus a contradiction that implies that there does not exist any pair of  non-isomorphic fully symmetric trees with the same $e^n$-size. By Lemma \ref{lem:sound1}, this implies that $\mathfrak{C}_{D,e^n}$ is sound.
\end{proof}

The same argument shows that $\mathfrak{C}_{D,f}$ is sound for every exponential function $f(n)=r^n$ with base  $r$ a transcendental real number. However, if $r$ is not transcendental, then $\mathfrak{C}_{D,r^n}$ need not be sound. For instance, $\delta_{2^n}(FS_{2,3})=\delta_{2^n}(FS_{3,2})=26$ and $\delta_{\sqrt{2}^n}(FS_{8,10})=\delta_{\sqrt{2}^n}(FS_{12,8})=352$.

\begin{proposition}
If $f(n)=\ln(n+e)$, then $\mathfrak{C}_{D,f}$ is sound.
\end{proposition}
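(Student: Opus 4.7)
The plan is to invoke Lemma \ref{lem:sound1}, which reduces soundness to showing that distinct fully symmetric trees have distinct $\delta_f$-sizes. I would follow the same strategy as in the exponential case: translate the hypothetical identity $\delta_f(\FS_{n_1,\ldots,n_k})=\delta_f(\FS_{m_1,\ldots,m_l})$ (with all $n_i,m_j\geq 2$) into a polynomial identity with integer coefficients, exploit the transcendence of $e$, and then peel off positions one by one using Lemma \ref{nk-neq-ml}.

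First, since $f(0)=\ln(e)=1$, Example \ref{ex:fullysym} lets me rewrite the hypothetical equality so that exponentiating both sides yields
$$
e^{n_1\cdots n_k}\prod_{i=1}^k (n_i+e)^{n_1\cdots n_{i-1}} = e^{m_1\cdots m_l}\prod_{j=1}^l (m_j+e)^{m_1\cdots m_{j-1}}.
$$
Both sides are polynomials in $e$ with nonnegative integer coefficients, so by transcendence of $e$ the analogous identity must hold in $\ZZ[x]$. Since every $n_i,m_j\geq 2$, each factor $(n_i+x)$ and $(m_j+x)$ has nonzero constant term, so comparing the multiplicity of $x$ on the two sides immediately forces $N:=n_1\cdots n_k=m_1\cdots m_l=:M$. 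Cancelling $x^N$ leaves the identity
$$
P(x):=\prod_{i=1}^k (n_i+x)^{a_i} = \prod_{j=1}^l (m_j+x)^{b_j}=:Q(x)
$$
in $\ZZ[x]$, where $a_i=n_1\cdots n_{i-1}$ and $b_j=m_1\cdots m_{j-1}$.

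The main obstacle — the step I expect to be the most delicate — is to extract $n_k$ (and $m_l$) from $P=Q$, so that Lemma \ref{nk-neq-ml} can be invoked. For this I would look at the multiplicity of the root $-v$ in $P$, namely $A_v:=\sum_{i:\,n_i=v}a_i$. If $v\neq n_k$, every index contributing to $A_v$ lies in $\{1,\ldots,k-1\}$, so $A_v\leq a_1+\cdots+a_{k-1}=1+n_1+\cdots+n_1\cdots n_{k-2}$, which by Lemma \ref{lem:aux} is strictly smaller than $n_1\cdots n_{k-1}=a_k\leq A_{n_k}$. Thus $n_k$ is the unique maximizer of $v\mapsto A_v$; the same argument on the right-hand side identifies $m_l$, and the identity $P=Q$ forces $n_k=m_l$.

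With $N=M$ and $n_k=m_l$ at hand, Lemma \ref{nk-neq-ml} yields $\delta_f(\FS_{n_1,\ldots,n_{k-1}})=\delta_f(\FS_{m_1,\ldots,m_{l-1}})$, and the claim then follows by induction on the number of leaves $N$; the base case $k=1$ is immediate because then $\deg P=1$ forces $l=1$ and $m_1=n_1$. Lemma \ref{lem:sound1} then gives the soundness of $\mathfrak{C}_{D,f}$.
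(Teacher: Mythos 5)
Your proposal is correct and follows essentially the same route as the paper's proof: exponentiate the $f$-size identity, use the transcendence of $e$ to pass to a polynomial identity in $\ZZ[x]$, read off $n_1\cdots n_k=m_1\cdots m_l$ from the power of $x$, identify $n_k=m_l$ as the root of maximal multiplicity via Lemma \ref{lem:aux}, and then peel off the last factor with Lemma \ref{nk-neq-ml}. The only (inessential) differences are that you phrase the final descent as induction on the number of leaves rather than as a minimal-counterexample argument on the depth $l$, and you spell out the multiplicity bound $A_v$ a bit more explicitly when several $n_i$ coincide.
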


\begin{proof}
The argument is similar to that of the previous proof.
Let $f(n)=\ln(n+e)$ and assume that there exist two  non-isomorphic fully symmetric trees $\FS_{n_1,\ldots,n_k}$ and $\FS_{m_1,\ldots,m_l}$ such that
$\delta_f(\FS_{n_1,\ldots,n_k})=\delta_f(\FS_{m_1,\ldots,m_l})$, that is, such that
\begin{equation}
\begin{array}{l}
n_1\cdots n_k+n_1\cdots n_{k-1}\ln(n_k+e)+\cdots+\ln(n_1+e)\\ 
\qquad =
m_1\cdots m_l+m_1\cdots m_{l-1}\ln(m_l+e)+\cdots+\ln(m_1+e).
\end{array}
\label{eq1}
\end{equation}
Assume that $l$ is the smallest depth of a fully symmetric tree with $f$-size equal to the $f$-size of a fully symmetric tree  non-isomorphic to it.

Applying the exponential function to both sides of equality (\ref{eq1}), we obtain
$$
\begin{array}{l}
e^{n_1\cdots n_k}(n_k+e)^{n_1\cdots n_{k-1}}\cdots (n_2+e)^{n_1}(n_1+e)\\
\qquad\qquad=
e^{m_1\cdots m_l}(m_l+e)^{m_1\cdots m_{l-1}}\cdots (m_2+e)^{m_1}(m_1+e).
\end{array}
$$
Since $e$ is transcendental, this implies the equality of polynomials in $\ZZ[x]$
$$
\begin{array}{l}
x^{n_1\cdots n_k}(n_k+x)^{n_1\cdots n_{k-1}}\cdots (n_2+x)^{n_1}(n_1+x)\\
\qquad\qquad=
x^{m_1\cdots m_l}(m_l+x)^{m_1\cdots m_{l-1}}\cdots (m_2+x)^{m_1}(m_1+x),
\end{array}
$$
which, since $n_1,\ldots,n_k,m_1,\ldots,m_l\geq 2$, on its turn implies the equalities
$$
\begin{array}{l}
x^{n_1\cdots n_k}=x^{m_1\cdots m_l}, \mbox{ i.e., } n_1\cdots n_k=m_1\cdots m_l,\\[1ex]
(x+n_k)^{n_1\cdots n_{k-1}}\cdots (x+n_2)^{n_1}(x+n_1)\\\qquad\qquad =
(x+m_l)^{m_1\cdots m_{l-1}}\cdots (x+m_2)^{m_1}(x+m_1).
\end{array}
$$
If $l=1$, the right-hand side polynomial in the second equality is simply $x+m_1$ and then this equality of polynomials implies that $k=1$ and $n_1=m_1$, which contradicts the assumption that $\FS_{n_1,\ldots,n_k}\neq \FS_{m_1,\ldots,m_l}$.  Now assume that $l\geq 2$. 
From the first equality we know that $n_1\cdots n_k=m_1\cdots m_l$.
Now, the root of the left-hand side polynomial in the second equality with largest multiplicity is $-n_k$
(because, by Lemma \ref{lem:aux}, $n_1\cdots n_{k-1}$ alone is greater than the degree of
$(x+n_{k-1})^{n_1\cdots n_{k-2}}\cdots (x+n_2)^{n_1}(x+n_1)$) and,  similarly,
the root of the right-hand side polynomial in the second equality with largest multiplicity is $-m_l$.
Then, the equality of both polynomials implies that $n_k=m_l$ and hence, by Lemma \ref{nk-neq-ml},
$\delta_f(\FS_{n_1,\ldots,n_{k-1}})=\delta_f(\FS_{m_1,\ldots,m_{l-1}})$, against the assumption on $l$.
As in the previous proof, this contradiction implies that $\mathfrak{C}_{D,f}$ is sound
\end{proof}

The same argument proves that, for every transcendental number $r>1$, the function $f(n)=\log_{r}(n+r)$ defines sound  indices $\mathfrak{C}_{D,f}$. However, if $r$ is not transcendental, then such a $\mathfrak{C}_{D,f}$ need not be sound. For instance, $\delta_{\log_2(n+2)}(FS_{9,6})=\delta_{\log_2(n+2)}(FS_{20,2})=81+\log_2(11)$.\medskip

In summary, each one of the functions $f(n)=\ln(n+e)$ and $f(n)=e^n$ defines, for every dissimilarity $D$,  a Colless-like index $\mathfrak{C}_{D,f}$ that reaches its minimum value on each $\TT_n^*$, 0, at exactly the fully symmetric trees.

\section*{Results}

\subsection*{Maximally unbalanced trees}

\noindent The next results give the maximum values
of $\mathfrak{C}_{D,f}$ on $\TT_n^*$ when $D=\MDM$, $\var$ or $\sd$ and $f(n)=\ln(n+e)$ or $f(n)=e^n$.
These maxima define the range of each $\mathfrak{C}_{D,f}$ on $\TT_n^*$, and then, dividing by them, we can define normalized Colless-like indices  that can be used to compare the balance of trees with different numbers of leaves.

We begin with the function $f(n)=\ln(n+e)$, which is covered by the following theorem.

\begin{theorem}\label{thm:mdm}
Let $f$ be a function $\NN\to \RR_{\geq 0}$ such that $0<f(k)< f(k-1)+f(2)$, for every $k\geq 3$. Then, for every $n\geq 2$, the indices $\mathfrak{C}_{\MDM,f}$,
$\mathfrak{C}_{\sd,f}$ and $\mathfrak{C}_{\var,f}$ reach  their maximum values on  $\TT^*_n$ exactly at  the comb $K_n$.
These maximum values are, respectively,
$$
\begin{array}{l}
\mathfrak{C}_{\MDM,\delta_{f}}(K_n)=\dfrac{f(0)+f(2)}{4}(n-1)(n-2),\\[2ex]
\mathfrak{C}_{\sd,\delta_{f}}(K_n)=\dfrac{f(0)+f(2)}{2\sqrt{2}}(n-1)(n-2),\\[2ex]
\mathfrak{C}_{\var,\delta_{f}}(K_n)=
\dfrac{(f(0)+f(2))^2}{12}(n-1)(n-2)(2n-3).
\end{array}
$$
\hspace*{\fill}\qed
\end{theorem}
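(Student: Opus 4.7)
The plan is to first evaluate $\mathfrak{C}_{D,f}(K_n)$ directly using Propositions \ref{lem:Phibin} and \ref{prop:C2}, and then prove maximality by strong induction on $n$. The comb values reduce via the two propositions to computing $C(K_n)$ and $C^{(2)}(K_n)$: the former is $\binom{n-1}{2}$ as cited earlier, and the latter I would obtain from the recursion $C^{(2)}(K_n)=C^{(2)}(K_{n-1})+(n-2)^2$ (since the root of $K_n$ has children of leaf-counts $1$ and $n-1$), which telescopes to $C^{(2)}(K_n)=\sum_{j=0}^{n-2}j^2=\tfrac{(n-2)(n-1)(2n-3)}{6}$, giving the three claimed formulas after substitution.

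For the maximality claim I would run a strong induction on $n$, with the base $n=2$ trivial. Given $T=T_1\star\cdots\star T_k\in\TT_n^*$ with $T_i\in\TT_{n_i}^*$ and $s_i:=\delta_f(T_i)$, the recursion $\mathfrak{C}_{D,f}(T)=\sum_i\mathfrak{C}_{D,f}(T_i)+D(s_1,\ldots,s_k)$ together with the inductive bound $\mathfrak{C}_{D,f}(T_i)\leq\mathfrak{C}_{D,f}(K_{n_i})$ reduces the problem to a single numerical inequality. Two auxiliary facts feed into it. First, the hypothesis $0<f(k)<f(k-1)+f(2)$ iterates to $f(j)\leq(j-1)f(2)$ for every $j\geq 1$; combined with $\sum_{v\in V_{int}(T)}(\deg v-1)=n-1$ this yields $\delta_f(T)\leq L_n:=nf(0)+(n-1)f(2)$ with equality iff $T$ is bifurcating, and in particular $s_{\max}-s_{\min}\leq(n_{\max}-1)(f(0)+f(2))$, where $n_{\max}=\max_i n_i$. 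Second, one uses the elementary bound $\MDM(s_1,\ldots,s_k)\leq(s_{\max}-s_{\min})/2$ (from the $\ell_1$-optimality of the median against the midrange) and Popoviciu's inequality $\var(s_1,\ldots,s_k)\leq\tfrac{k}{4(k-1)}(s_{\max}-s_{\min})^2$, together with $\sd=\sqrt{\var}$.

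Assembling these ingredients reduces the induction step to a polynomial inequality in $n_1,\ldots,n_k$. For $\MDM$ it simplifies to $n_{\max}+k-2\leq\sum_{i<j}n_i n_j$, which follows from $\sum_{i<j}n_i n_j\geq n_{\max}(n-n_{\max})\geq n_{\max}(k-1)$; and tracing equality through the whole chain forces $k=2$ together with $n_{\min}=1$, which combined with the inductive equality $T_i=K_{n_i}$ pins down $T=K_n$. The main obstacle will be carrying out the corresponding polynomial inequality for $\var$ (and hence for $\sd$): the right-hand side is now cubic in the $n_i$, making the algebra heavier, and to recover uniqueness one must additionally exploit that equality in Popoviciu's bound forces $k$ to be even and the $s_i$ to split evenly between $s_{\min}$ and $s_{\max}$, a constraint which combined with $s_i=L_{n_i}$ rules out every configuration except $k=2$ and $n_{\min}=1$.
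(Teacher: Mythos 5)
Your proposal is correct in outline, and for the maximality claim it takes a genuinely different route from the paper's. The evaluation at the comb coincides with the paper's (Propositions \ref{lem:Phibin} and \ref{prop:C2} plus $C(K_n)=\binom{n-1}{2}$ and $C^{(2)}(K_n)=\sum_{j=1}^{n-2}j^2$). For maximality, the paper's proof (Supporting File S1) proceeds by local surgery: a chain of prune-and-regraft lemmas, controlled by statements about how $\MDM$, $\sd$ and $\var$ of a vector change when entries are removed, shows that every non-binary tree is strictly beaten by some binary tree, after which Lemma \ref{lem:cat-C} (and its quadratic analogue) identifies the comb among binary trees. You instead run one strong induction on $n$ directly through the recursion $\mathfrak{C}_{D,f}(T_1\star\cdots\star T_k)=\sum_i\mathfrak{C}_{D,f}(T_i)+D(\delta_f(T_1),\ldots,\delta_f(T_k))$, bounding the root term by the range $s_{\max}-s_{\min}$ (midrange comparison for $\MDM$, Popoviciu for $\var$ and $\sd$) and the range by $(n_{\max}-1)(f(0)+f(2))$ via $\delta_f(T)\leq(f(0)+f(2))n-f(2)$ --- which is exactly where the hypothesis $f(k)<f(k-1)+f(2)$, iterated to $f(k)\leq(k-1)f(2)$ for $k\geq 2$ together with $\sum_{v\in V_{int}(T)}(\deg v-1)=n-1$, enters. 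I checked the resulting polynomial inequalities and the equality tracing ($k=2$, the other part a single leaf, plus inductive uniqueness of $K_{n_i}$), and they do pin down $K_n$. Your argument is much shorter and treats the three dissimilarities nearly uniformly; what the paper's heavier machinery buys is reuse, since the same surgery lemmas are recycled in the proof of Theorem \ref{thm:mdm-exp}.

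Two loose ends. The bound $f(j)\leq(j-1)f(2)$ fails for $j=1$ (nothing constrains $f(1)$), but this is harmless since out-degree $1$ does not occur. More substantively, you only assert the cubic inequality needed for $\var$; it does hold: with $G(m)=\sum_{j=1}^{m-2}j^2$, so that $\mathfrak{C}_{\var,f}(K_m)=\frac{(f(0)+f(2))^2}{2}G(m)$, the step reduces (after weakening $\frac{k}{2(k-1)}\leq 1$) to $\sum_iG(n_i)+(n_{\max}-1)^2\leq G(n)$, which follows because $G(n)-G(n_{\max})=\sum_{j=n_{\max}-1}^{n-2}j^2$ contains the summand $(n_{\max}-1)^2$ and then at least as many further summands, each at least $n_{\max}^2$, as the total number of squares (each smaller than $n_{\max}^2$) appearing in $\sum_{i}G(n_i)$ over the non-maximal parts; equality again forces the comb configuration. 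Until that computation is written out the $\var$ case is incomplete, but the path you describe does go through.
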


The proof of this theorem is very long, and we devote to it the first three sections of the  Supporting File S1, one section for each dissimilarity.

It is straightforward to check that the function  $f(n)=\ln(n+e)$ satisfies the hypothesis of Theorem \ref{thm:mdm} (as to the inequality $f(k)\leq f(k-1)+f(2)$, notice that
$\ln(k+e)\leq \ln(k+e-1)+\ln(2)$ if, and only if,  $k+e\leq 2(k+e-1)$, and this last inequality holds
 (strictly) for every $k\in \NN$).
Therefore,
$\mathfrak{C}_{\MDM,{\ln(n+e)}}$,
$\mathfrak{C}_{\var,{\ln(n+e)}}$, and
$\mathfrak{C}_{\sd,{\ln(n+e)}}$
take their maximum values  on $\TT^*_n$  at the comb    $K_n$. In other words, the combs are the most unbalanced trees according to these indices.
Table 2 in the  Supporting File S2 gives the values of $\mathfrak{C}_{\MDM,{\ln(n+e)}}$,
$\mathfrak{C}_{\var,{\ln(n+e)}}$, and
$\mathfrak{C}_{\sd,{\ln(n+e)}}$  on $\TT^*_n$, for $n=2,3,4,5$, and the positions of the different trees in each $\TT_n^*$ according to the increasing order of the corresponding index.
\smallskip

As far as $f(n)=e^n$ goes, we have the following result.
We have also moved its proof to the  Supporting File S1.

\begin{theorem}\label{thm:mdm-exp}
For every $n\geq 2$:
\begin{enumerate}[(a)]
\item If  $n\neq 4$, then both 
$\mathfrak{C}_{\MDM,{e^n}}$ and $\mathfrak{C}_{\sd,{e^n}}$ 
reach their maximum on $\TT^*_n$ exactly at  the tree $\FS_1\star \FS_{n-1}$ (see Fig. \ref{fig:maxexp1}), and these maximum values are
$$
\begin{array}{l}
\displaystyle \mathfrak{C}_{\MDM,{e^n}}(\FS_1\star \FS_{n-1})= \frac{1}{2}(e^{n-1}+n-2),\\[2ex]
\displaystyle \mathfrak{C}_{\sd,{e^n}}(\FS_1\star \FS_{n-1})= \frac{1}{\sqrt{2}}(e^{n-1}+n-2).
\end{array}
$$

\item Both
$\mathfrak{C}_{\MDM,{e^n}}$ and $\mathfrak{C}_{\sd,{e^n}}$ reach their maximum on $\TT^*_4$ exactly at the comb $K_4$, and these maximum values are
$$
\begin{array}{l}
\displaystyle\mathfrak{C}_{\MDM,{e^n}}(K_4)=\frac{3}{2}(e^2+1),\\[2ex]
\displaystyle\mathfrak{C}_{\sd,{e^n}}(K_4)=\frac{3}{\sqrt{2}}(e^2+1).
\end{array}
$$

\item $\mathfrak{C}_{\var,{e^n}}$ always reaches its maximum on $\TT_n^*$ exactly at the tree $\FS_1\star \FS_{n-1}$, and it is
$$
\mathfrak{C}_{\var,{e^n}}(\FS_1\star \FS_{n-1})=\frac{1}{2}(e^{n-1}+n-2)^2.
$$
\end{enumerate}
\ \hspace*{\fill}\qed
\end{theorem}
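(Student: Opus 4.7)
The plan is strong induction on $n$ for all three parts. Let $U_n$ denote the claimed maximum for the dissimilarity at hand. The base cases $n\in\{2,3,4\}$ are handled by brute-force enumeration over the $1$, $2$, and $5$ shapes in $\TT^*_n$ respectively; in particular the $n=4$ anomaly for $\MDM$ and $\sd$ amounts to the numerical check $\tfrac{3}{2}(1+e^2)>\tfrac{1}{2}(e^3+2)$, whereas for the variance the opposite inequality $\tfrac{1}{2}(e^3+2)^2>\tfrac{5}{2}(1+e^2)^2$ shows $\FS_1\star\FS_3$ beating $K_4$ and explains why no exception arises in that case.

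For the inductive step I would rely on the $\mathfrak{C}_{D,f}$-recursion
$$
\mathfrak{C}_{D,f}(T_1\star\cdots\star T_k)=\sum_{i=1}^k\mathfrak{C}_{D,f}(T_i)+D(\delta_f(T_1),\ldots,\delta_f(T_k))
$$
together with the following auxiliary $e^n$-size bound, which I would prove first:
$$
\delta_{e^n}(S)\leq m+e^{m-1}+e^2=\delta_{e^n}(\FS_1\star\FS_{m-1})\qquad\text{for every }S\in\TT^*_m\setminus\{\FS_m\}.
$$
This follows from the convexity of $x\mapsto e^x$: writing $\delta_{e^n}(S)=m+\sum_{v\in V_{int}(S)}e^{\deg(v)}$, the internal degrees $d_1,\ldots,d_I$ satisfy $\sum d_i=m+I-1$ with $d_i\geq 2$ and $I\geq 2$ (since $S$ is not the star), so $\sum_i e^{d_i}$ is maximised at an extreme point of the feasible polytope, namely $(m-1,2)$.

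With these tools in hand, I split any $T=T_1\star\cdots\star T_k\in\TT^*_n\setminus\{\FS_1\star\FS_{n-1}\}$ (with the understanding that $T\neq K_4$ in the $\MDM$/$\sd$ case when $n=4$) into two branches. In the ``leaf-and-subtree'' branch ($k=2$, $T_1=\FS_1$, $T_2\neq\FS_{n-1}$) the recursion gives $\mathfrak{C}_{\MDM,e^n}(T)=\mathfrak{C}_{\MDM,e^n}(T_2)+\tfrac{1}{2}(\delta_{e^n}(T_2)-1)$, and combining the inductive bound $\mathfrak{C}_{\MDM,e^n}(T_2)\leq U_{n-1}$ with the size bound applied to $T_2$ reduces the target strict inequality to $2U_{n-1}+e^2<e^{n-2}(e-1)$. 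In the remaining branch every $n_i\leq n-2$, so $\MDM(\delta_{e^n}(T_1),\ldots,\delta_{e^n}(T_k))\leq\max_i\delta_{e^n}(T_i)\leq (n-2)+e^{n-2}$; the convexity of the sequence $U_m$ makes $\sum_i U_{n_i}\leq U_{n-2}$ (attained at the extreme partition $(n-2,2)$ or $(n-2,1,\ldots,1)$ of $n$), and the desired bound reduces to $2U_{n-2}+(n-2)<e^{n-2}(e-2)$. The $\sd$ argument is identical up to the factor $1/\sqrt{2}$, and the $\var$ argument is parallel with $\tfrac{1}{2}(x-y)^2$ in place of $\tfrac{1}{2}|x-y|$; the squaring widens the exponential gap and in particular removes the $n=4$ exception.

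The main obstacle I expect lies at the bottom of the induction, where both inequalities are near-tight: $2U_{n-1}+e^2<e^{n-2}(e-1)$ has to be verified explicitly at $n=5$, where $U_4=\tfrac{3}{2}(1+e^2)$ takes its exceptional value, and $2U_{n-2}+(n-2)<e^{n-2}(e-2)$ has to be checked at $n=6$, where $U_{n-2}=U_4$ again uses the exceptional formula. Once these two small cases clear, the $e^{m-1}$ contribution in $U_m$ dominates and every larger-$n$ comparison becomes routine.
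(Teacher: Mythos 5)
Your strategy is sound and it reaches the theorem by a genuinely different route than the paper's. The paper first reduces to trees with a binary root via the pruning--regrafting lemmas of the supplementary file, and then its central tool is the auxiliary functional $2\mathfrak{C}+\delta$ (resp.\ $\sqrt{2}\,\mathfrak{C}+\delta$), which a separate, rather delicate induction shows is maximized over $\TT^*_m$ exactly at the star $\FS_m$ with value $e^m+m$; since $\mathfrak{C}(\FS_1\star T_2)=\frac{1}{2}\bigl(2\mathfrak{C}(T_2)+\delta(T_2)-1\bigr)$, this settles the leaf-plus-subtree case exactly, and the $\var$ case is then deduced from $\sd$ via $\mathfrak{C}_{\var,f}\leq\mathfrak{C}_{\sd,f}^2$. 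You avoid both ingredients: your replacement is the ``runner-up size'' bound $\delta_{e^n}(S)\leq m+e^{m-1}+e^2$ for $S\in\TT^*_m\setminus\{\FS_m\}$, which makes the naive sum-of-separate-maxima estimate succeed precisely because the extremal subtree $\FS_{n-1}$ has been excluded from that branch, and you handle roots of degree $\geq 3$ directly with the crude bounds $D(\ldots)\leq c\cdot\max_i\delta_{e^n}(T_i)$ and $\sum_i U_{n_i}\leq U_{n-2}$ rather than by regrafting. What the paper's lemma buys is an exact, reusable identity; what yours buys is brevity and elementarity, at the price of inequalities that are only comfortable because of the gap between $e^m+m$ and $e^{m-1}+e^2+m$ --- and you correctly locate the two places ($n=5$ in the first branch, $n=6$ in the second, both through the anomalous $U_4=\frac{3}{2}(e^2+1)$) where the slack nearly vanishes and an explicit numerical check is mandatory. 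The remaining details you gloss over are all routine but should be written out: the size bound requires optimizing over the number $I$ of internal nodes as well as over the degree vector (the vertex $(m-I+1,2,\ldots,2)$ for each $I\geq 2$, then monotonicity in $I$); the claim $\sum_i U_{n_i}\leq U_{n-2}$ is a merging/majorization argument that must be checked to survive the $U_4$ exception; and the constant $c$ in $D(\ldots)\leq c\cdot\max_i\delta_{e^n}(T_i)$ differs between $\MDM$, $\sd$ and $\var$ ($c=1$, $c=1/\sqrt{2}$, and a quadratic bound, respectively), so the three final inequalities are parallel but not literally identical and each needs its own small-$n$ verification.
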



So, according to $\mathfrak{C}_{\MDM,{e^n}}$,
$\mathfrak{C}_{\var,{e^n}}$, and
$\mathfrak{C}_{\sd,{e^n}}$,  the trees of the form $\FS_1\star \FS_{n-1}$ are the most unbalanced (except for $n=4$ and $D=\MDM$ or $\sd$, in which case the most unbalanced tree is the comb).
Table 2 in the  Supporting File S2 also gives the values of these indices on $\TT^*_n$, for $n=2,3,4,5$, and the positions of the different trees in each $\TT_n^*$ according to the increasing order of the corresponding index.

\subsection*{The R package ``CollessLike''}

We have written an \texttt{R} package called \textsl{CollessLike}, which is available at the CRAN (\url{https://cran.r-project.org/web/packages/CollessLike/index.html}), that computes the Colless-like indices and their normalized version, as well as several other balance indices, and simulates the distribution of these indices of $\TT_n$ under the $\alpha$-$\gamma$-model \cite{Ford2}. 
Among others, this package contains functions that:
\begin{itemize}
\item Compute the following balance indices for multifurcating trees:  the Sackin index $S$ \cite{Sackin:72,Shao:90}, the total cophenetic index $\Phi$  \cite{MRR:12}, and the Colles-like index $\mathfrak{C}_{D,f}$ for several predefined dissimilarities $D$ and functions $f$ as well as for any user-defined ones.  

Our function also computes the normalized versions (obtained by subtracting their minimum value and dividing by their range, so that they take values in $[0,1]$) of $S$, $\Phi$ and the Colless-like indices $\mathfrak{C}_{D,f}$ for which we have computed the range in Theorems  \ref{thm:mdm} and \ref{thm:mdm-exp}.
Recall from the aforementioned references that, for every $n\geq 2$:
\begin{itemize}
\item the range of $S$ on $\TT_n^*$ goes from $S(\FS_n)=n$ to $S(K_n)=\frac{1}{2}(n+2)(n-1)$
\item the range of $\Phi$ on $\TT_n^*$ goes from $\Phi(\FS_n)=0$ to $\Phi(K_n)=\binom{n}{3}$
\end{itemize}
Therefore, for every $T\in \TT_n$, the normalized Sackin and total cophenetic index are, respectively,
$$
\displaystyle S_{norm}(T)=\frac{S(T)-n}{\frac{1}{2}(n+2)(n-1)-n},\quad 
\displaystyle \Phi_{norm}(T)=\frac{\Phi(T)}{\binom{n}{3}},
$$
while, for instance, the normalized version of $\mathfrak{C}_{\MDM,\ln(n+e)}$ is
$$
\mathfrak{C}_{\MDM,\ln(n+e),norm}(T)=\frac{\mathfrak{C}(T)}{\frac{1+\ln(e+2)}{4}(n-1)(n-2)}.
$$

\item Given an $n\geq 2$, produce a sample of $N$ values of a balance index $S$, $\Phi$, or $\mathfrak{C}_{D,f}$ on trees in  $\TT_n$ generated following an $\alpha$-$\gamma$-model:  the parameters $N$, $n$, $\alpha$, $\gamma$ (with $0\leq\gamma\leq\alpha\leq 1$)  can be set by the user.

Due to the computational cost of this function, we have stored the values of $S$, $\Phi$, and $\mathfrak{C}_{\MDM,\ln(n+e)}$ (denoted henceforth simply by $\mathfrak{C}$) on the samples of $N=5000$ trees in each $\TT_n$ (for every $n=3,\ldots,50$  and for every  $\alpha,\gamma\in \{0,0.1,0.2,\ldots,0.9,1\}$ with $\gamma\leq \alpha$) generated in the study reported in the next subsection. In this way, if the user is interested in this range of numbers of leaves and this range of parameters, he or she can study the distribution of the corresponding balance index efficiently and quickly.
	
\item Given a tree $T\in\TT_n$, estimate the percentile $q_{T,n,\alpha,\gamma}$ of its balance index $S$, $\Phi$, or $\mathfrak{C}_{D,f}$ with respect to the distribution of this index on $\TT_n$ under some $\alpha$-$\gamma$-model. If $n,\alpha,\gamma$ are among those mentioned in the previous item, for the sake of efficiency this function uses the database of computed indices to simulate the distribution of the balance index of $\TT_n$ under this $\alpha$-$\gamma$-model.
\end{itemize} 

For instance, the unlabeled tree $T\in \TT_8^*$  in Fig. \ref{tree1} is the shape of a phylogenetic tree  randomly generated under the $\alpha$-$\gamma$-model with $\alpha=0.7$ and $\gamma=0.4$ (using \texttt{set.seed(1000)} for reproducibility). The values of its balance indices are given in the figure's caption.


Fig. \ref{dis1} displays the estimation of the density function of the balance indices $\mathfrak{C}$, $S$, and $\Phi$ under the $\alpha$-$\gamma$-model with $\alpha=0.7$ and $\gamma=0.4$ on  $\TT_8$, obtained using the 5000 random trees gathered in our database. Moreover, the estimated percentiles of the balance indices of the tree of Fig.~\ref{tree1} are also shown in the figure.

Fig.~\ref{per1} shows a percentile plot of $\mathfrak{C}$, $S$, and $\Phi$ under the $\alpha$-$\gamma$-model for $\alpha=0.7$ and $\gamma=0.4$ on  $\TT_8$. The percentiles of the tree of Fig.~\ref{tree1} are given by the area to the left of the vertical lines.

%

Ford's $\alpha$-model for bifurcating phylogenetic trees \cite{Ford1}, which includes as special cases the {Yule}, or  {Equal-Rate Markov}, model \cite{Harding71,Yule}  and the {uniform}, or {Proportional to Distinguishable Arrangements},  model  \cite{CS,Pinelis}, is on its turn a special case of the $\alpha$-$\gamma$-model,  corresponding to the case $\alpha=\gamma$. So, this package allows also to study this model.
For example, the unlabeled tree in Fig.~\ref{treedis} has been generated (with \texttt{set.seed(1000)}) using $n=8$ and $\alpha=\gamma=0.5$, which corresponds to the uniform model. The figure also depicts the estimation of the density functions and of the percentile plots of $\mathfrak{C}$, $S$, and $\Phi$ on $\TT_8$ under this model, as well as the percentile values of the tree. %

%

\subsection*{Experimental results on TreeBASE}\label{sec:conc}

To assess the performance of  $\mathfrak{C}_{\MDM,\ln (n+\mathrm{e})}$, which we abbreviate by $\mathfrak{C}$, we downloaded (December 13-14, 2015) all phylogenetic trees  in  the TreeBASE database \cite{Treebase} using the function \verb?search_treebase()? of the R package \texttt{treebase} \cite{trebaR}. We obtained 13,008 trees, from which 80 had format problems that prevented R from reading them, so we restricted ourselves to the remaining 12,928 trees. To simplify the language, we shall still refer to this slightly smaller subset of phylogenetic trees as ``all trees in TreeBASE''. Only 4,814 among  these 12,928 trees in TreeBASE are bifurcating. 

Then, for every phylogenetic tree $T$ in this set, we have computed its Colless-like index $\mathfrak{C}(T)$, its Sackin index $S(T)$, and its total cophenetic index $\Phi(T)$. 
We have then compared the results obtained with $\mathfrak{C}$,  $S$, and $\Phi$ on TreeBASE in the following ways (all analysis have been performed with R \cite{R}).

\subsubsection*{Behavior as functions of the number of leaves.}\label{sec:5.1.1}
For every number of leaves $n$, we have computed the mean and the variance of  $\mathfrak{C}$, $S$ and   $\Phi$ on all trees with $n$ leaves in TreeBASE. Then, we have computed the regression of these values as a function of $n$. 

As far as the means go, the best fits have been:
\begin{itemize}
 \item\emph{Colless-like index}: $\overline{\mathfrak{C}} \approx 0.5351\cdot n^{1.5848}$, with a coefficient of determination of 
$R^2=0.9869$ and a p-value for the exponent $p<2\cdot 10^{-16}$.
 \item\emph{Sackin index}: $\overline{S}\approx 1.4512\cdot n^{1.4359}$, with a coefficient of determination of $R^2= 0.9953$  and a p-value for the exponent $p<2\cdot 10^{-16}$.
 \item\emph{Total cophenetic index}: $\overline{\Phi}\approx 0.1894\cdot n^{2.5478}$, with a coefficient of determination of $R^2= 0.9945$ and a p-value for the exponent $p<2\cdot 10^{-16}$.
\end{itemize}
Fig.~\ref{REL-INDEXS} depicts these mean values of $\mathfrak{C}$ (left), $S$ (center), and $\Phi$ (right) as functions of $n$.
%

Thus, $S$ and $\mathfrak{C}$ have similar growth rates, while $\Phi$ has a growth rate one order higher in magnitude. This difference vanishes if we normalize the indices by their range, which are $O(n^2)$ for $\mathfrak{C}$ and  $S$, and $O(n^3)$ for $\Phi$:
$$
\begin{array}{l}
\overline{\mathfrak{C}_{norm}} \approx 0.8389\cdot n^{-0.4152}\\[1ex]
\overline{S_{norm}}\approx 2.9024\cdot n^{-0.5641}\\[1ex]
\overline{\Phi_{norm}}\approx 1.1364\cdot n^{-0.4522}
\end{array}
$$

As far as the behavior of the variances goes, the best fits are the following:
\begin{itemize}
 \item\emph{Colless index}: $\mathrm{Var}(\mathfrak{C}) \approx 0.07599\cdot n^{3.12831}$, with a  
coefficient of determination of  $R^2=0.962$ and a p-value for the exponent $p<2\cdot 10^{-16}$.
 \item\emph{Sackin index}: $\mathrm{Var}(S)\approx 0.03182\cdot n^{3.22441}$, with a coefficient of determination of $R^2= 0.9575$  and a p-value for the exponent $p<2\cdot 10^{-16}$.

\item\emph{Total cophenetic index}: $\mathrm{Var}(\Phi)\approx 0.0041\cdot n^{5.2075}$, with a coefficient of determination of $R^2= 
0.9812$ and a p-value for the exponent $p<2\cdot 10^{-16}$.
\end{itemize}
The results are  in the same line as before, with the variances of $\mathfrak{C}$ and $S$ having similar growth rates, and the variance of  $\Phi$ having a growth rate two orders of magnitude higher. This difference  vanishes again when we normalize the indices:
$$
\begin{array}{l}
\mathrm{Var}(\mathfrak{C}_{norm}) \approx 0.18677\cdot n^{-0.87169}\\
\mathrm{Var}(S_{norm})\approx 0.12728\cdot n^{-0.77559}\\
\mathrm{Var}(\Phi_{norm})\approx 0.1476\cdot n^{-0.7925}
\end{array}
$$
So, in summary, $\mathfrak{C}$ has, on TreeBASE and relative to the range of values, a slightly larger mean growth rate and a slightly smaller variance growth rate
 than the other two indices.


\subsubsection*{Numbers of ties.}
The number of ties (that is, of pairs of different trees with the same index value) of a balance index is an interesting measure of quality, because the smaller its frequency of ties,  the bigger its ability to rank the balance of any pair of different trees.  Although, in our opinion, this ability 
need not always be an advantage: for instance, 
neither $\Phi$ nor $S$ take the same, minimum, value on all different fully symmetric trees with the same numbers of leaves (for example, $S(\FS_6)=6$ but $S(\FS_{2,3})=S(\FS_{3,2})=12$; and  $\Phi(\FS_6)=0$, but $\Phi(\FS_{3,2})=3$ and $\Phi(\FS_{2,3})=6$; cf.  Fig. \ref{fig:1}), while $\mathfrak{C}$ applied to any fully symmetric tree is always 0.  In this case, we believe that these ties are fair.


Anyway, for every number of leaves $n$ and for every one of all three indices under scrutiny, we have computed the numbers of pairs of trees with $n$ leaves  in TreeBASE having the same value of the corresponding index (in the case of $\mathfrak{C}$, up to 16 decimal digits). 
Fig.~\ref{TIES} plots  the frequencies of ties of $\mathfrak{C}$, $S$ and $\Phi$ as functions of $n$. 
As it can be seen in this graphic, $\mathfrak{C}$ and $\Phi$ have a similar number of ties, and consistently less ties than $S$. 


\subsubsection*{Spearman's rank correlation.}
In order to measure whether all three indices sort the trees according to their balance in the same way or not,
we have computed the Spearman's  rank correlation coefficient \cite{Spearman} of the indices on all trees in TreeBASE, as well as grouping them by their number of leaves $n$. 

The global Spearman's rank correlation coefficient 
of $\mathfrak{C}$ and $S$ is $0.9765$,  and that of $\mathfrak{C}$ and $\Phi$ is $0.9619$.
The graphics in Fig.~\ref{SPEARMAN} plot these coefficients as functions of $n$.
As it can be seen, Spearman's rank correlation coefficient  for $\mathfrak{C}$ and $S$ grows with $n$, approaching to 1,
while the coefficient  for $\mathfrak{C}$ and $\Phi$ shows a decreasing tendency with $n$.


\subsection*{Does TreeBASE fit the uniform model or the alpha-gamma model?}

In this subsection, we  test whether the distribution of the Colless-like index of the phylogenetic trees in TreeBASE agrees with its theoretical distribution under either the uniform model for multifurcating phylogenetic trees \cite{OdenShao} or the $\alpha$-$\gamma$-model \cite{Ford2} for some parameters $\alpha,\gamma$. To do it, we use the normalized version $\mathfrak{C}_{norm}$ of $\mathfrak{C}$, which can be used simultaneously on trees with different numbers of leaves.

To estimate the theoretical distribution of this index under the two aforementioned theoretical models, for every $n=3,\ldots,50$ we have generated, on the one hand, 10,000 random phylogenetic trees in $\TT_n$ under the uniform model using the algorithm described in \cite{OdenShao}, and, on the other hand,  5000 random phylogenetic trees in $\TT_n$ under the $\alpha$-$\gamma$-model for every pair of parameters $(\alpha,\gamma)\in \{0,0.1,0.2,\ldots,0.9,1\}^2$ with $\gamma\leq \alpha$. We have computed the  value of $\mathfrak{C}_{norm}$ on all these trees,
and we have used  the distribution of these values as an estimation of the corresponding theoretical distribution.
To test whether the distribution of the normalized Colless-like index on TreeBASE (or on some subset of it: see below) fits one of these theoretical distributions, we have performed two non-parametric statistical tests on the observed set of indices of TreeBASE and the corresponding simulated set of indices: Pearson's chi-squared test and the Kolmogorov-Smirnov test, using bootstrapping techniques in the latter to avoid problems with ties. 

As a first approach, we have performed these tests on the whole set of trees in TreeBASE. The p-values obtained in all tests, be it for the uniform model or for any considered pair $(\alpha,\gamma)$, have turned out to be negligible. Then, we conclude confidently that the distribution of the normalized Colless-like index on the TreeBASE does not fit either the uniform model or any $\alpha$-$\gamma$-model when we round $\alpha,\gamma$ to one decimal place. For instance,  Fig.~\ref{fig:distunif} displays the distribution of $\mathfrak{C}_{norm}$ on TreeBASE and its estimated theoretical distribution under the uniform model. As it can be seen, these distributions are quite different, which confirms the conclusion of the statistical test.

%

Fig.~\ref{fig:distag} displays the distribution of $\mathfrak{C}_{norm}$ for all trees in TreeBASE and its estimated theoretical distribution under the $\alpha$-$\gamma$-model for the pair of parameters $\alpha,\gamma$ that gave the largest p-values in the goodness of fit tests, which are $\alpha=0.7$ and $\gamma=0.4$. Although graphically both distributions are quite similar, the p-values of the Pearson chi-squared test and of the Kolmogorov-Smirnov test are virtually zero. One might think that the high ``peaks'' of the theoretical distribution near $0$ and $1$ could have influenced the outcome of these statistical tests. For this reason, we have repeated them without taking into account these ``extreme'' values, and the results have been the same.

Since TreeBASE gathers phylogenetic trees of different types and from different sources, we have also considered subsets of it  defined by means of attributes. More specifically, besides the whole TreeBASE as explained above, we have also considered the following subsets of it:
\begin{itemize}
\item All trees in TreeBASE up to repetitions: we have removed 513 repeated trees  (which represent about a 4\% of the total).
\item All trees with their \texttt{kind} attribute equal to ``Species''. This \texttt{kind} attribute can take three values: ``Barcode tree'', ``Gene Tree'' and ``Species Tree''. 
\item All trees with their \texttt{kind} attribute equal to ``Species'' and their \texttt{type} attribute equal to ``Consensus''.  This \texttt{type} attribute can take two values: ``Consensus'' and ``Single''. 
\item All trees with their \texttt{kind} attribute equal to ``Species'' and their \texttt{type} attribute equal to ``Single''. 
\end{itemize}

We have repeated the study explained above for these four subsets of TreeBASE, comparing the distribution of the normalized Colless-like indices of their trees with the estimated theoretical distributions by means of goodness-of-fit tests, and the results have been the same, that is, all p-values have also turned out to be  negligible. Our conclusion is, then, that neither the whole TreeBASE nor any of these four subsets of it seem to fit either the uniform model or some $\alpha$-$\gamma$-model.

%

\section*{Conclusions}\label{sec:trueconc}

In this paper we have introduced a family of \emph{Colless-like} balance indices $\mathfrak{C}_{D,f}$, which depend on a dissimilarity $D$ and a function $f:\NN\to \RR_{\geq 0}$, 
 that generalize  the Colless index  to multifurcating phylogenetic trees by defining the \emph{balance} of an internal node as the spread, measured through $D$,  of the sizes, defined through $f$, of the subtrees rooted at its children. We have proved that every combination of a dissimilarity $D$ and a function either $f(n)=\ln(n+e)$ or $f(n)=e^n$, defines a Colless-like index that is \emph{sound} in the sense that the maximally balanced trees according to it are exactly
 the fully symmetric ones. But, the growth of the function $f$ used to define the sizes that are compared in the definition of $\mathfrak{C}_{D,f}$ determines strongly what the most unbalanced trees are, and hence it has influence on the very notion of ``balance''  measured by the index.
 
In our opinion, choosing $\ln(n+e)$ instead of $e^n$ seems a more sensible decision, because, on the one hand, the most unbalanced trees according to the former are the expected ones ---the combs--- and, on the other hand, we have encountered several hard numeric problems when working with the extremely large figures that appear when using $e^n$-sizes on trees with internal nodes of high degree. As far a choosing the dissimilarity $D$ goes,  $\MDM$ and $\sd$  define indices that are  proportional to the Colless index when applied to bifurcating trees. Among these two options, we recommend to use $\MDM$ because it only involves linear operations, and hence it has less numerical precision problems than $sd$, that uses a square root of a sum of squares. This is the reason we have stuck to $\mathfrak{C}_{\MDM,\ln(n+e)}$ in the numerical experiments reported in the Results section.

We want to call the reader's attention on the problem posed in the ``Sound Colless-like indices" subsection: to find functions $f$ such that $\mathfrak{C}_{D,f}$ is sound. Our conjecture is that no function $f:\NN\to \NN$ satisfies this property.

\section*{Supporting information}

%

\paragraph*{S1 file: Proofs of Theorems \ref{thm:mdm} and \ref{thm:mdm-exp}.} The file provides the detailed proofs of Theorems \ref{thm:mdm} and \ref{thm:mdm-exp}.

\paragraph*{S2 file: Tables.} The file provides two tables, quoted in the main text, with the values of several Colless-like indices on $\TT^*_n$ for $n=2,3,4,5$.

\section*{Competing interests}
  The authors declare that they have no competing interests.


\section*{Acknowledgements}
 This  research has been partially supported by the \textsl{Obra Social la Caixa}  through the ``Programa Pont La Caixa per a grups de recerca de la UIB''  and by the Spanish Ministry of Economy and Competitiveness and European Regional Development Fund through project DPI2015-67082-P (MINECO/FEDER).
 We thank K. Bartoszek for several useful suggestions and A. Salda\~na Plomer for making available to us his Java script that  generates random phylogenetic trees with a fixed number of leaves with uniform distribution.
 



\section*{Figures}

\begin{figure}[h!]
\begin{center}
\includegraphics[width=0.95\linewidth]{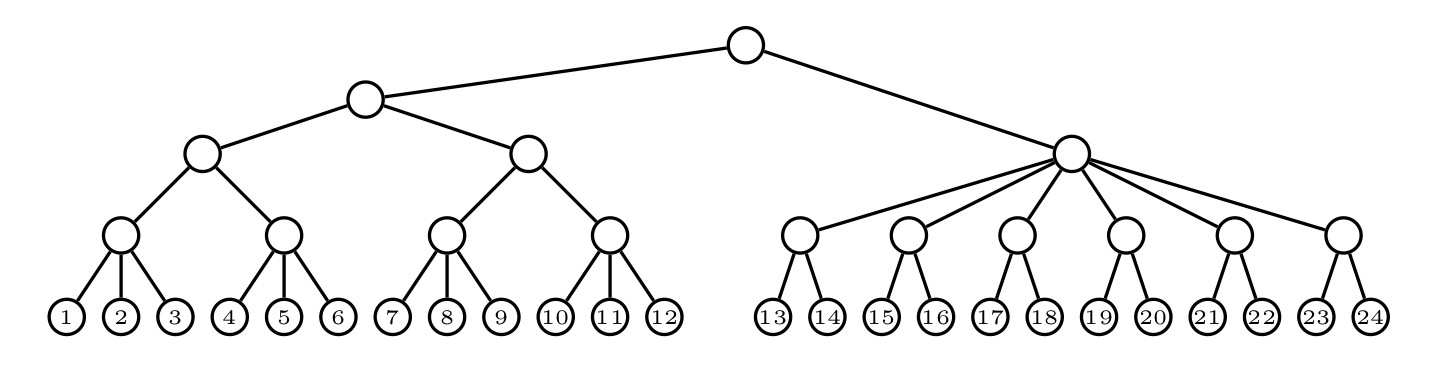}
\end{center}
\caption{\label{fig:nocoll1} Each node in this asymmetric tree has all its children with the same number of descendant leaves as well as with the same number of descendant nodes.}
\end{figure}

\begin{figure}[h!]
\begin{center}
\includegraphics[width=0.25\linewidth]{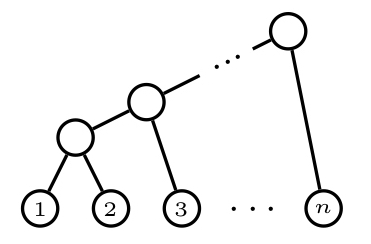}
\end{center}
\caption{\label{fig:exs1} 
A comb $K_n$ with $n$ leaves.}
\end{figure}

\begin{figure}[h!]
\begin{center}
\includegraphics[width=0.25\linewidth]{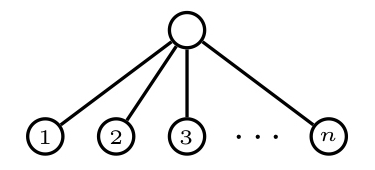}
\end{center}
\caption{\label{fig:exs2} 
A star  $\FS_n$ with $n$ leaves.}
\end{figure}

\begin{figure}[h!]
\begin{center}
\includegraphics[width=0.4\linewidth]{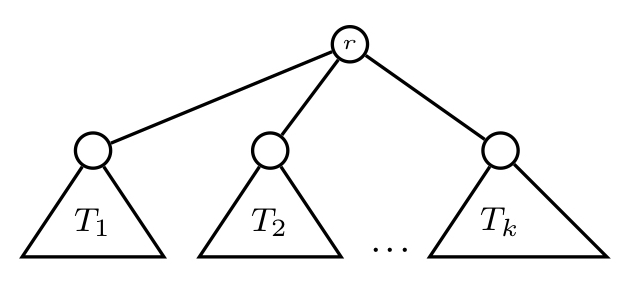}
\end{center}
\caption{\label{fig:star} The (phylogenetic) tree $T_1\star\cdots\star T_k$.}
\end{figure}

\begin{figure}[h!]
\begin{center}
\includegraphics[width=0.9\linewidth]{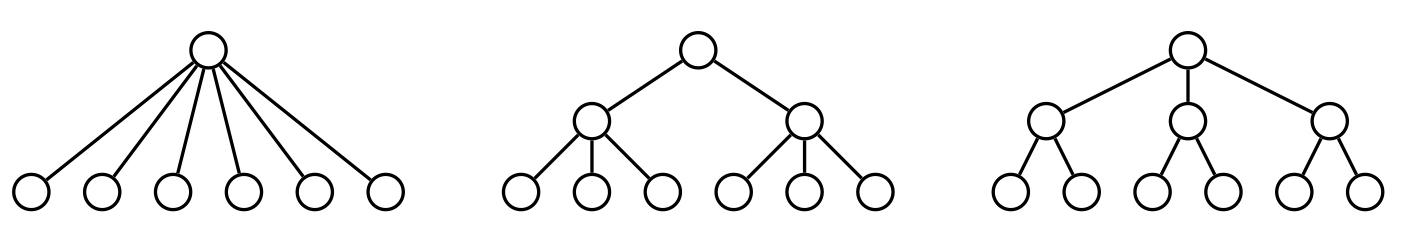}
\end{center}
\caption{\label{fig:1} 
Three fully symmetric  trees with 6 leaves: from left to right, $\FS_6$, $\FS_{2,3}$ and $\FS_{3,2}$.}
\end{figure}

\begin{figure}[h!]
\begin{center}
\includegraphics[width=0.8\linewidth]{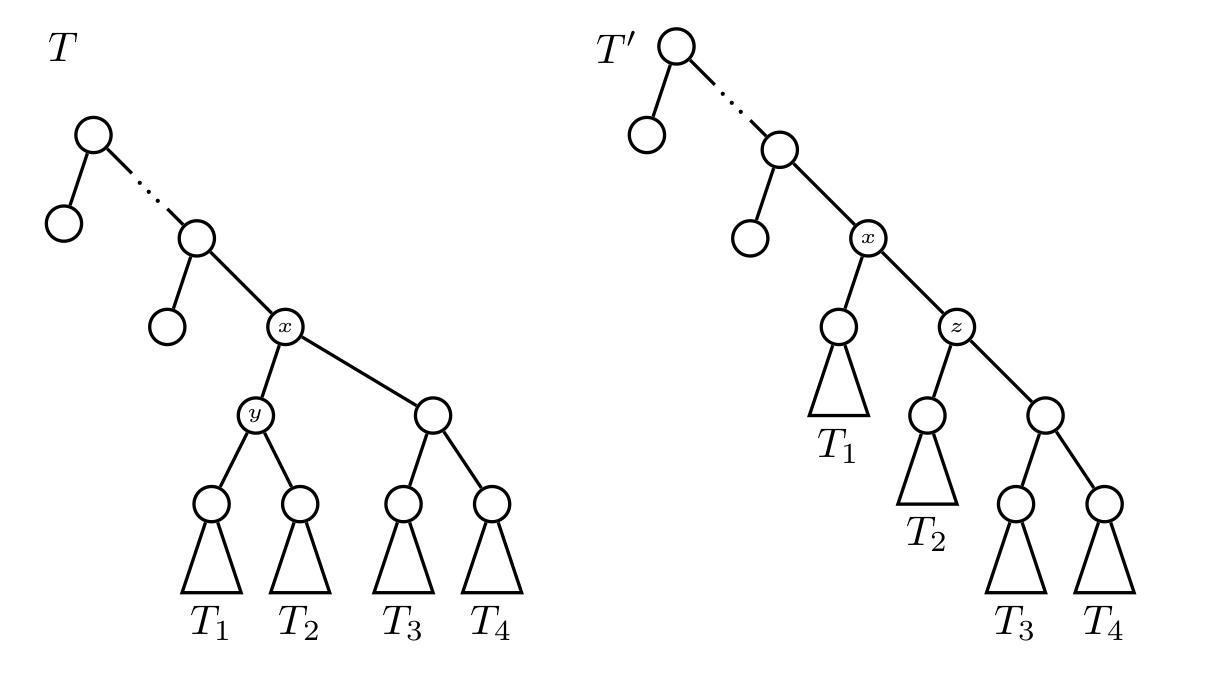}
\end{center}
\caption{\label{fig:cat1} The trees $T$ and $T'$ in the proof of Lemma \ref{lem:cat-C}.}
\end{figure}

\begin{figure}[h!]
\begin{center}
\includegraphics[width=0.25\linewidth]{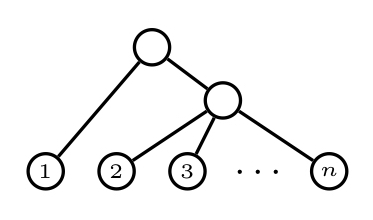}
\end{center}
\caption{\label{fig:maxexp1} 
The tree  $\FS_1\star \FS_{n-1}$.}
\end{figure}

\begin{figure}[h!]
\begin{center}   
\includegraphics[width=0.5\linewidth]{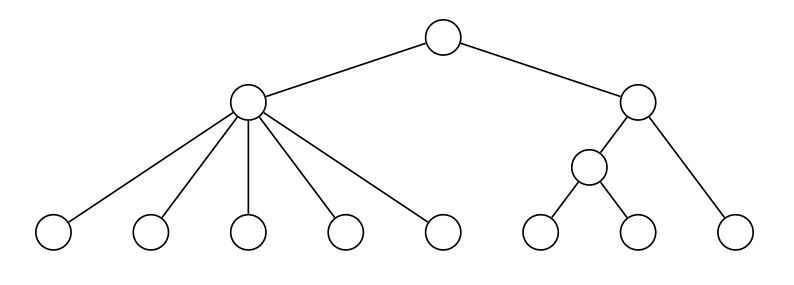}
\end{center}
\caption{A  tree with 8 leaves randomly generated under the $\alpha$-$\gamma$-model with $\alpha=0.7$ and $\gamma=0.4$. Its indices are $\mathfrak{C}(T)=1.746 $, $S(T)=18$, and $\Phi(T)=14$, and its normalized indices are $\mathfrak{C}_{norm}(T)=0.06518 $, $S_{norm}(T)=0.3704$, and $\Phi_{norm}(T)=0.25$.}
\label{tree1}
\end{figure} 

\begin{figure}[h!] 
\centering 
 \includegraphics[height=7cm]{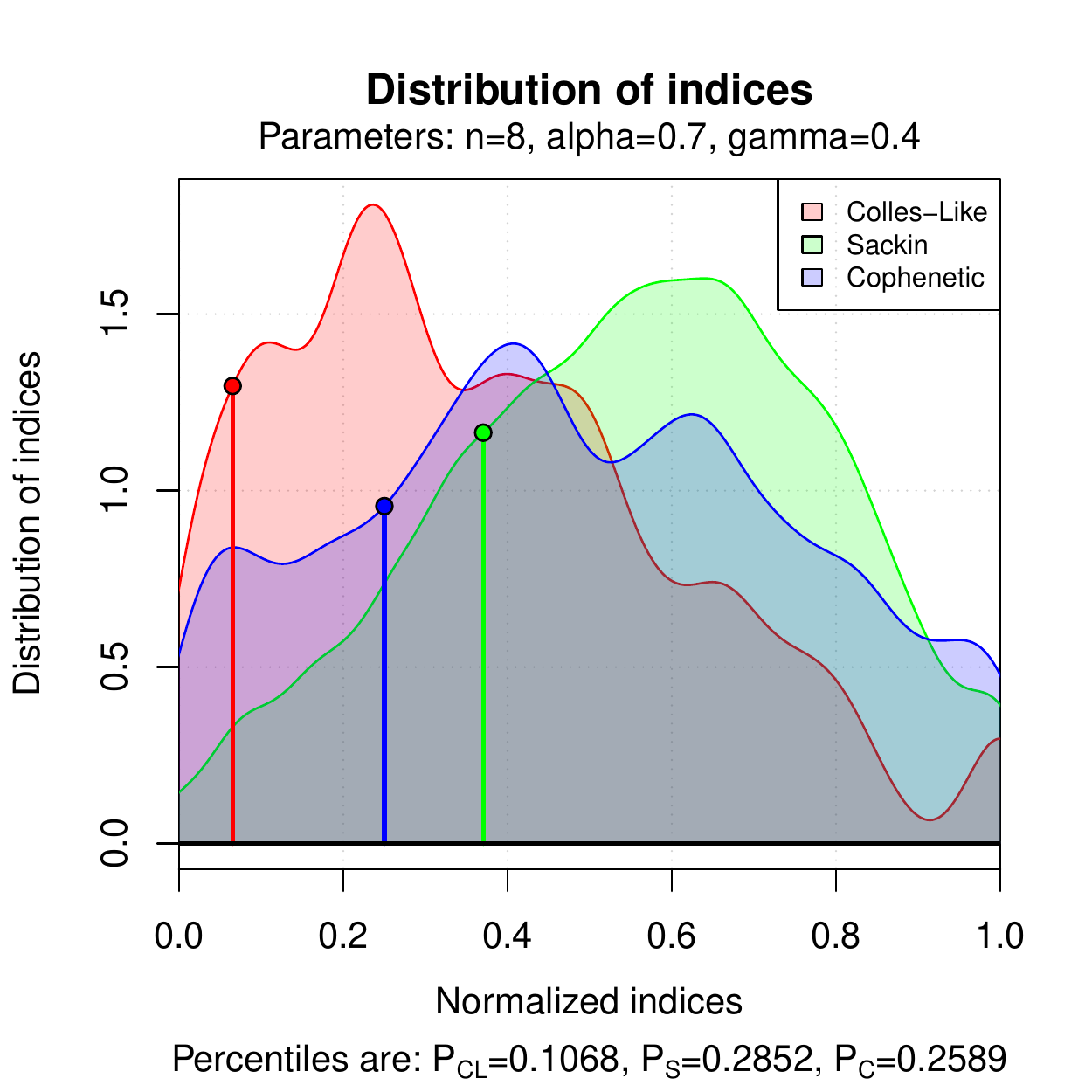}
 \caption{The estimated density function of the distribution of $\mathfrak{C}$, $S$ and $\Phi$ on $\TT_8$ under the $\alpha$-$\gamma$-model with $\alpha=0.7$ and $\gamma=0.4$. The percentiles of the tree in Fig.~\ref{tree1} are also represented. }
\label{dis1}
\end{figure} 

\begin{figure}[h!] 
\centering 
 \includegraphics[height=7cm]{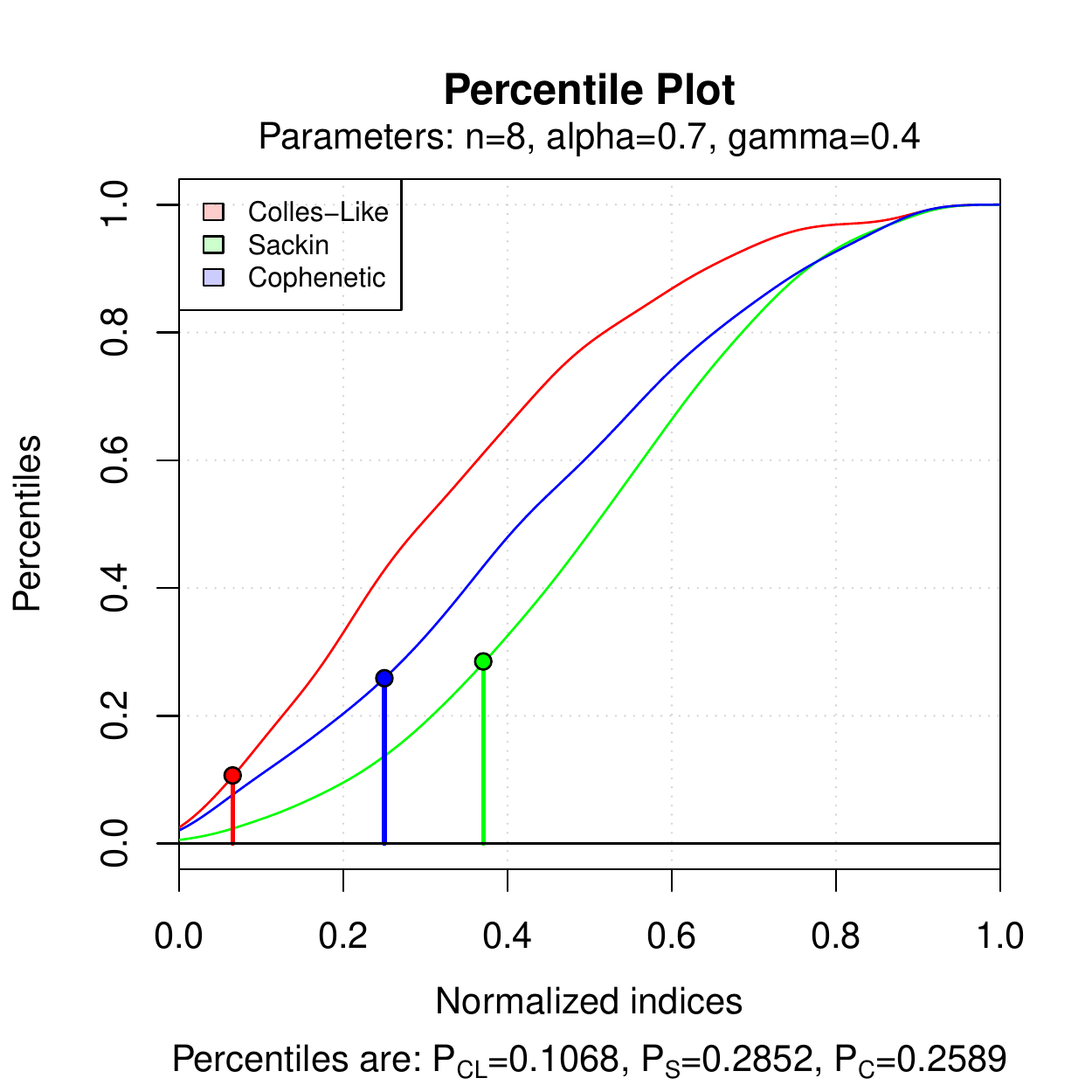}
 \caption{Percentile plot of the distribution of $\mathfrak{C}$, $S$ and $\Phi$ on $\TT_8$ under the $\alpha$-$\gamma$-model with $\alpha=0.7$ and $\gamma=0.4$. The percentiles of the tree of Fig.~\ref{tree1} are also highlighted.}
\label{per1}
\end{figure} 

\begin{figure}[h!]
\begin{center} 
\includegraphics[height=10cm]{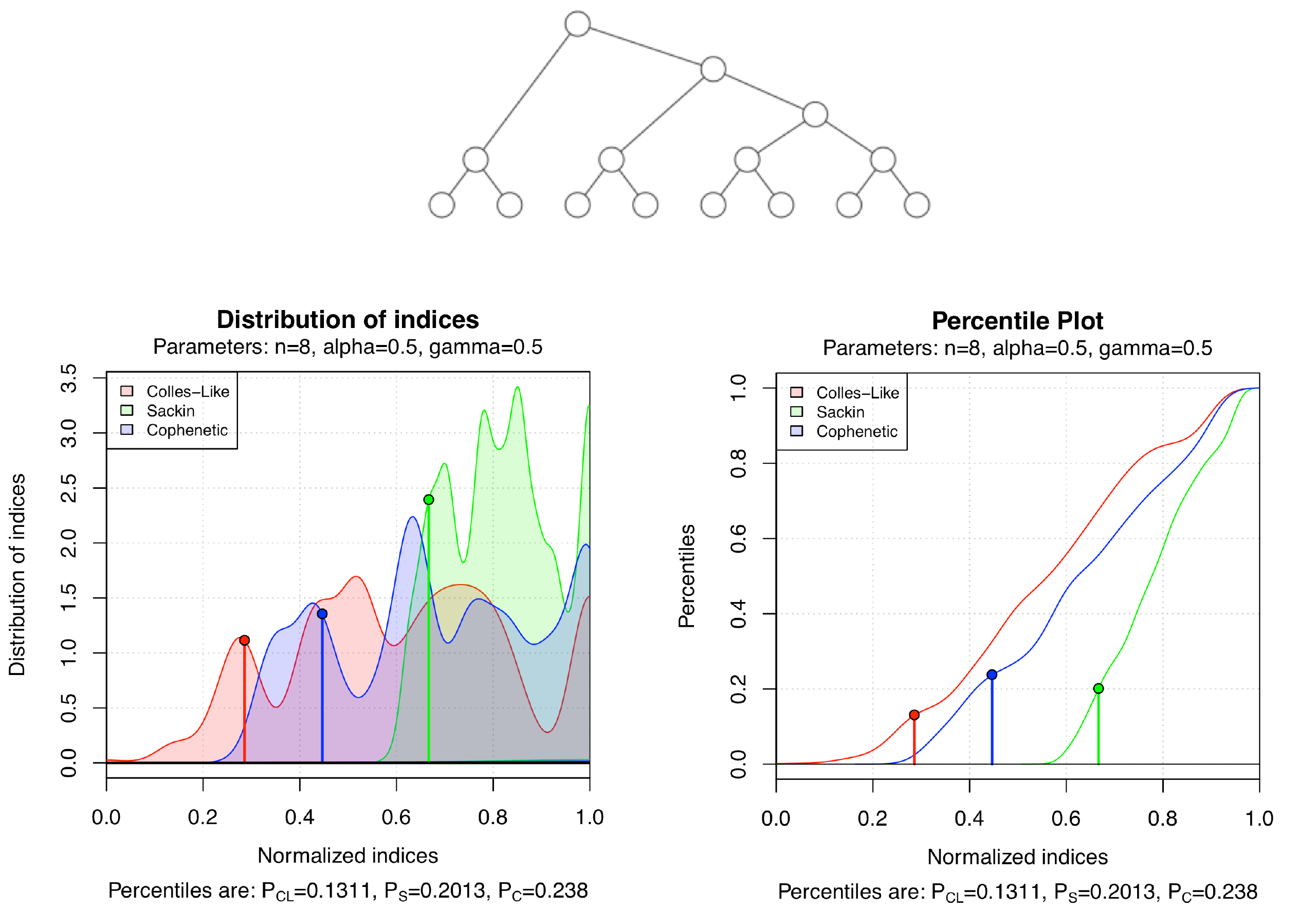}
 \caption{A bifurcating tree randomly generated under the uniform model, the estimated density function of the distribution of the three balance indices on $\TT_8$ under the uniform model, and their percentile plot. }
\label{treedis} 
\end{center}
\end{figure}

\begin{figure}[h!]
\begin{center}
 \begin{tabular}{ccc}
\includegraphics[height=4cm]{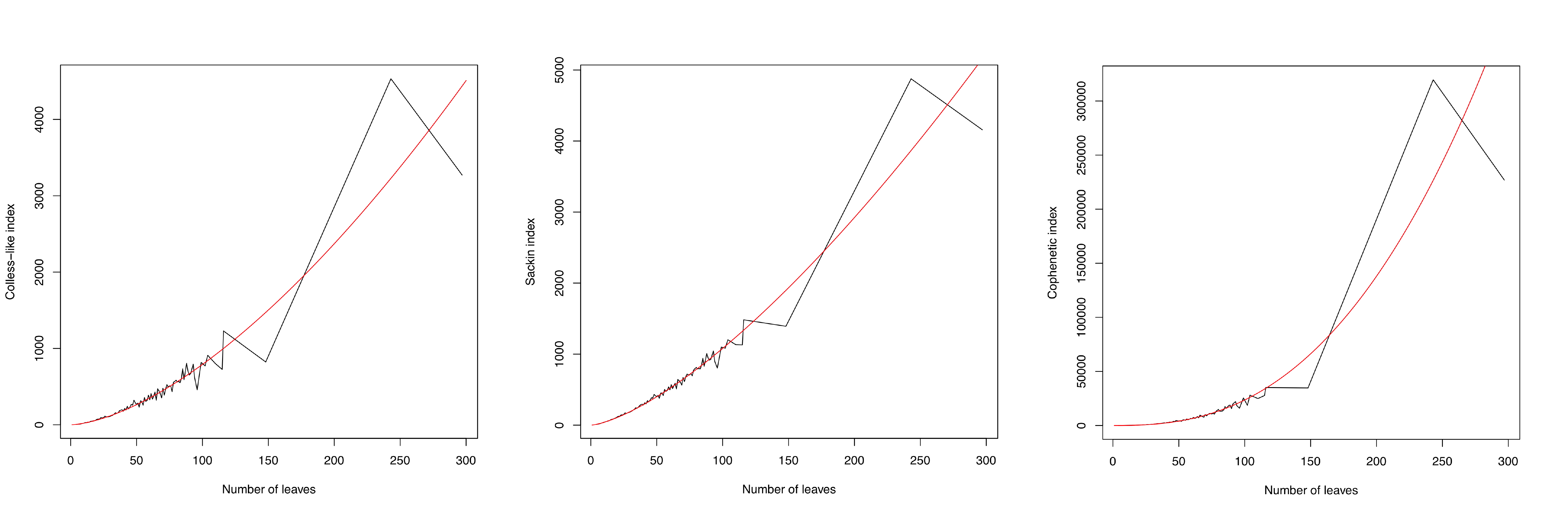}
 \end{tabular}
\end{center}
\caption{Growth of the mean value of $\mathfrak{C}$ (left),  $S$ (center), and $\Phi$ (right)   in TreeBASE, as functions of the trees' numbers of leaves $n$.}
\label{REL-INDEXS}
\end{figure}

\begin{figure}[h!]
 \begin{center}
\includegraphics[height=7cm]{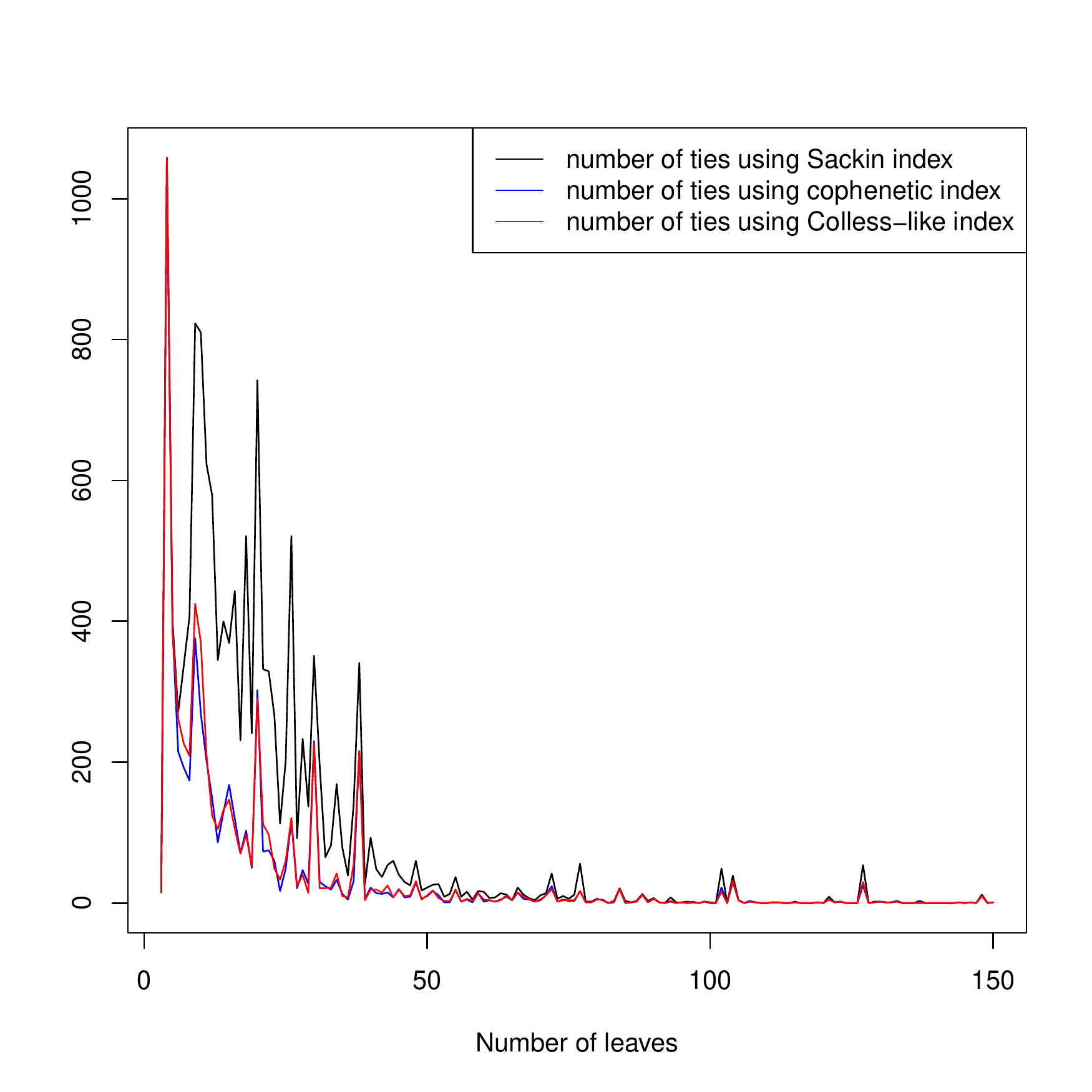}
 \end{center}
 \caption{Numbers of ties of $\mathfrak{C}$, $S$, and $\Phi$ in TreeBASE, as functions of the trees' numbers of leaves $n$.}
\label{TIES}
\end{figure}

\begin{figure}[h!]
 \begin{center}
 \begin{tabular}{cc}
\includegraphics[height=5cm]{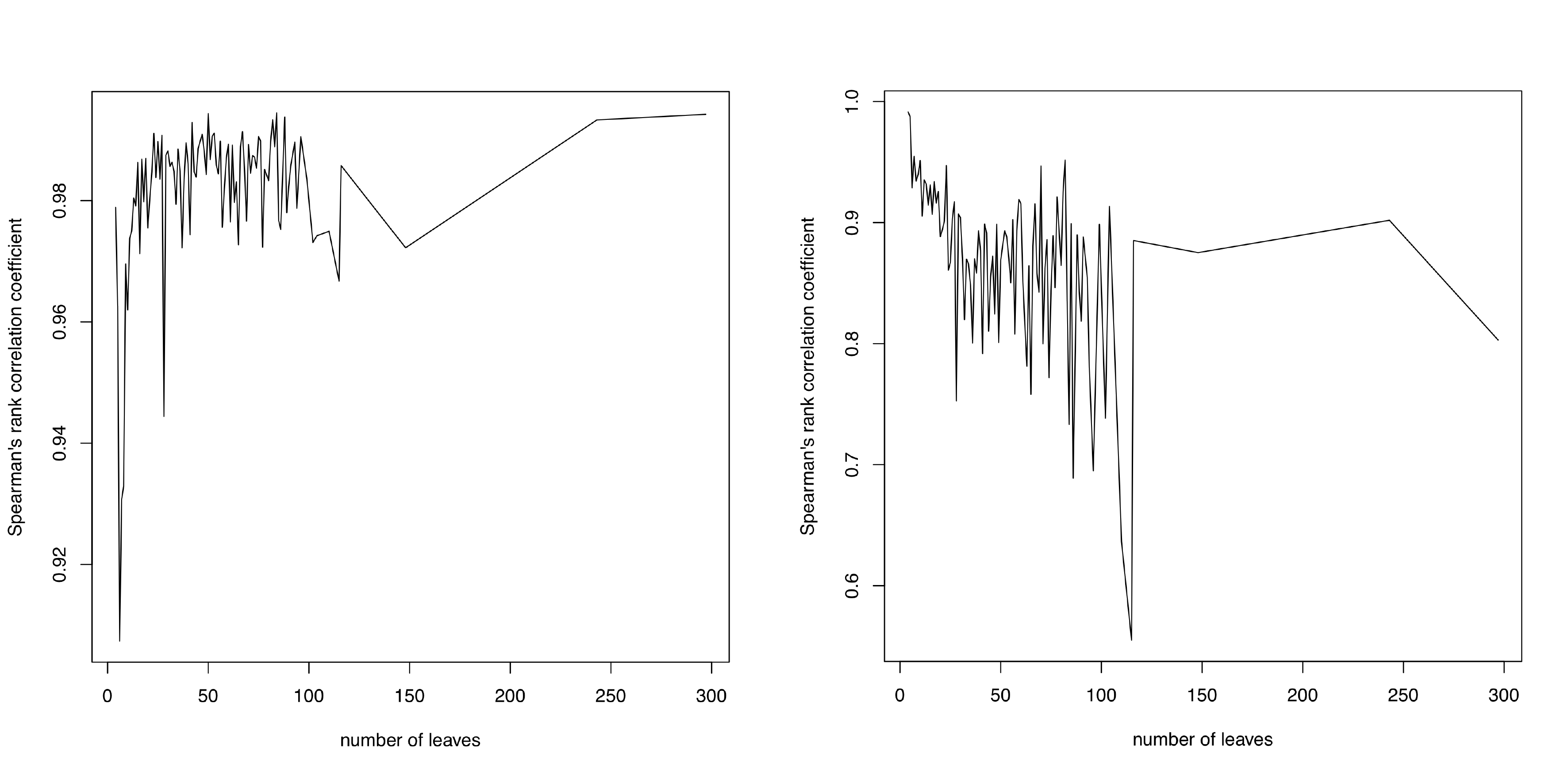} 
\end{tabular}
\end{center}
 \caption{Spearman's rank correlation coefficient of $\mathfrak{C}$ and $S$ (left)
and of $\mathfrak{C}$ and $\Phi$ (right) in TreeBASE, as functions of the trees' numbers of leaves $n$.}
\label{SPEARMAN}
\end{figure}

\begin{figure}[h!]
 \begin{center}
 \begin{tabular}{cc}
\includegraphics[height=7cm]{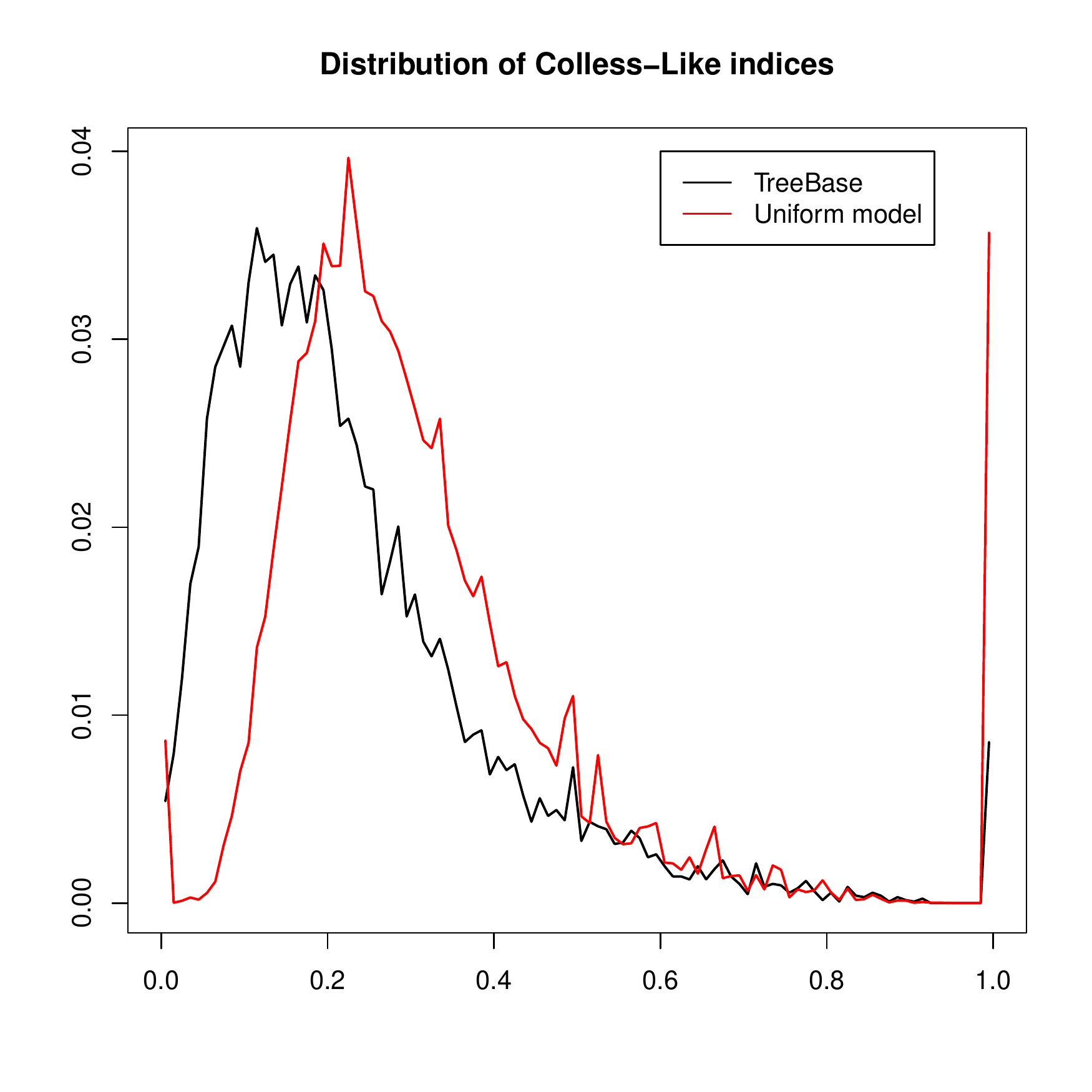}
\end{tabular}
\end{center}
 \caption{The distribution of $\mathfrak{C}_{norm}$ on all trees in TreeBASE (black line) and its estimated theoretical distribution under the uniform model (red line).}
\label{fig:distunif}
\end{figure}

\begin{figure}[h!]
 \begin{center}
 \begin{tabular}{cc}
\includegraphics[height=7cm]{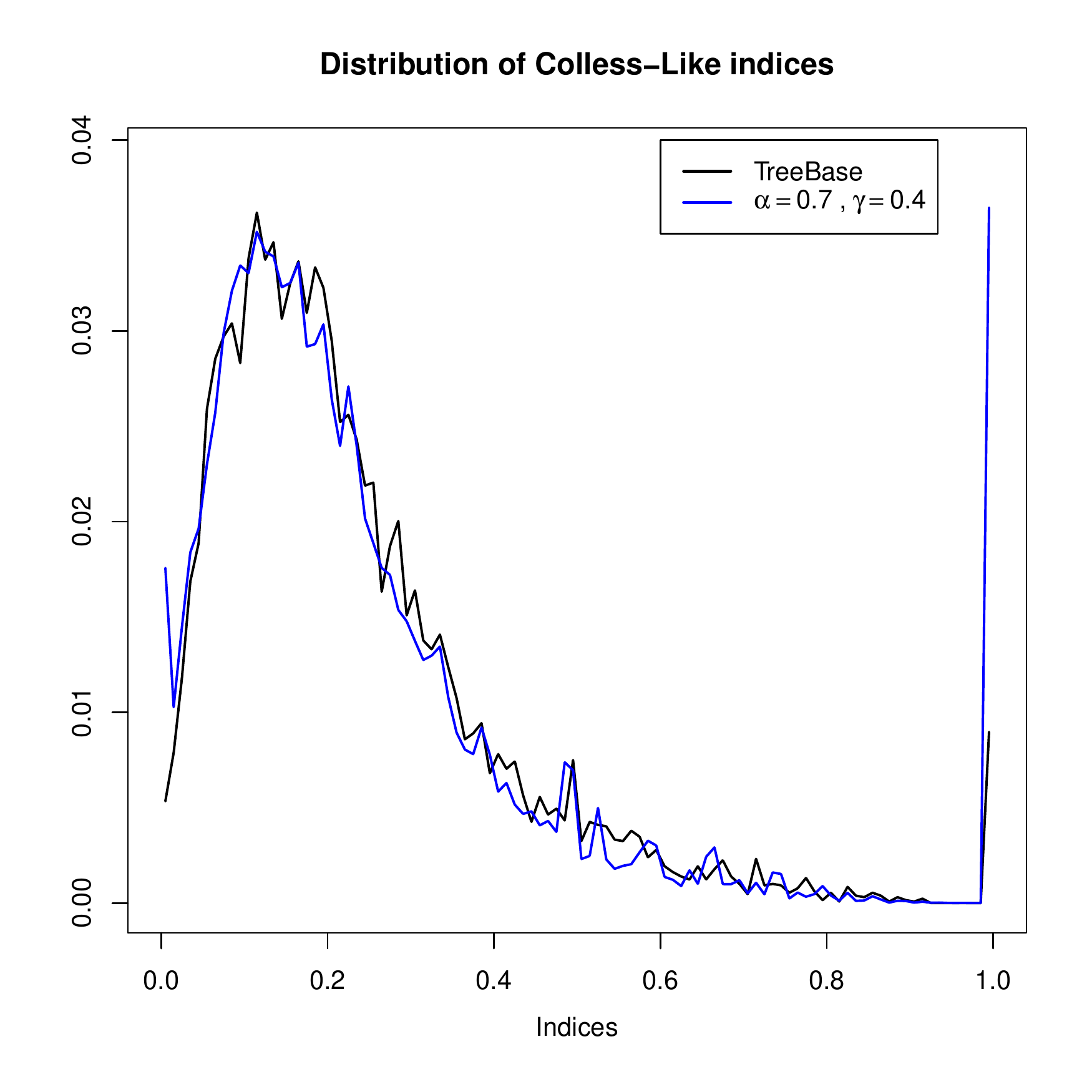}
\end{tabular}
\end{center}
 \caption{The distribution of $\mathfrak{C}_{norm}$ on all trees in TreeBASE (black line) and its estimated theoretical distribution  under the $\alpha$-$\gamma$-model with $\alpha=0.7$ and $\gamma=0.4$ (blue line).}
\label{fig:distag}
\end{figure}

\section*{Supplementary files}

\begin{center}
\Large Supplementary file S1: Proofs of Theorems 18 and 19
\end{center}

\noindent This supplementary document contains the proofs of Theorems 18 (Sections 1--3) and 19 (Sections 4--6) in the main text.

\section{Proof of the thesis of Theorem 18 for $\mathfrak{C}_{\MDM,f}$}\label{SM1}

Let in this section, and in the next two ones, $f:\NN\to \RR_{\geq 0}$ be any mapping such that $0<f(k)<f(k-1)+f(2)$ for every $k\geq 3$. Notice that if $f$ satisfies this condition then $f(k)>0$ not only  for every $k\geq 3$, but also for $k=2$, because
$0<f(3)< 2f(2)$. To simplify the notations, we shall denote in this section $\delta_f$ and $\mathfrak{C}_{\MDM,f}$ by $\delta$ and $\mathfrak{C}$, respectively, and we shall denote ${bal}_{\MDM,f}$ on a tree $T$ by ${bal}_T$ or simply by ${bal}$ when it is not necessary to specify the tree.

We split this proof  into several lemmas.

\begin{lemma}\label{lem:mdmodd}
For every $(x_1,\ldots,x_{2n+1})\in \RR^{2n+1}$ with $n\geq 1$, if $x_i$ is the median of $\{x_1,\ldots,x_n\}$,
then 
$$
\MDM (x_1,\ldots,x_{2n+1})\leq \MDM(x_1,\ldots,x_{i-1},x_{i+1},\ldots,x_{2n+1}).
$$
Moreover, this inequality is strict unless $x_1=\cdots=x_{2n+1}$.
\end{lemma}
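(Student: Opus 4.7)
The plan is to observe that removing the median from the odd-length sequence leaves the sum of absolute deviations unchanged, while strictly decreasing the number of terms, so the mean (which is the MDM) strictly increases unless that sum was already zero.

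First, I would set up notation by letting $x_{(1)}\leq x_{(2)}\leq\cdots\leq x_{(2n+1)}$ be the sorted sequence, so that $m:=x_i=x_{(n+1)}$ is the unique median. Since $|x_i-m|=0$, we get directly
$$\MDM(x_1,\ldots,x_{2n+1})=\frac{1}{2n+1}\sum_{j\neq i}|x_j-m|=\frac{S}{2n+1},$$
where $S:=\sum_{j\neq i}|x_j-m|$.

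Next, I would analyze the reduced sequence. Removing $x_i$ makes the two central order statistics $x_{(n)}$ and $x_{(n+2)}$, so the median $m'$ of the $2n$ remaining values lies in the closed interval $[x_{(n)},x_{(n+2)}]$, and so does the original $m=x_{(n+1)}$. The key observation is that the map $c\mapsto\sum_{j\neq i}|x_j-c|$ is constant on $[x_{(n)},x_{(n+2)}]$: for any such $c$ there are exactly $n$ of the remaining values lying at or below $c$ (namely $x_{(1)},\ldots,x_{(n)}$) and exactly $n$ lying at or above $c$ (namely $x_{(n+2)},\ldots,x_{(2n+1)}$), so expanding the absolute values gives
$$\sum_{j\neq i}|x_j-c|=\sum_{k=n+2}^{2n+1}x_{(k)}-\sum_{k=1}^{n}x_{(k)},$$
which does not depend on $c$. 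Applying this with $c=m$ and $c=m'$ yields $\sum_{j\neq i}|x_j-m'|=S$, so that $\MDM(x_1,\ldots,x_{i-1},x_{i+1},\ldots,x_{2n+1})=S/(2n)$.

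Finally, the inequality follows because $S/(2n+1)\leq S/(2n)$, and it is strict exactly when $S>0$. But $S=0$ forces every $x_j$ with $j\neq i$ to equal $m=x_i$, i.e., $x_1=\cdots=x_{2n+1}$. Conversely, if all values are equal, both MDMs are zero. This gives both the inequality and the strictness clause. There is no real obstacle in this argument; the only point requiring care is verifying that the sum of absolute deviations is constant over $[x_{(n)},x_{(n+2)}]$, which is essentially the standard fact that the median minimizes the sum of absolute deviations, handled cleanly by the above computation via order statistics.
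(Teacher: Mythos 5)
Your proof is correct and follows essentially the same route as the paper's: after sorting, both arguments reduce the two MDMs to the common quantity $\sum_{k=n+2}^{2n+1}x_{(k)}-\sum_{k=1}^{n}x_{(k)}$ divided by $2n+1$ and $2n$ respectively, and conclude with the same strictness criterion. Your framing via the constancy of the sum of absolute deviations on $[x_{(n)},x_{(n+2)}]$ is a slightly cleaner way to see why the two numerators agree, but the underlying computation is identical.
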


\begin{proof}
After rearranging $x_1,\ldots, x_{2n+1}$ if necessary, we   assume that $x_1\leq \cdots\leq\allowbreak x_{2n+1}$, in which case their median (i.e., their middle value) is $x_{n+1}$, and  the median of $x_1,\ldots,x_n,x_{n+2},\ldots, x_{2n+1}$ is $M=(x_n+x_{n+2})/2$. We want to prove that
$$
\MDM (x_1,\ldots,x_{2n+1})\leq\MDM (x_1,\ldots,x_n,x_{n+2},\ldots, x_{2n+1}).
$$
This inequality is true, because
$$
\begin{array}{l}
\displaystyle \MDM (x_1,\ldots,x_{2n+1})  = \frac{1}{2n+1}\sum\limits_{i=1}^{2n+1}|x_i-x_{n+1}|\\[1ex]
\displaystyle\qquad =
\frac{1}{2n+1}\Big(\sum\limits_{i=1}^{n}(x_{n+1}-x_i)+\sum\limits_{i=n+2}^{2n+1}(x_i-x_{n+1})\Big) 
= \frac{1}{2n+1}\Big(\sum\limits_{i=n+2}^{2n+1} x_i-\sum\limits_{i=1}^{n} x_i\Big)
\\[3ex]
\displaystyle\MDM (x_1,\ldots,x_n,x_{n+2},\ldots, x_{2n+1}) = \frac{1}{2n}\Big(\sum\limits_{i=1}^{n}|x_i-M|
+\sum\limits_{i=n+2}^{2n+1}|x_i-M|\Big)\\[1ex]
\displaystyle\qquad =\frac{1}{2n}\Big(\sum\limits_{i=1}^{n}(M-x_i)
+\sum\limits_{i=n+2}^{2n+1}(x_i-M)\Big)  = \frac{1}{2n}\Big(\sum\limits_{i=n+2}^{2n+1} x_i-\sum\limits_{i=1}^{n} x_i\Big),
\end{array}
$$
and, clearly,
$$
\frac{1}{2n+1}\Big(\sum\limits_{i=n+2}^{2n+1} x_i-\sum\limits_{i=1}^{n} x_i\Big)\leq \frac{1}{2n}\Big(\sum\limits_{i=n+2}^{2n+1} x_i-\sum\limits_{i=1}^{n} x_i\Big).
$$
Moreover, the inequality is strict unless 
$\sum\limits_{i=n+2}^{2n+1} x_i=\sum\limits_{i=1}^{n} x_i$,  which, under the assumption that $x_1\leq \cdots\leq x_{2n+1}$, is equivalent to  $x_1=\cdots=x_{2n+1}$.
\end{proof}

Unfortunately, the thesis of this lemma is false for vectors of numbers of even length. For instance,
consider the vector $(1,1,2,2)$: if we remove any single element, its $\MDM$ decreases.  But, providentially,
we can always increase the MDM of an even quantity of real numbers by removing \emph{two} of them.

\begin{lemma}\label{lem:mdmeven}
For every $(x_1,\ldots,x_{2n})\in \RR^{2n}$ with $n\geq 2$, if $x_i,x_j$, with $i<j$, are the middle values of 
$\{x_1,\ldots,x_{2n}\}$, then
$$
\MDM (x_1,\ldots,x_{2n})\leq \MDM(x_1,\ldots,x_{i-1},x_{i+1},\ldots,x_{j-1},x_{j+1},\ldots,x_{2n}).
$$
Moreover, this inequality is strict unless $\{x_1,\ldots,x_{2n}\}$ consists either of $2n$ copies of a single element or of $n$ copies of two different  elements.
\end{lemma}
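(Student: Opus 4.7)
The plan is to mimic closely the computation in Lemma \ref{lem:mdmodd}, but keeping both middle elements in play at the same time. After rearranging the $x_k$'s if necessary I will assume $x_1\leq x_2\leq\cdots\leq x_{2n}$, so that the two middle values are $x_n$ and $x_{n+1}$ and the median of the whole vector is $M=(x_n+x_{n+1})/2$. Splitting the sum into the lower half and the upper half, the same telescoping that appeared in the previous lemma gives
$$
\MDM(x_1,\ldots,x_{2n})=\frac{1}{2n}\Big(\sum_{i=n+1}^{2n}x_i-\sum_{i=1}^{n}x_i\Big).
$$
After deleting $x_n$ and $x_{n+1}$ the remaining vector has even length $2(n-1)$ with middle values $x_{n-1}$ and $x_{n+2}$, and the same identity applied to it yields
$$
\MDM(x_1,\ldots,x_{n-1},x_{n+2},\ldots,x_{2n})=\frac{1}{2(n-1)}\Big(\sum_{i=n+2}^{2n}x_i-\sum_{i=1}^{n-1}x_i\Big).
$$

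Next I would cross-multiply by $2n(n-1)>0$ and simplify. Setting $a=x_n$, $b=x_{n+1}$, $A=\sum_{i=1}^{n-1}x_i$ and $B=\sum_{i=n+2}^{2n}x_i$, the desired inequality reduces to
$$
(n-1)(b-a)\leq B-A.
$$
This is the heart of the argument, and it is an immediate consequence of the ordering: since $x_i\leq x_n=a$ for $i\leq n-1$ we have $A\leq (n-1)a$, and since $x_i\geq x_{n+1}=b$ for $i\geq n+2$ we have $B\geq (n-1)b$, so $B-A\geq (n-1)(b-a)\geq 0$.

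Finally, for the equality case, the chain of inequalities is tight exactly when $A=(n-1)a$ and $B=(n-1)b$, i.e.\ when $x_1=\cdots=x_{n-1}=x_n$ and $x_{n+1}=x_{n+2}=\cdots=x_{2n}$. If in addition $a=b$, then all $2n$ numbers coincide; otherwise, the multiset consists of $n$ copies of $a$ and $n$ copies of $b$ with $a<b$. This matches exactly the two exceptional configurations in the statement.

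The main subtlety (rather than a deep obstacle) is the one already pointed out before the statement: unlike in the odd case of Lemma \ref{lem:mdmodd}, removing only one central value can strictly decrease the $\MDM$, as the example $(1,1,2,2)$ shows. So it is essential to remove both middle values simultaneously; once this is done, the required inequality boils down to the termwise comparison described above, and no deeper tool is needed.
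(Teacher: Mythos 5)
Your proof is correct and follows essentially the same route as the paper's: sort the entries, use the telescoping identity for the $\MDM$ of both the full and the reduced vector, cross-multiply, and reduce to the elementary inequality $(n-1)(x_{n+1}-x_n)\leq \sum_{i=n+2}^{2n}x_i-\sum_{i=1}^{n-1}x_i$, which is the paper's final inequality $n(x_{n+1}-x_n)\leq\sum_{i=1}^{n}(x_{n+i}-x_{n+1-i})$ after cancelling the trivial $i=1$ term. Your equality analysis also recovers exactly the paper's exceptional configurations, so nothing is missing.
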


\begin{proof}
After rearranging $x_1,\ldots, x_{2n}$ if necessary, we  assume that $x_1\leq \cdots\leq x_{2n}$, so that their middle values are $x_{n},x_{n+1}$, and hence  their median is $M=(x_n+x_{n+1})/2$ and  the median of $x_1,\ldots,x_{n-1},x_{n+2},\ldots, x_{2n}$ is $M'=(x_{n-1}+x_{n+2})/2$.
We want to prove that 
$$
\MDM (x_1,\ldots,x_{2n})\leq\MDM (x_1,\ldots,x_{n-1},x_{n+2},\ldots, x_{2n}).
$$
And, indeed, 
$$
\begin{array}{l}
\MDM (x_1,\ldots,x_{2n})\leq\MDM (x_1,\ldots,x_{n-1},x_{n+2},\ldots, x_{2n})\\ \quad \Longleftrightarrow\displaystyle
\frac{1}{2n}\sum\limits_{i=1}^{2n}|x_i-M|\leq
\frac{1}{2n-2}\Big(\sum\limits_{i=1}^{n-1}|x_i-M'|+\sum\limits_{i=n+2}^{2n}|x_i-M'|\Big)\\ \quad \displaystyle\Longleftrightarrow
(2n-2)\Big(\sum\limits_{i=1}^{n}(M-x_i)+\hspace*{-1ex}\sum\limits_{i=n+1}^{2n}(x_i-M)\Big)\leq
2n\Big(\sum\limits_{i=1}^{n-1}(M'-x_i)+\hspace*{-1ex}\sum\limits_{i=n+2}^{2n}(x_i-M')\Big)\\ \quad \displaystyle\Longleftrightarrow
(n-1)\Big(\sum\limits_{i=n+1}^{2n}x_i-\sum\limits_{i=1}^{n}x_i\Big)\leq
n\Big(\sum\limits_{i=n+2}^{2n}x_i-\sum\limits_{i=1}^{n-1}x_i\Big)=n\Big(\sum\limits_{i=n+1}^{2n}x_i-\sum\limits_{i=1}^{n}x_i\Big)-n(x_{n+1}-x_n)\\ \quad \displaystyle\Longleftrightarrow
n(x_{n+1}-x_n)\leq \sum\limits_{i=n+1}^{2n}x_i-\sum\limits_{i=1}^{n}x_i
=\sum\limits_{i=1}^{n}(x_{n+i}-x_{n+1-i})
\end{array}
$$
and this last inequality is true because $x_{n+1}-x_n\leq x_{n+i}-x_{n+1-i}$ for every $i=1,\ldots,n$.
Moreover, the inequality is strict unless $x_{n+1}-x_n= x_{n+i}-x_{n+1-i}$ for every $i=1,\ldots,n$, that is, unless $x_1=\cdots=x_n$ and $x_{n+1}=\cdots=x_{2n}$.
\end{proof}

\begin{lemma}\label{lem:2mdm}
Let $f$ be a mapping $\NN\to \RR_{\geq 0}$ such that $f(k)>0$, for every $k\geq 2$, and
let $T$ be a tree of the form $T_1\star\cdots\star T_k$, with $k\geq 3$ (see Fig. 4 in the main text).
\begin{enumerate}[(a)]
\item If $k$ is an odd number and if $\delta(T_1)$ is the median of $\{\delta(T_1),\ldots,\delta(T_k)\}$, 
then the tree $T'=T_1\star(T_2\star \cdots \star T_k)$  (cf. Fig. \ref{fig:lem2mdm}) satisfies that $\mathfrak{C}(T')> \mathfrak{C}(T)$.

\item If $k$ is an even number and if $\delta(T_1), \delta(T_2)$ are the middle values of $\{\delta(T_1),\ldots,\delta(T_k)\}$,  with  $\delta(T_1)\leq \delta(T_2)$, then the tree  $T''=T_1\star(T_2\star (T_3\star \cdots \star T_k))$    (cf. Fig. \ref{fig:lem2mdm}) satisfies that $\mathfrak{C}(T'')> \mathfrak{C}(T)$.
\end{enumerate}
\end{lemma}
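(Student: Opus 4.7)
The plan is to apply the recursive formula for $\mathfrak{C}$, namely
$\mathfrak{C}(S_1\star\cdots\star S_m)=\sum_{i=1}^m \mathfrak{C}(S_i)+\MDM(\delta(S_1),\ldots,\delta(S_m))$,
to both $T$ and the modified tree ($T'$ or $T''$), and then cancel the common $\mathfrak{C}(T_i)$ contributions. Since $T$, $T'$ and $T''$ all contain the subtrees $T_1,\ldots,T_k$ intact, the difference $\mathfrak{C}(T')-\mathfrak{C}(T)$ (respectively $\mathfrak{C}(T'')-\mathfrak{C}(T)$) reduces to a combination of $\MDM$ values evaluated at the roots affected by the surgery. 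The goal is then to show that this combination is strictly positive, using Lemmas \ref{lem:mdmodd} and \ref{lem:mdmeven} to bound the ``main'' $\MDM$ change and the hypothesis $f(k)>0$ for $k\geq 2$ to rule out degenerate equality cases.

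For part (a), a direct unfolding gives
\[
\mathfrak{C}(T')-\mathfrak{C}(T)=\bigl[\MDM(\delta(T_2),\ldots,\delta(T_k))-\MDM(\delta(T_1),\ldots,\delta(T_k))\bigr]+\MDM\bigl(\delta(T_1),\delta(T_2\star\cdots\star T_k)\bigr).
\]
The bracketed difference is $\geq 0$ by Lemma \ref{lem:mdmodd}, since $\delta(T_1)$ is the median of an odd-length vector; the trailing $\MDM$ term is of course $\geq 0$. To get strict positivity, I would split on whether the Lemma \ref{lem:mdmodd} inequality is strict: if it is, we are done; if not, Lemma \ref{lem:mdmodd} forces $\delta(T_1)=\cdots=\delta(T_k)=d$, and then
$\delta(T_2\star\cdots\star T_k)=(k-1)d+f(k-1)$, which differs from $d$ by $(k-2)d+f(k-1)\geq f(k-1)>0$, so the trailing $\MDM$ term is strictly positive.

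For part (b) the analogous computation yields
\[
\mathfrak{C}(T'')-\mathfrak{C}(T)=\bigl[\MDM(\delta(T_3),\ldots,\delta(T_k))-\MDM(\delta(T_1),\ldots,\delta(T_k))\bigr]+A+B,
\]
with $A=\MDM(\delta(T_2),\delta(T_3\star\cdots\star T_k))\geq 0$ and $B=\MDM(\delta(T_1),\delta(T_2\star(T_3\star\cdots\star T_k)))\geq 0$. Lemma \ref{lem:mdmeven} makes the bracket $\geq 0$, with equality only in two cases: (i) all $\delta(T_i)$ coincide with some value $d$, or (ii) exactly $k/2$ of them equal some $a$ and the other $k/2$ equal some $b>a$ (so by the hypothesis $\delta(T_1)\leq \delta(T_2)$ we must have $\delta(T_1)=a$, $\delta(T_2)=b$). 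In case (i), $\delta(T_3\star\cdots\star T_k)=(k-2)d+f(k-2)$, so $A=\tfrac12|(3-k)d-f(k-2)|\geq \tfrac12 f(k-2)>0$ since $k\geq 4$. In case (ii), $\delta(T_3\star\cdots\star T_k)=(k/2-1)(a+b)+f(k-2)$, and a short case split ($k=4$ versus $k\geq 6$) shows that $A=\tfrac12|b-(k/2-1)(a+b)-f(k-2)|>0$ again because $f(k-2)>0$.

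The main obstacle is bookkeeping the equality cases of Lemma \ref{lem:mdmeven} carefully enough to ensure $A>0$ whenever the Lemma's inequality is tight; this is where the positivity hypothesis on $f$ (specifically $f(k-2)>0$, used with $k\geq 4$) does the real work, and one must keep track of the two $\MDM$ terms $A$ and $B$ since in the equal-halves case they play asymmetric roles. Everything else is a routine application of the recursion and the already-established Lemmas \ref{lem:mdmodd} and \ref{lem:mdmeven}.
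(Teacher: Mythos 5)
Your proof is correct and follows essentially the same route as the paper's: the same recursive decomposition of $\mathfrak{C}(T')-\mathfrak{C}(T)$ (resp. $\mathfrak{C}(T'')-\mathfrak{C}(T)$) into the change in $\MDM$ at the root, controlled by Lemmas \ref{lem:mdmodd} and \ref{lem:mdmeven}, plus the balance values of the newly created internal nodes. The only (harmless) difference is in how strict positivity is secured: the paper observes directly that one of the new balance terms is unconditionally at least $\tfrac12 f(k-1)>0$ in case (a) (since $\delta(T_1)$ is the median, some $\delta(T_i)\geq\delta(T_1)$ with $i\geq 2$) and at least $\tfrac12(f(k-2)+f(2))>0$ in case (b) (using $\delta(T_1)\leq\delta(T_2)$), whereas you case-split on the equality conditions of the two lemmas and verify positivity of a new balance term only in the degenerate cases; both arguments are sound.
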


\begin{figure}[htb]
\begin{center}
\begin{tikzpicture}[thick,>=stealth,scale=0.3]
\draw(0,0) node[tre] (z1) {}; 
\draw (z1)--(-2,-3)--(2,-3)--(z1);
\draw(0,-2) node  {\footnotesize $T_1$};
\draw(5,0) node[tre] (z2) {}; 
\draw (z2)--(3,-3)--(7,-3)--(z2);
\draw(5,-2) node  {\footnotesize $T_2$};
\draw(8,-2.8) node {.}; 
\draw(8.4,-2.8) node {.}; 
\draw(8.8,-2.8) node {.}; 
\draw(11.5,0) node[tre] (z3) {}; 
\draw (z3)--(9.5,-3)--(13.5,-3)--(z3);
\draw(11.5,-2) node  {\footnotesize $T_k$};
\draw(8,2) node[tre] (v) {}; \etq v
\draw(5,4) node[tre] (r) {}; \etq r
\draw (5,-4.5) node {$T'=T_1\star(T_2\cdots\star T_k)$};
\draw (r)--(z1);
\draw (r)--(v);
\draw (v)--(z2);
\draw (v)--(z3);
\end{tikzpicture}
\qquad
\begin{tikzpicture}[thick,>=stealth,scale=0.3]
\draw(0,0) node[tre] (z1) {}; 
\draw (z1)--(-2,-3)--(2,-3)--(z1);
\draw(0,-2) node  {\footnotesize $T_1$};
\draw(5,-2) node[tre] (z2) {}; 
\draw (z2)--(3,-5)--(7,-5)--(z2);
\draw(5,-4) node  {\footnotesize $T_2$};
\draw(10,-4) node[tre] (z3) {}; 
\draw (z3)--(8,-7)--(12,-7)--(z3);
\draw(10,-6) node  {\footnotesize $T_3$};

\draw(13,-6.8) node {.}; 
\draw(13.4,-6.8) node {.}; 
\draw(13.8,-6.8) node {.}; 

\draw(16.5,-4) node[tre] (zk) {}; 
\draw (zk)--(14.5,-7)--(18.5,-7)--(zk);
\draw(16.5,-6) node  {\footnotesize $T_k$};

\draw(2,2) node[tre] (r) {}; \etq r
\draw(7,0) node[tre] (x) {}; \etq x
\draw(12,-2) node[tre] (v) {}; \etq v
\draw (r)--(z1);
\draw (r)--(x);
\draw (x)--(z2);
\draw (x)--(v);
\draw (v)--(z3);
\draw (v)--(zk);
\draw (5,-8.5) node {$T'''= T_1\star(T_2\star(T_3\star\cdots\star T_k))$};
\end{tikzpicture}
\end{center}
\caption{\label{fig:lem2mdm} 
The trees $T'$ and $T''$ in Lemma \ref{lem:2mdm}.}
\end{figure}

\begin{proof}
Let $t_i=\delta(T_i)$, for every $i=1,\ldots,k$.

As to (a), the only nodes in $T$ or $T'$ with different ${bal}$ value in both trees are the roots and the new node $v$ in $T'$. Therefore,
$$
\begin{array}{l}
\mathfrak{C}(T')-\mathfrak{C}(T)  \displaystyle ={bal}_{T'}(v)+{bal}_{T'}(r)-{bal}_{T}(r)\\
 \displaystyle\qquad\quad=\MDM(t_2,\ldots,t_k)+\dfrac{1}{2}\Big|\sum\limits_{i=2}^k t_i+f(k-1)-t_1\Big|-\MDM(t_1,\ldots,t_k) \geq \dfrac{1}{2}\Big|\sum\limits_{i=2}^kt_i+f(k-1)-t_1\Big|> 0,
\end{array}
$$
where the first inequality is a consequence of Lemma \ref{lem:mdmodd} and the second inequality is strict because, since $t_1$ is the median of $\{t_1,\ldots,t_k\}$ and $k\geq 3$, there is some $i\geq 2$ such that $t_i\geq t_1$ and hence
$$
\Big|\sum\limits_{i=2}^kt_i+f(k-1)-t_1\Big|=\sum\limits_{i=2}^kt_i+f(k-1)-t_1\geq f(k-1)>0
$$

As far as (b) goes,  the only nodes in $T$ or $T''$ with different ${bal}$ value in both trees are the roots and the new nodes $x,v$ in $T''$. Therefore,
$$
\begin{array}{l}
\displaystyle\mathfrak{C}(T'')-\mathfrak{C}(T)  ={bal}_{T''}(v)+{bal}_{T''}(x)+{bal}_{T''}(r)-{bal}_{T}(r)\\
\displaystyle\quad =\MDM(t_3,\ldots,t_k)+\dfrac{1}{2}\Big|\sum\limits_{i=3}^kt_i+f(k-2)-t_2\Big|+\dfrac{1}{2}\Big|\sum\limits_{i=2}^kt_i+f(k-2)+f(2)-t_1\Big|-\MDM(t_1,\ldots,t_k)\\
\displaystyle\quad \geq \dfrac{1}{2}\Big(\Big|\sum\limits_{i=3}^kt_i+f(k-2)-t_2\Big|+\Big|\sum\limits_{i=2}^kt_i+f(k-2)+f(2)-t_1\Big|\Big)> 0,
\end{array}
$$
where the first inequality is a consequence of Lemma \ref{lem:mdmeven} and the second inequality is strict because, by assumption, $t_1\leq t_2$ and $k\geq 3$, and therefore 
$$
\Big|\sum\limits_{i=2}^kt_i+f(k-2)+f(2)-t_1\Big| = \sum\limits_{i=2}^kt_i+f(k-2)+f(2)-t_1
\geq f(k-2)+f(2)>0
$$
\end{proof}

\begin{lemma}\label{lem:3mdm}
Let $f$ be a mapping $\NN\to \RR_{\geq 0}$ such that  $f(2)>0$.
Consider the trees $T$ and $T'$ depicted in Fig. \ref{fig:lem3mdm}, where, in  both trees, all nodes in the path from $r$ to $x$ are binary, and  $T'$ is obtained from $T$ by simply interchanging the subtrees $T_l$ and $T_{l-1}$. If $\delta(T_l)< \delta(T_{l-1})$ and $\delta(T_l)\leq \delta(T_0)$, then $\mathfrak{C}(T')> \mathfrak{C}(T)$.
\end{lemma}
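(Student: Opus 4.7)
\medskip
\noindent\emph{Proof plan.}
The plan is to mimic the local-modification strategy of Lemma~\ref{lem:cat-C}, but adapted to the $\MDM$-based index on the binary path from $r$ to $x$. My first step is to localise the effect of the swap. Because $\delta$ is additive and $\delta(T_{l-1})+\delta(T_l)$ is preserved, for any node $v$ that is a common ancestor of both $T_{l-1}$ and $T_l$ the value $\delta(T_v)$ is identical in $T$ and $T'$. Consequently the only internal nodes whose $(\MDM,f)$-balance value can change are the binary nodes lying on the $r\pathgr x$ path strictly between the attachment point of $T_{l-1}$ and that of $T_l$, together with those two attachment points.

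The second step is to rewrite $\mathfrak{C}(T')-\mathfrak{C}(T)$ as a telescoping sum over those affected binary nodes $v_1,\dots,v_m$ (listed from shallowest to deepest). At each such $v_i$ with children $v_i^{\mathit{off}}$ (the off-path sibling) and $v_i^{\mathit{path}}$ (the path descendant leading to $x$), I would let $A_i=\delta(T_{v_i^{\mathit{off}}})$ and $B_i=\delta(T_{v_i^{\mathit{path}}})$. Since $v_i$ is binary,
$$
{bal}_{T}(v_i)=\tfrac12|A_i-B_i|,\qquad {bal}_{T'}(v_i)=\tfrac12|A_i-B_i\pm\Delta|,
$$
with $\Delta=\delta(T_{l-1})-\delta(T_l)>0$ and a sign depending on whether $v_i$ sits above or below the positions being swapped. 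After cancellation of the unchanged ancestor contributions, the whole difference $\mathfrak{C}(T')-\mathfrak{C}(T)$ collapses to a sum of $m$ expressions of the form $\tfrac12(|A_i-B_i\pm\Delta|-|A_i-B_i|)$.

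The third and decisive step is the sign analysis inside those absolute values. Here I would use the hypothesis $\delta(T_l)\leq\delta(T_0)$, combined with the positivity $f(2)>0$ (which holds because $0<f(3)<2f(2)$), to argue inductively along the path that at every affected node $v_i$ the off-path sibling $\delta$-size $A_i$ is strictly larger than the path-descendant $\delta$-size $B_i$ (in both $T$ and $T'$), because $A_i$ always inherits $\delta(T_0)$ plus an additional strictly positive contribution from the intervening $f(2)$'s and off-path subtrees. This consistent sign unfolds each absolute-value difference with the same orientation, turning each term in the sum into a non-negative quantity, and produces a strictly positive contribution at the shallowest affected node (where the shift from $T_l$ to $T_{l-1}$ on the path side makes the signed gap $A_i-B_i$ properly larger).

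The hard part will be this sign bookkeeping, because the off-path subtrees between $T_{l-1}$ and $T_l$ are arbitrary and the path accumulates many $f(2)$'s; one must propagate the bound $\delta(T_0)\geq\delta(T_l)$ carefully from the top of the path down to $x$, keeping track of signs and ensuring that no cancellation reverses them. Once that bound is propagated, all summands are manifestly non-negative and the sum is strictly positive, so $\mathfrak{C}(T')>\mathfrak{C}(T)$.
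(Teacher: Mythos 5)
Your first two steps are sound and match the paper's opening move: since $\delta(T_{l-1})+\delta(T_l)$ is preserved, the lowest common ancestor $y$ of the two swapped subtrees keeps its $\delta$-size, so only the balance values at the attachment points change (here $T_{l-1}$ and $T_l$ hang at \emph{adjacent} nodes $y$ and $x$, so there are exactly two affected nodes and no genuine telescoping). The gap is in your third step. Writing $t_i=\delta(T_i)$, the difference is
$$
\mathfrak{C}(T')-\mathfrak{C}(T)=\tfrac12\big(|t_{l-1}-t_0|+|t_{l-1}+t_0+f(2)-t_l|-|t_l-t_0|-|t_l+t_0+f(2)-t_{l-1}|\big),
$$
and the hypotheses $t_l<t_{l-1}$, $t_l\leq t_0$ do \emph{not} pin down the signs of $t_{l-1}-t_0$ or of $t_l+t_0+f(2)-t_{l-1}$. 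Your claim that the off-path sibling always dominates the path descendant (justified by saying it ``inherits $\delta(T_0)$'' --- but the off-path sibling is precisely the child that does \emph{not} contain $T_0$) is false in both readings: with $t_l=1$, $t_0=1$, $t_{l-1}=100$ the path descendant at $y$ is much smaller than $T_{l-1}$, while with $t_l=1$, $t_0=100$, $t_{l-1}=2$ it is much larger. So no single orientation unfolds all four absolute values.

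More importantly, your conclusion that ``all summands are manifestly non-negative'' fails. Take $t_l=1$, $t_0=5$, $t_{l-1}=6$: the contribution of node $x$ is $\tfrac12(|t_{l-1}-t_0|-|t_l-t_0|)=\tfrac12(1-4)<0$. The inequality still holds, but only because the contribution of node $y$ overcompensates; establishing this requires the case split that the paper carries out ($t_l<t_{l-1}\leq t_0$ versus $t_l\leq t_0\leq t_{l-1}$, the latter subdivided again according to the sign of $t_l+t_0+f(2)-t_{l-1}$), and in the delicate subcase one needs the bound $3t_{l-1}-2t_0-t_l\geq t_{l-1}-t_l>0$, which uses $t_0\leq t_{l-1}$. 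A term-by-term positivity argument of the kind you propose cannot work, so the decisive step of your plan needs to be replaced by this explicit case analysis.
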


\begin{figure}[htb]
\begin{center}
\begin{tikzpicture}[thick,>=stealth,scale=0.25]
\draw(8.5,10.5) node[tre] (r) {}; \etq r
\draw(7.5,7.5) node [tre] (v1) {}; 
\draw(11.5,7.5) node[tre] (x2) {}; 
\draw(10.5,4.5) node [tre] (v2) {}; 
\draw (12.8,6.2)--(x2);
\draw (13.7,5.3) node {.};
\draw (13.4,5.6) node {.};
\draw (13.1,5.9) node {.};
\draw(15,4) node[tre] (xl1) {}; 
\draw (xl1) node {\tiny $y$};  
\draw(14,1) node [tre] (vl1) {}; 
\draw (xl1)--(14,5);
\draw(18,1) node[tre] (xl) {}; 
\draw (xl) node {\tiny $x$};  
\draw(17,-2) node [tre] (vl) {}; 
\draw(20,-2) node[tre] (z) {}; 
\draw (z)--(19,-5)--(21,-5)--(z);
\draw(20,-6) node  {$T_0$};
\draw (vl)--(16,-5)--(18,-5)--(vl);
\draw(17,-6) node  {$T_l$};
\draw (vl1)--(13,-2)--(15,-2)--(vl1);
\draw(14,-3) node  {$T_{l-1}$};
\draw (v2)--(9.5,1.5)--(11.5,1.5)--(v2);
\draw(10.5,0.5) node  {$T_2$};
\draw (v1)--(6.5,4.5)--(8.5,4.5)--(v1);
\draw(7.5,3.5) node  {$T_1$};

\draw  (r)--(x2);
\draw  (r)--(v1);
\draw  (x2)--(v2);
\draw  (xl1)--(vl1);
\draw  (xl1)--(xl);
\draw  (xl)--(vl);
\draw  (xl)--(z);
\draw(10.5,10.5) node  {$T$};
\end{tikzpicture}
\qquad
\begin{tikzpicture}[thick,>=stealth,scale=0.25]
\draw(8.5,10.5) node[tre] (r) {}; \etq r
\draw(7.5,7.5) node [tre] (v1) {}; 
\draw(11.5,7.5) node[tre] (x2) {}; 
\draw(10.5,4.5) node [tre] (v2) {}; 
\draw (12.8,6.2)--(x2);
\draw (13.7,5.3) node {.};
\draw (13.4,5.6) node {.};
\draw (13.1,5.9) node {.};
\draw(15,4) node[tre] (xl1) {}; 
\draw (xl1) node {\tiny $y$};  
\draw(14,1) node [tre] (vl1) {}; 
\draw (xl1)--(14,5);
\draw(18,1) node[tre] (xl) {}; 
\draw (xl) node {\tiny $x$};  
\draw(17,-2) node [tre] (vl) {}; 
\draw(20,-2) node[tre] (z) {}; 
\draw (z)--(19,-5)--(21,-5)--(z);
\draw(20,-6) node  {$T_0$};
\draw (vl)--(16,-5)--(18,-5)--(vl);
\draw(17,-6) node  {$T_{l-1}$};
\draw (vl1)--(13,-2)--(15,-2)--(vl1);
\draw(14,-3) node  {$T_{l}$};
\draw (v2)--(9.5,1.5)--(11.5,1.5)--(v2);
\draw(10.5,0.5) node  {$T_2$};
\draw (v1)--(6.5,4.5)--(8.5,4.5)--(v1);
\draw(7.5,3.5) node  {$T_1$};

\draw  (r)--(x2);
\draw  (r)--(v1);
\draw  (x2)--(v2);
\draw  (xl1)--(vl1);
\draw  (xl1)--(xl);
\draw  (xl)--(vl);
\draw  (xl)--(z);
\draw(10.5,10.5) node  {$T'$};
\end{tikzpicture}

\end{center}
\caption{\label{fig:lem3mdm} The trees $T$ and $T'$ in Lemma \ref{lem:3mdm}.}
\end{figure}
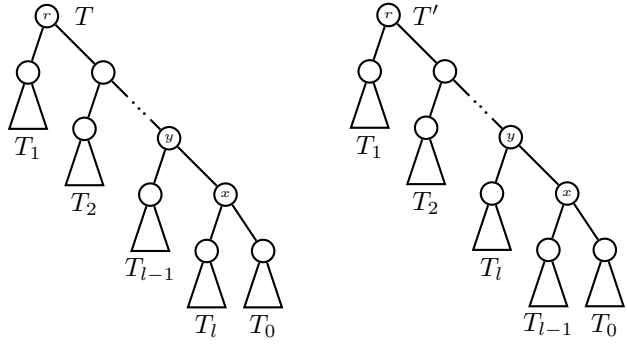

\begin{proof}
Let $t_i=\delta(T_i)$, for every $i=0,\ldots,l$, so that $t_l<t_{l-1}$ and $t_l\leq t_0$.
Since $\delta(T_y)=\delta(T'_y)$, the only nodes in $T$ or $T'$ with different ${bal}$ value in both trees are $x$ and $y$, and therefore,
$$
\begin{array}{l}
\hspace*{-1ex}\mathfrak{C}(T')-\mathfrak{C}(T) ={bal}_{T'}(x)+{bal}_{T'}(y)-{bal}_{T}(x)-{bal}_{T}(y)\\[1ex]
\ =\dfrac{1}{2}|t_{l-1}-t_0|+\dfrac{1}{2}|t_{l-1}+t_0+f(2)-t_l|-\dfrac{1}{2}|t_{l}-t_0|-\dfrac{1}{2}|t_{l}+t_0+f(2)-t_{l-1}|=(*)
\\[1ex]
\end{array}
$$
Now we must distinguish two cases:
\begin{itemize}
\item If $t_l < t_{l-1}\leq t_0$, then
$$
\hspace*{-1ex}\begin{array}{rl}
(*) \hspace*{-1.8ex} & =\dfrac{1}{2}\big((t_0-t_{l-1})+(t_{l-1}+t_0+f(2)-t_l)-(t_0-t_l)-(t_{l}+t_0+f(2)-t_{l-1})\big)\\[2ex]
 & =\dfrac{1}{2}(t_{l-1}-t_l)>0
 \end{array}
$$
\item If $t_l \leq t_0 \leq t_{l-1}$ and $t_l<t_{l-1}$, then
$$
\hspace*{-1ex}\begin{array}{rl}
(*)  \hspace*{-1.8ex}& =\dfrac{1}{2}\big((t_{l-1}-t_0)+(t_{l-1}+t_0+f(2)-t_l)-(t_0-t_l)-|t_{l}+t_0+f(2)-t_{l-1}|\big)\\[2ex]
 & =  \hspace*{-0.5ex}\left\{\begin{array}{l}
 \hspace*{-0.5ex}\dfrac{1}{2}\big(2t_{l-1}-t_0+f(2)-(t_{l}+t_0+f(2)-t_{l-1})\big) = \dfrac{1}{2}(3t_{l-1}-2t_0-t_{l}) \geq \dfrac{1}{2}(t_{l-1}-t_{l})  > 0\\[2ex]
 \qquad\mbox{(if $t_{l}+t_0+f(2)-t_{l-1}\geq 0$)}\\[2ex]
 \hspace*{-0.5ex} \dfrac{1}{2}\big(2t_{l-1}-t_0+f(2)-(t_{l-1}-t_{l}-t_0-f(2))\big) = \dfrac{1}{2}(t_{l-1}+t_{l}+2f(2)) > 0\\[2ex]
 \qquad\mbox{(if $t_{l}+t_0+f(2)-t_{l-1}\leq 0$)}
\end{array}\right.
\end{array}
$$
\end{itemize}
and therefore, in all cases,  $\mathfrak{C}(T')-\mathfrak{C}(T)>0$, as we claimed.
\end{proof}

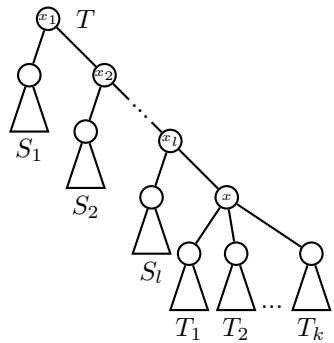
\begin{figure}[htb]
\begin{center}
\begin{tikzpicture}[thick,>=stealth,scale=0.25]
\draw(8.5,10.5) node[tre] (r) {}; 

\draw(7.5,7.5) node [tre] (v1) {}; 
\draw(11.5,7.5) node[tre] (x2) {}; 
\draw (r) node {\tiny $x_1$};  
\draw (x2) node {\tiny $x_2$};  
\draw(10.5,4.5) node [tre] (v2) {}; 
\draw (12.8,6.2)--(x2);
\draw (13.7,5.3) node {.};
\draw (13.4,5.6) node {.};
\draw (13.1,5.9) node {.};
\draw(15,4) node[tre] (xl1) {}; 
\draw (xl1) node {\tiny $x_l$};  
\draw(14,1) node [tre] (vl1) {}; 
\draw (xl1)--(14,5);

\draw(18,1) node[tre] (xl) {}; 
\draw (xl) node {\tiny $x$};  

\draw(16,-2) node [tre] (t1) {}; 
\draw (t1)--(15,-5)--(17,-5)--(t1);
\draw(16,-6) node  {$T_1$};
\draw(18.5,-2) node[tre] (t2) {}; 
\draw (t2)--(17.5,-5)--(19.5,-5)--(t2);
\draw(18.5,-6) node  {$T_2$};
\draw(22.5,-2) node[tre] (tk) {}; 
\draw (tk)--(21.5,-5)--(23.5,-5)--(tk);
\draw(22.5,-6) node  {$T_k$};
\draw(20,-4.8) node {.}; 
\draw(20.4,-4.8) node {.}; 
\draw(20.8,-4.8) node {.}; 

\draw (vl1)--(13,-2)--(15,-2)--(vl1);
\draw(14,-3) node  {$S_l$};
\draw (v2)--(9.5,1.5)--(11.5,1.5)--(v2);
\draw(10.5,0.5) node  {$S_2$};
\draw (v1)--(6.5,4.5)--(8.5,4.5)--(v1);
\draw(7.5,3.5) node  {$S_1$};

\draw  (r)--(x2);
\draw  (r)--(v1);
\draw  (x2)--(v2);
\draw  (xl1)--(vl1);
\draw  (xl1)--(xl);
\draw  (xl)--(t1);
\draw  (xl)--(t2);
\draw  (xl)--(tk);
\draw(10.5,10.5) node  {$T$};
\end{tikzpicture}
\end{center}
\caption{\label{fig:lem4mdm1} The tree $T$ in Lemma \ref{lem:4mdm}.}
\end{figure}

\begin{lemma}\label{lem:4mdm}
Let $f$ be a mapping $\NN\to \RR_{\geq 0}$ such that $0<f(k)< f(k-1)+f(2)$, for every $k\geq 3$, and let $T$ be the tree depicted in Fig. \ref{fig:lem4mdm1}, where $l\geq 1$, $x_1$ is the root, all nodes in the path from $x_1$ to $x_l$ are binary,  and $k\geq 3$.
Assume moreover that $\delta(S_1)\leq \delta(S_2)\leq \cdots \leq \delta(S_l)$.

\begin{enumerate}[(a)]
\item Assume that $k$ is odd and that  $\delta(T_1)$ is the median of 
$\{\delta(T_1),\ldots,\delta(T_k)\}$.

\begin{itemize}
\item[(a.1)] If $\delta(S_l)\leq \delta(T_1)$, then the tree $T'$ depicted in Fig. \ref{fig:lem4mdm}, obtained by pruning the subtree $T_1$ and regrafting it in the arc ending in $x$, satisfies that $\mathfrak{C}(T')> \mathfrak{C}(T)$.

\item[(a.2)] If $\delta(S_l)> \delta(T_1)$, then the tree $T''$ depicted in Fig. \ref{fig:lem4mdm}, obtained by pruning the subtree $T_1$ and regrafting it in the arc ending in $x_l$, satisfies that $\mathfrak{C}(T'')> \mathfrak{C}(T)$.
\end{itemize}

\item Assume that $k$ is even and that $\delta(T_1),\delta(T_2)$ are the middle values of 
$\{\delta(T_1),\ldots,\delta(T_k)\}$.

\begin{itemize}
\item[(b.1)] If $\delta(S_l)\leq \delta(T_1\star T_2)$, then the tree $T'$ depicted in Fig. \ref{fig:lem41mdm}, obtained by pruning the subtrees $T_1$ and $T_2$ and then inserting  $T_1\star T_2$ in the arc ending in $x$, satisfies that $\mathfrak{C}(T')> \mathfrak{C}(T)$.

\item[(b.2)] If $\delta(S_l)> \delta(T_1\star T_2)$, then the tree $T''$ depicted in Fig. \ref{fig:lem41mdm}, obtained by pruning the subtrees $T_1$ and $T_2$ and then inserting $T_1\star T_2$ in the arc ending in $x_l$, satisfies that $\mathfrak{C}(T'')> \mathfrak{C}(T)$.
\end{itemize}\end{enumerate}
\end{lemma}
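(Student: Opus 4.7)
The plan is to compute $\mathfrak{C}(T')-\mathfrak{C}(T)$ (and analogously $\mathfrak{C}(T'')-\mathfrak{C}(T)$) by tracking which nodes' balance values change under the prescribed prune-and-regraft, decomposing the change into a local contribution at the modified nodes near $x$ and a propagated shift along the binary spine $x_1\to\cdots\to x_l$. Write $t_i=\delta(T_i)$ and $s_i=\delta(S_i)$, and let $c=f(k-1)+f(2)-f(k)$ in cases (a) and $c=f(k-2)+2f(2)-f(k)$ in cases (b); both are strictly positive by iterating the hypothesis $f(k)<f(k-1)+f(2)$. A direct computation shows that $c$ is precisely the increase in $\delta$ of the entire non-$S_l$ branch at $x_l$ after the regraft in cases (a.1)/(b.1), and the increase in $\delta$ of the non-$S_{l-1}$ branch at $x_{l-1}$ after the regraft in cases (a.2)/(b.2).

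For (a.1), the local change at $x$ is exactly the transformation of Lemma \ref{lem:2mdm}(a), which already supplies a strictly positive local contribution. Along the spine, each ancestor $x_i$ of $x$ (with the convention $x_{l+1}=x$) has balance value $\tfrac12|s_i-\delta(T_{x_{i+1}})|$, and the argument inside the absolute value grows by exactly $c$ after the regraft. The only nontrivial verification is $\delta(T_{x_{i+1}})>s_i$ for every $i\leq l$: for $i<l$ this follows from $\delta(T_{x_{i+1}})>s_{i+1}\geq s_i$, and for $i=l$ from $s_l\leq t_1<\delta(T_x)$. Hence each spine balance value grows by exactly $c/2>0$, so the total change is strictly positive. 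Case (b.1) is entirely analogous: the same spine argument goes through using $s_l\leq\delta(T_1\star T_2)<\delta(T_x)$, and the local change (involving the two new binary nodes $u,v$ and the reduced $\MDM$ at $x$) is nonnegative by Lemma \ref{lem:mdmeven} together with nonnegativity of the two new binary balance values; in the degenerate sub-cases where the local change vanishes, the spine shift $c/2>0$ alone supplies strict positivity.

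For (a.2) and (b.2), the key observation is that $T''$ is isomorphic to the tree obtained from $T$ by first applying the transformation of (a.1) (resp.\ (b.1)) to reach an intermediate tree $T'$, and then swapping the newly attached subtree $T_1$ (resp.\ $T_1\star T_2$) with $S_l$ across the corresponding binary edge. The first step strictly increases $\mathfrak{C}$ by the already-proved (a.1)/(b.1). The second step fits the pattern of Lemma \ref{lem:3mdm}, whose main hypothesis ``deeper swapped subtree is smaller than the shallower one'' is exactly our assumption $s_l>t_1$ (resp.\ $s_l>\delta(T_1\star T_2)$). In (a.2) the auxiliary inequality $t_1\leq\delta(T'_x)=t_2+\cdots+t_k+f(k-1)$ is immediate because, since $t_1$ is the median and $k\geq 3$, some $t_j$ with $j\geq 2$ satisfies $t_j\geq t_1$. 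In (b.2) this auxiliary inequality may fail, but a direct case analysis of the expression appearing in the proof of Lemma \ref{lem:3mdm}, dropping that assumption, still yields strict positivity, so a minor case-extension of the lemma covers it.

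The main obstacle I foresee is precisely this extension of Lemma \ref{lem:3mdm} needed in case (b.2): one must verify by careful absolute-value casework that the swap-induced change remains strictly positive even when the deeper swapped subtree exceeds the opposite branch at the swap point. Beyond that, all remaining steps amount to routine bookkeeping with the $\delta$-values and the hypothesis on $f$.
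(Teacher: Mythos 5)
Your treatment of (a.1) and (b.1) is essentially the paper's own argument: isolate the nodes whose balance values change, observe that each spine term shifts by $\tfrac12\bigl(f(k-1)+f(2)-f(k)\bigr)$ (resp.\ $\tfrac12\bigl(f(k-2)+2f(2)-f(k)\bigr)$) once the absolute values are resolved, and handle the local part at $x$ via Lemmas \ref{lem:mdmodd}/\ref{lem:mdmeven}. One step there is unjustified, though: in (b.1) you assert $\delta(T_1\star T_2)<\delta(T_x)$, i.e.\ $t_1+t_2+f(2)\leq t+f(k)$. The hypothesis $0<f(k)<f(k-1)+f(2)$ does not force $f$ to be monotone, so this can fail (e.g.\ $f(0)=0$, $f(2)$ large, $f(k)$ small), and the absolute value at $x_l$ need not resolve as you claim; the paper handles this by splitting into the sub-cases $s_l\leq t+f(k)$ and $s_l\geq t+f(k)$, and you must do the same.

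The serious gap is in (a.2)/(b.2). You factor the move $T\to T''$ through the intermediate tree $T'$ of (a.1)/(b.1) and claim the first step increases $\mathfrak{C}$ ``by the already-proved (a.1)/(b.1)''. But (a.1) was proved under $\delta(S_l)\leq\delta(T_1)$, and in case (a.2) you assume the opposite. The intermediate inequality $\mathfrak{C}(T')>\mathfrak{C}(T)$ is in fact false in general: take $l=1$, $k=3$, $f(0)=0$, $f(2)=10$, $f(k)=1$ for $k\geq 3$, let $T_1,T_2,T_3$ be single leaves (so $t_1=t_2=t_3=0$ and $t_1$ is the median) and let $S_1$ satisfy $\delta(S_1)=100$. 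Then
$$
\mathfrak{C}(T)=\tfrac12|100-1|+\mathfrak{C}(S_1)=49.5+\mathfrak{C}(S_1),\qquad
\mathfrak{C}(T')=\tfrac12|100-20|+\tfrac12|0-10|+\mathfrak{C}(S_1)=45+\mathfrak{C}(S_1),
$$
so $\mathfrak{C}(T')<\mathfrak{C}(T)$: when $s_l$ is large, the balance term at $x_l$ \emph{decreases} by $\tfrac12\bigl(f(k-1)+f(2)-f(k)\bigr)$ instead of increasing, and the local gain at $x$ need not compensate. (The lemma's conclusion $\mathfrak{C}(T'')>\mathfrak{C}(T)$ still holds in this example, but your chain $\mathfrak{C}(T'')>\mathfrak{C}(T')>\mathfrak{C}(T)$ breaks at the second link.) This is precisely why the paper computes $\mathfrak{C}(T'')-\mathfrak{C}(T)$ directly in (a.2) and (b.2), with the absolute-value casework on $s_l$ versus $t+f(k)$ carried out there; your second step (the swap via Lemma \ref{lem:3mdm}) is sound, but the decomposition does not spare you that casework.
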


\begin{figure}[htb]
\begin{center}
\begin{tikzpicture}[thick,>=stealth,scale=0.25]
\draw(8.5,10.5) node[tre] (r) {}; 
\draw(7.5,7.5) node [tre] (v1) {}; 
\draw(11.5,7.5) node[tre] (x2) {}; 
\draw (r) node {\tiny $x_1$};  
\draw (x2) node {\tiny $x_2$};  

\draw(10.5,4.5) node [tre] (v2) {}; 
\draw (12.8,6.2)--(x2);
\draw (13.7,5.3) node {.};
\draw (13.4,5.6) node {.};
\draw (13.1,5.9) node {.};
\draw(15,4) node[tre] (xl1) {}; 
\draw (xl1) node {\tiny $x_l$};  
\draw(18,1) node[tre] (y) {}; 
\draw (y) node {\tiny $y$};  
\draw(14,1) node [tre] (vl1) {}; 
\draw (xl1)--(14,5);
\draw(21,-2) node[tre] (xl) {}; 
\draw (xl) node {\tiny $x$};

\draw(17,-2) node [tre] (t1) {}; 
\draw (t1)--(16,-5)--(18,-5)--(t1);
\draw(17,-6) node  {$T_1$};

\draw(20,-4) node[tre] (t2) {}; 
\draw (t2)--(19,-7)--(21,-7)--(t2);
\draw(20,-8) node  {$T_2$};
\draw(24,-4) node[tre] (tk) {}; 
\draw (tk)--(23,-7)--(25,-7)--(tk);
\draw(24,-8) node  {$T_k$};
\draw(21.5,-6.8) node {.}; 
\draw(21.9,-6.8) node {.}; 
\draw(22.3,-6.8) node {.}; 

\draw (vl1)--(13,-2)--(15,-2)--(vl1);
\draw(14,-3) node  {$S_l$};
\draw (v2)--(9.5,1.5)--(11.5,1.5)--(v2);
\draw(10.5,0.5) node  {$S_2$};
\draw (v1)--(6.5,4.5)--(8.5,4.5)--(v1);
\draw(7.5,3.5) node  {$S_1$};

\draw  (r)--(x2);
\draw  (r)--(v1);
\draw  (x2)--(v2);
\draw  (xl1)--(vl1);
\draw  (xl1)--(y);
\draw  (xl)--(y);
\draw  (y)--(t1);
\draw  (xl)--(t2);
\draw  (xl)--(tk);
\draw(10.5,10.5) node  {$T'$};
\end{tikzpicture}
\
\begin{tikzpicture}[thick,>=stealth,scale=0.25]
\draw(8.5,10.5) node[tre] (r) {}; 

\draw(7.5,7.5) node [tre] (v1) {}; 
\draw(11.5,7.5) node[tre] (x2) {}; 
\draw (r) node {\tiny $x_1$};  
\draw (x2) node {\tiny $x_2$};  

\draw(10.5,4.5) node [tre] (v2) {}; 
\draw (12.8,6.2)--(x2);
\draw (13.7,5.3) node {.};
\draw (13.4,5.6) node {.};
\draw (13.1,5.9) node {.};
\draw(15,4) node[tre] (xl1) {}; 
\draw (xl1) node {\tiny $y$};  
\draw(18,1) node[tre] (y) {}; 
\draw (y) node {\tiny $x_l$};  
\draw(14,1) node [tre] (vl1) {}; 
\draw (xl1)--(14,5);
\draw(21,-2) node[tre] (xl) {}; 
\draw (xl) node {\tiny $x$};

\draw(17,-2) node [tre] (t1) {}; 
\draw (t1)--(16,-5)--(18,-5)--(t1);
\draw(17,-6) node  {$S_l$};

\draw(20,-4) node[tre] (t2) {}; 
\draw (t2)--(19,-7)--(21,-7)--(t2);
\draw(20,-8) node  {$T_2$};
\draw(24,-4) node[tre] (tk) {}; 
\draw (tk)--(23,-7)--(25,-7)--(tk);
\draw(24,-8) node  {$T_k$};
\draw(21.5,-6.8) node {.}; 
\draw(21.9,-6.8) node {.}; 
\draw(22.3,-6.8) node {.}; 

\draw (vl1)--(13,-2)--(15,-2)--(vl1);
\draw(14,-3) node  {$T_1$};
\draw (v2)--(9.5,1.5)--(11.5,1.5)--(v2);
\draw(10.5,0.5) node  {$S_2$};
\draw (v1)--(6.5,4.5)--(8.5,4.5)--(v1);
\draw(7.5,3.5) node  {$S_1$};

\draw  (r)--(x2);
\draw  (r)--(v1);
\draw  (x2)--(v2);
\draw  (xl1)--(vl1);
\draw  (xl1)--(y);
\draw  (xl)--(y);
\draw  (y)--(t1);
\draw  (xl)--(t2);
\draw  (xl)--(tk);
\draw(10.5,10.5) node  {$T''$};
\end{tikzpicture}
\end{center}
\caption{\label{fig:lem4mdm} The trees $T',T''$ in Lemma \ref{lem:4mdm}.(a).}
\end{figure}
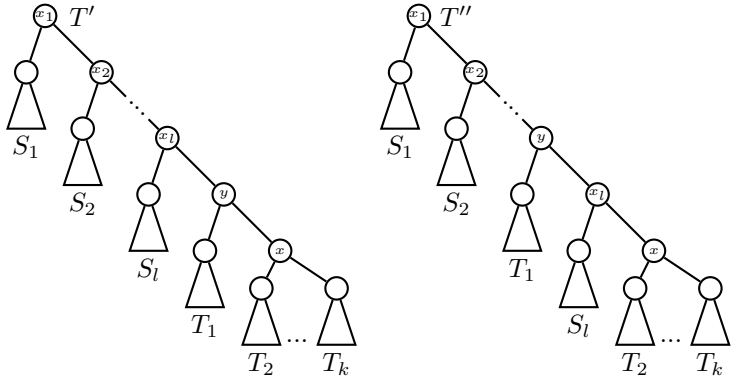

\begin{proof}
For every $i=1,\ldots,l$, let $x_i$ denote the parent of the root of $S_i$ in all trees in the statement.
Moreover, for every $i=1,\ldots,l$, let $s_i=\delta(S_i)$ and, for every $i=1,\ldots,k$, let $\delta(T_i)=t_i$, and let $t=\sum\limits_{i=1}^k t_i$. Recall that we are assuming throughout this proof that $s_1\leq\cdots\leq s_l$.
\medskip

\noindent\emph{(a)} Assume that $k\geq 3$ is odd and that
   $\MDM(t_1,\ldots,t_k)\leq$ $\MDM(t_2,\ldots,t_k)$.

As far as assertion (a.1) goes, let us assume that $s_l\leq t_1$ and therefore that $s_i\leq t$ for every $i=1,\ldots,l$. In this case, 
the only nodes in $T$ or $T'$ with different ${bal}$ value in both trees are $x_1,\ldots,x_l$, $x$ and the new node $y$ in $T'$. Then:
$$
\begin{array}{l}
\displaystyle \mathfrak{C}(T')-\mathfrak{C}(T) ={bal}_{T'}(x)-{bal}_{T}(x)+{bal}_{T'}(y)+\sum\limits _{i=1}^l({bal}_{T'}(x_i)-{bal}_{T}(x_i))
\\[2ex]
\displaystyle \qquad =\MDM(t_2,\ldots,t_k)-\MDM(t_1,\ldots,t_k)+\dfrac{1}{2}\Big|\sum\limits_{i=2}^k t_i+f(k-1)-t_1\Big| \\
\displaystyle \qquad\qquad +\dfrac{1}{2}\sum\limits _{i=1}^l\Big(\Big|t+\sum\limits_{j=i+1}^l s_j+f(k-1)+(l-i+1)f(2)-s_i\Big|-\Big|t+\sum\limits_{j=i+1}^l s_j+f(k)+(l-i)f(2)-s_i\Big|\Big)\\[2ex]
\displaystyle  \qquad
\geq \dfrac{1}{2}\sum\limits _{i=1}^l\Big(\Big|t+\sum\limits_{j=i+1}^l s_j+f(k-1)+(l-i+1)f(2)-s_i\Big|-\Big|t+\sum\limits_{j=i+1}^l s_j+f(k)+(l-i)f(2)-s_i\Big|\Big)\\[2ex]
\displaystyle  \qquad= \dfrac{1}{2}\sum\limits _{i=1}^l\left(\Big(t+\sum\limits_{j=i+1}^l s_j+f(k-1)+(l-i+1)f(2)-s_i\Big)-\Big(t+\sum\limits_{j=i+1}^l s_j+f(k)+(l-i)f(2)-s_i\Big)\right)\\[2ex]
\qquad\mbox{(because $s_i\leq t$ for every $i=1,\ldots,l$)}\\[1ex]
\displaystyle \qquad=\frac{l}{2}(f(k-1)+f(2)-f(k))> 0
\end{array}
$$

As far as assertion (a.2) goes, let us assume that $s_l> t_1$. Again, the only nodes in $T$ or $T''$ with different ${bal}$ value in both trees are $x_1,\ldots,x_l$, $x$ and the new node $y$. Therefore:

$$
\begin{array}{l}
\displaystyle \mathfrak{C}(T'')-\mathfrak{C}(T) ={bal}_{T''}(x)-{bal}_{T}(x)+{bal}_{T''}(y)+{bal}_{T''}(x_l)-{bal}_{T}(x_l) +\sum\limits _{i=1}^{l-1}({bal}_{T''}(x_i)-{bal}_{T}(x_i))
\\[2ex]
\displaystyle \quad =\MDM(t_2,\ldots,t_k)-\MDM(t_1,\ldots,t_k)+\dfrac{1}{2}\Big|\sum\limits _{i=2}^k t_i+s_l+f(k-1)+f(2)-t_1\Big|\\[1ex]
\displaystyle \quad\quad +\dfrac{1}{2}\Big|\sum\limits _{i=2}^k t_i+f(k-1)-s_l\Big|-\dfrac{1}{2}|t+f(k)-s_l|
\\[1ex]
\displaystyle  \quad \quad+\dfrac{1}{2}\sum\limits _{i=1}^{l-1}\Big(\Big|t+\sum\limits_{j=i+1}^l s_j+f(k-1)+(l-i+1)f(2)-s_i\Big|-\Big|t+\sum\limits_{j=i+1}^l s_j+f(k)+(l-i)f(2)-s_i\Big|\Big)\\[2ex]
\displaystyle  \quad\geq \dfrac{1}{2}\Big(\Big|\sum\limits_{i=2}^k t_i+s_l+f(k-1)+f(2)-t_1\Big|-|t+f(k)-s_l|\Big)\\[1ex]
\displaystyle  \quad\quad+ \dfrac{1}{2}\sum\limits _{i=1}^{l-1}\Big(\Big|t+\sum\limits_{j=i+1}^l s_j+f(k-1)+(l-i+1)f(2)-s_i\Big|-\Big|t+\sum\limits_{j=i+1}^l s_j+f(k)+(l-i)f(2)-s_i\Big|\Big)\\[2ex]
\displaystyle  \quad= \dfrac{1}{2}\Big(\sum\limits_{i=2}^k t_i+s_l+f(k-1)+f(2)-t_1-|t+f(k)-s_l|\Big)\\[1ex]
\displaystyle  \quad\quad+ \dfrac{1}{2}\sum\limits _{i=1}^{l-1}\Big(\Big(t+\sum\limits_{j=i+1}^l s_j+f(k-1)+(l-i+1)f(2)-s_i\Big)-\Big(t+\sum\limits_{j=i+1}^l s_j+f(k)+(l-i)f(2)-s_i\Big)\Big)\\[2ex]
\quad\mbox{(because $s_i\leq s_l$, for every $i=1,\ldots,l$, and $s_l>t_1$)%
}\\[2ex] 
\displaystyle \quad = \dfrac{1}{2}\Big(\sum\limits_{i=2}^k t_i+s_l+f(k-1)+f(2)-t_1-|t+f(k)-s_l|\Big)+\frac{l-1}{2}(f(k-1)+f(2)-f(k))\\[1ex]
\displaystyle  \quad \geq \dfrac{1}{2}\Big(\sum\limits_{i=2}^k t_i+s_l+f(k-1)+f(2)-t_1-|t+f(k)-s_l|\Big)\\[3ex]
\displaystyle  \quad =\left\{\begin{array}{ll}
\dfrac{1}{2}(2(s_l-t_1)+f(k-1)+f(2)-f(k))>0 & \mbox{\quad (if $s_l\leq t+f(k)$)}\\[2ex]
\displaystyle\frac{1}{2}\Big(2\sum\limits _{i=2}^k t_i+f(k-1)+f(2)+f(k)\Big)>0& \mbox{\quad (if $s_l\geq t+f(k)$)}
\end{array}
\right.
\end{array}
$$
\medskip

\begin{figure}[htb]
\begin{center}
\begin{tikzpicture}[thick,>=stealth,scale=0.25]
\draw(8.5,10.5) node[tre] (r) {}; 
\draw(7.5,7.5) node [tre] (v1) {}; 
\draw(11.5,7.5) node[tre] (x2) {}; 
\draw (r) node {\tiny $x_1$};  
\draw (x2) node {\tiny $x_2$};  

\draw(10.5,4.5) node [tre] (v2) {}; 
\draw (12.8,6.2)--(x2);
\draw (13.7,5.3) node {.};
\draw (13.4,5.6) node {.};
\draw (13.1,5.9) node {.};
\draw(15,4) node[tre] (xl1) {}; 
\draw (xl1) node {\tiny $x_l$};  
\draw (xl1)--(14,5);
\draw(14,1) node [tre] (vl1) {}; 
\draw(18,1) node[tre] (y) {}; 
\draw (y) node {\tiny $y$};  
\draw(17,-2) node[tre] (z) {}; 
\draw (z) node {\tiny $z$};

\draw(15.5,-5) node [tre] (t1) {}; 
\draw (t1)--(14.5,-8)--(16.5,-8)--(t1);
\draw(15.5,-9) node  {$T_1$};

\draw(18.5,-5) node[tre] (t2) {}; 
\draw (t2)--(17.5,-8)--(19.5,-8)--(t2);
\draw(18.5,-9) node  {$T_2$};

\draw(23,-2) node[tre] (xl) {}; 
\draw (xl) node {\tiny $x$};  

\draw(22,-5) node[tre] (t3) {}; 
\draw (t3)--(21,-8)--(23,-8)--(t3);
\draw(22,-9) node  {$T_3$};

\draw(26,-5) node[tre] (tk) {}; 
\draw (tk)--(25,-8)--(27,-8)--(tk);
\draw(26,-9) node  {$T_k$};
\draw(23.6,-7.8) node {.}; 
\draw(24,-7.8) node {.}; 
\draw(24.4,-7.8) node {.}; 

\draw (vl1)--(13,-2)--(15,-2)--(vl1);
\draw(14,-3) node  {$S_l$};
\draw (v2)--(9.5,1.5)--(11.5,1.5)--(v2);
\draw(10.5,0.5) node  {$S_2$};
\draw (v1)--(6.5,4.5)--(8.5,4.5)--(v1);
\draw(7.5,3.5) node  {$S_1$};

\draw  (r)--(x2);
\draw  (r)--(v1);
\draw  (x2)--(v2);
\draw  (xl1)--(vl1);
\draw  (xl1)--(y);
\draw  (xl)--(y);
\draw  (y)--(z);
\draw  (z)--(t1);
\draw  (z)--(t2);
\draw  (xl)--(t3);
\draw  (xl)--(tk);
\draw(10.5,10.5) node  {$T'$};
\end{tikzpicture}
\quad
\begin{tikzpicture}[thick,>=stealth,scale=0.25]
\draw(8.5,10.5) node[tre] (r) {}; 
\draw(7.5,7.5) node [tre] (v1) {}; 
\draw(11.5,7.5) node[tre] (x2) {}; 
\draw (r) node {\tiny $x_1$};  
\draw (x2) node {\tiny $x_2$};  

\draw(10.5,4.5) node [tre] (v2) {}; 
\draw (12.8,6.2)--(x2);
\draw (13.7,5.3) node {.};
\draw (13.4,5.6) node {.};
\draw (13.1,5.9) node {.};
\draw(15,4) node[tre] (xl1) {}; 
\draw (xl1) node {\tiny $y$};  
\draw (xl1)--(14,5);
\draw(18,1) node[tre] (y) {}; 
\draw (y) node {\tiny $x_l$};  

\draw(18,-2) node [tre] (vl1) {};

\draw(14,1) node[tre] (z) {}; 
\draw (z) node {\tiny $z$};

\draw(12,-2) node [tre] (t1) {}; 
\draw (t1)--(11,-5)--(13,-5)--(t1);
\draw(12,-6) node  {$T_1$};

\draw(15,-2) node[tre] (t2) {}; 
\draw (t2)--(14,-5)--(16,-5)--(t2);
\draw(15,-6) node  {$T_2$};

\draw (vl1)--(17,-5)--(19,-5)--(vl1);
\draw(18,-6) node  {$S_l$};

\draw(23,-2) node[tre] (xl) {}; 
\draw (xl) node {\tiny $x$};  

\draw(22,-5) node[tre] (t3) {}; 
\draw (t3)--(21,-8)--(23,-8)--(t3);
\draw(22,-9) node  {$T_3$};

\draw(26,-5) node[tre] (tk) {}; 
\draw (tk)--(25,-8)--(27,-8)--(tk);
\draw(26,-9) node  {$T_k$};
\draw(23.6,-7.8) node {.}; 
\draw(24,-7.8) node {.}; 
\draw(24.4,-7.8) node {.}; 

\draw (v2)--(9.5,1.5)--(11.5,1.5)--(v2);
\draw(10.5,0.5) node  {$S_2$};
\draw (v1)--(6.5,4.5)--(8.5,4.5)--(v1);
\draw(7.5,3.5) node  {$S_1$};

\draw  (r)--(x2);
\draw  (r)--(v1);
\draw  (x2)--(v2);
\draw  (xl1)--(z);
\draw  (xl1)--(y);
\draw  (xl)--(y);
\draw  (y)--(vl1);
\draw  (z)--(t1);
\draw  (z)--(t2);
\draw  (xl)--(t3);
\draw  (xl)--(tk);
\draw(10.5,10.5) node  {$T''$};
\end{tikzpicture}
\end{center}
\caption{\label{fig:lem41mdm} The trees  in Lemma \ref{lem:4mdm}.(b).}
\end{figure}
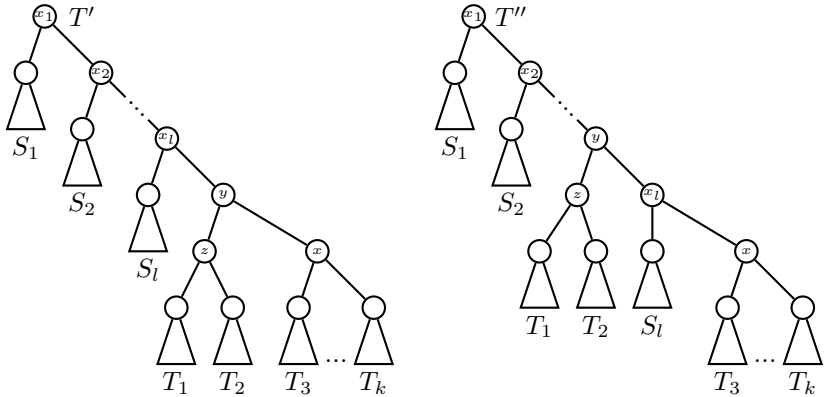

\noindent \emph{(b)} Assume now that $k\geq 3$ is even, and hence $k\geq 4$, and that  $\MDM(t_1,\ldots,t_k)\leq$ $\MDM(t_3,\ldots,t_k)$.

As far as assertion (b.1) goes, let us assume that $s_l\leq t_1+t_2+f(2)\leq t+f(2)$.
The only nodes in $T$ or $T'$ with different ${bal}$ value in both trees are $x_1,\ldots,x_l$, $x$ and the new nodes $y$ and $z$ in $T'$. Therefore:
$$
\begin{array}{l}
\displaystyle \mathfrak{C}(T')-\mathfrak{C}(T)  ={bal}_{T'}(x)-{bal}_{T}(x)+{bal}_{T'}(z)+{bal}_{T'}(y)  +{bal}_{T'}(x_l)-{bal}_{T}(x_l)+\sum\limits _{i=1}^{l-1}({bal}_{T'}(x_i)-{bal}_{T}(x_i))
\\[2ex]
\displaystyle\qquad =\MDM(t_3,\ldots,t_k)-\MDM(t_1,\ldots,t_k)+\dfrac{1}{2}|t_2-t_1|+\dfrac{1}{2}\Big|\sum\limits_{i=3}^k t_i+f(k-2)-(t_2+t_1+f(2))\Big|\\[1ex]
\displaystyle\qquad\qquad +\dfrac{1}{2}|t+f(k-2)+2f(2)-s_l| -\dfrac{1}{2}|t+f(k)-s_l|\\[1ex]
\displaystyle \qquad\qquad +\dfrac{1}{2}\sum\limits _{i=1}^{l-1}\Big(\Big|t+\sum\limits_{j=i+1}^l s_j+f(k-2)+(l-i+2)f(2)-s_i\Big|-\Big|t+\sum\limits_{j=i+1}^l s_j+f(k)+(l-i)f(2)-s_i\Big|\Big)\\[2ex]
\displaystyle \qquad\geq \dfrac{1}{2}\big(t+f(k-2)+2f(2)-s_l-|t+f(k)-s_l|\big)\\
\displaystyle \qquad\qquad +\dfrac{1}{2}\sum\limits _{i=1}^{l-1}\Big(t+\sum\limits_{j=i+1}^l s_j+f(k-2)+(l-i+2)f(2)-s_i-\Big(t+\sum\limits_{j=i+1}^l s_j+f(k)+(l-i)f(2)-s_i\Big)\Big)\\[2ex]
\qquad \mbox{(because $s_l\leq t+f(2)$ and $s_i\leq s_l$, for every $i=1,\ldots,l-1$)}\\[1ex]
\displaystyle\qquad =\dfrac{1}{2}\big(t+f(k-2)+2f(2)-s_l-|t+f(k)-s_l|\big)+\frac{l-1}{2}(f(k-2)+2f(2)-f(k))\\[2ex]
\displaystyle\qquad \geq \dfrac{1}{2}\big(t+f(k-2)+2f(2)-s_l-|t+f(k)-s_l|\big)+\frac{l-1}{2}(f(k-1)+f(2)-f(k))\\[2ex]
\displaystyle\qquad \geq \dfrac{1}{2}\big(t+f(k-2)+2f(2)-s_l-|t+f(k)-s_l|\big)\\[2ex]
\qquad=\left\{\begin{array}{ll}
\dfrac{1}{2}(f(k-2)+2f(2)-f(k))> 0&\mbox{\quad  if $s_l\leq t+f(k)$}\\[2ex]
\dfrac{1}{2}(2(t+f(2)-s_l)+f(k-2)+f(k))> 0&\mbox{\quad  if $s_l\geq t+f(k)$}
\end{array}\right.
\end{array}
$$

As far as assertion (b.2) goes, let us assume now that $s_l> t_1+t_2+f(2)$.
Again, the only nodes in $T$ or $T''$ with different ${bal}$ value in both trees are $x_1,\ldots,x_l$, $x$ and the new nodes $y,z$. Therefore:

$$
\begin{array}{l}
\displaystyle \mathfrak{C}(T'')-\mathfrak{C}(T)  ={bal}_{T''}(x)-{bal}_{T}(x)+{bal}_{T''}(z)+{bal}_{T''}(x_l)-{bal}_{T}(x_l)+{bal}_{T''}(y)+\sum\limits _{i=1}^{l-1}({bal}_{T''}(x_i)-{bal}_{T}(x_i))
\\[2ex]
\displaystyle  \qquad =\MDM(t_3,\ldots,t_k)-\MDM(t_1,\ldots,t_k)+\dfrac{1}{2}|t_2 -t_1|+\dfrac{1}{2}\Big|\sum\limits _{i=3}^k t_i+f(k-2)-s_l\Big| 
 \\
\displaystyle  \qquad\qquad -\dfrac{1}{2}|t+f(k)-s_l|+\dfrac{1}{2}\Big|\sum\limits _{i=3}^k t_i+s_l+f(k-2)+f(2)-(t_1+t_2+f(2))\Big|
\\[1ex]
\displaystyle  \qquad\qquad +\dfrac{1}{2}\sum\limits _{i=1}^{l-1}\Big(\Big|t+\sum\limits_{j=i+1}^l s_j+f(k-2)+(l-i+2)f(2)-s_i\Big|-\Big|t+\sum\limits_{j=i+1}^l s_j+f(k)+(l-i)f(2)-s_i\Big|\Big)\\[2ex]
\displaystyle  \qquad \geq \dfrac{1}{2}\Big(\sum\limits _{i=3}^k t_i+s_l+f(k-2)+f(2)-(t_1+t_2+f(2))-|t+f(k)-s_l|\Big)\\[1ex]
\displaystyle  \qquad\qquad +\dfrac{1}{2}\sum\limits _{i=1}^{l-1}\Big( t+\sum\limits_{j=i+1}^l s_j+f(k-2)+(l-i+2)f(2)-s_i -\Big(t+\sum\limits_{j=i+1}^l s_j+f(k)+(l-i)f(2)-s_i\Big)\Big)\\[1ex]
\end{array}
$$

$$\begin{array}{l}
\displaystyle \qquad= \dfrac{1}{2}\Big( \sum\limits _{i=3}^k t_i+s_l+f(k-2)-(t_1+t_2) -|t+f(k)-s_l|\Big) +\frac{l-1}{2}(f(k-2)+2f(2)-f(k))\\[1ex]
\displaystyle \qquad\geq \dfrac{1}{2}\Big( \sum\limits _{i=3}^k t_i+s_l+f(k-2)-(t_1+t_2) -|t+f(k)-s_l|\Big) +\frac{l-1}{2}(f(k-1)+f(2)-f(k))\\[1ex]
\displaystyle \qquad\geq \dfrac{1}{2}\Big( \sum\limits _{i=3}^k t_i+s_l+f(k-2)-(t_1+t_2) -|t+f(k)-s_l|\Big)\\[3ex]
\displaystyle\qquad=\left\{\begin{array}{ll}
\displaystyle \frac{1}{2}\big(2s_l-2(t_1+t_2)+f(k-2)-f(k)\big)> \frac{1}{2}(2f(2)+f(k-2)-f(k))>0 & \mbox{\ if $s_l\leq t+f(k)$}\\[1ex]
\displaystyle \frac{1}{2}\Big(2\sum\limits _{i=3}^k t_i+f(k-2)+f(k)\Big)>0 & \mbox{\ if $s_l\geq t+f(k)$}
 \end{array}\right.
\end{array}
$$
\end{proof}

\begin{corollary}\label{cor:basic1mdm}
For every non-binary  tree $T\in \TT^*_n$, there always exists a binary tree $T'\in \TT^*_n$ such that $\mathfrak{C}(T')> \mathfrak{C}(T)$.
\end{corollary}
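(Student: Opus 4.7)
The plan is a strong induction on the non-binary excess $e(T) := \sum_{v \in V_{int}(T)} (\deg(v) - 2)$, which vanishes precisely when $T$ is binary. The base $e(T)=0$ is immediate. For the inductive step, given a non-binary $T\in \TT^*_n$ with $e(T)\geq 1$, I would construct some intermediate $\widetilde{T}\in \TT^*_n$ with $\mathfrak{C}(\widetilde{T})>\mathfrak{C}(T)$ and $e(\widetilde{T})<e(T)$. Applying the inductive hypothesis to $\widetilde{T}$ (if it is still non-binary) then yields the desired binary $T'\in\TT^*_n$ with $\mathfrak{C}(T')\geq \mathfrak{C}(\widetilde{T})>\mathfrak{C}(T)$.

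To build $\widetilde{T}$, I would pick an internal node $v$ of \emph{minimum} depth in $T$ among those with $\deg(v)\geq 3$. Minimality of the depth guarantees that every proper ancestor of $v$ is a binary internal node, so the unique path from the root to $v$ is a binary path $r=x_1,x_2,\ldots,x_l,v$ with $l\geq 0$. If $l=0$, i.e.\ $v$ is the root, I would reorder the children of $v$ so that a median value of $\{\delta(T_{v_1}),\ldots,\delta(T_{v_k})\}$ appears in position~$1$ (and, when $k$ is even, a second middle value in position~$2$) and apply Lemma~\ref{lem:2mdm}(a) or (b). This strictly increases $\mathfrak{C}$ and replaces the degree-$k$ root with nodes of degree at most $k-1$, so $e$ strictly decreases.

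If $l\geq 1$, let $S_1,\ldots,S_l$ be the side subtrees hanging off the binary path. I would first rearrange them by iterated swaps provided by Lemma~\ref{lem:3mdm} until $\delta(S_1)\leq \delta(S_2)\leq\cdots\leq \delta(S_l)$. Each admissible swap strictly increases $\mathfrak{C}$ without affecting $e$, and since $\TT^*_n$ is finite, the process must terminate in the sorted configuration. Then, after relabeling $T_1,\ldots,T_k$ so that $\delta(T_1)$ (and $\delta(T_2)$ if $k$ is even) are median values of $\{\delta(T_1),\ldots,\delta(T_k)\}$, I would apply Lemma~\ref{lem:4mdm} in the appropriate case (a.1) or (a.2) when $k$ is odd, or (b.1) or (b.2) when $k$ is even, depending on whether $\delta(S_l)$ is at most or strictly greater than $\delta(T_1)$ (respectively $\delta(T_1)+\delta(T_2)+f(2)$). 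In each case, the node $v$ is replaced by a node of degree $k-1$ (or $k-2$), which strictly decreases $e$ while strictly increasing $\mathfrak{C}$.

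The delicate point lies in the sorting step: Lemma~\ref{lem:3mdm}, as stated, requires not only that the side subtree to be moved upward be smaller than its swap partner, but also that it not exceed its sibling below. Since that sibling contains the non-binary subtree $T_v$ together with all the lower side subtrees, the requisite inequality is usually automatic, but one must either maintain this invariant throughout the bubble sort or verify that whenever the $S_i$'s are not yet sorted some admissible Lemma~\ref{lem:3mdm} swap remains available, so that the process really does terminate in the sorted configuration needed to invoke Lemma~\ref{lem:4mdm}.
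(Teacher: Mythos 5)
Your plan coincides with the paper's own proof of Corollary \ref{cor:basic1mdm}: the paper inducts on the sum of the degrees of the non-binary internal nodes (interchangeable with your excess $e(T)$), picks a non-binary node all of whose proper ancestors are binary, applies Lemma \ref{lem:2mdm} when that node is the root, and otherwise first sorts the side subtrees by Lemma \ref{lem:3mdm} and then prunes and regrafts by Lemma \ref{lem:4mdm}. So there is no difference of approach to report.

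The ``delicate point'' you isolate at the end is, however, a genuine gap --- in your write-up and in the paper's --- and neither of your two suggested escape routes works with Lemma \ref{lem:3mdm} as stated. The swap that moves $S_{i+1}$ up past $S_i$ needs $\delta(S_{i+1})\leq\delta(T_0)$, where $T_0$ is the subtree hanging below the lower of the two binary path nodes involved; when $i=l-1$ that subtree is the one rooted at the non-binary node itself, and nothing prevents $\delta(S_l)$ from exceeding it. Concretely, take $l=2$, the non-binary subtree equal to $\FS_3$, and $S_1,S_2$ two large binary trees with $\delta(S_1)>\delta(S_2)>\delta(\FS_3)$: the unique inversion admits no Lemma-\ref{lem:3mdm} swap, so no reordering of swaps helps, and ``stop when no admissible swap remains'' stops unsorted, leaving Lemma \ref{lem:4mdm} (which needs $\delta(S_l)$ to be maximal among the $\delta(S_i)$) inapplicable. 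The repair is to observe that the hypothesis $\delta(T_l)\leq\delta(T_0)$ in Lemma \ref{lem:3mdm} is superfluous: its proof only treats the orderings $t_l<t_{l-1}\leq t_0$ and $t_l\leq t_0\leq t_{l-1}$, but in the remaining case $t_0<t_l<t_{l-1}$ the same quantity $(*)$ equals $\frac{3}{2}(t_{l-1}-t_l)$ if $t_l+t_0+f(2)\geq t_{l-1}$ and $\frac{1}{2}(t_{l-1}-t_l+2t_0+2f(2))$ otherwise, both strictly positive. With this third case added, every adjacent inversion admits a $\mathfrak{C}$-increasing swap, the bubble sort does terminate in the sorted configuration, and the rest of your argument (which is the paper's) goes through.
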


\begin{proof}
We shall prove by complete induction on the sum $S$ of the degrees of the non-binary internal nodes in a  tree $T\in \TT_n^*$
that there always exists a binary  tree $T'\in \TT_n^*$ such that $\mathfrak{C}(T')\geq \mathfrak{C}(T)$.
Moreover, it will be clear from the proof that if $T$ isn't binary, then $T'$ can  be chosen so that this inequality is strict.

The assertion to be proved by induction is obviously true if $S=0$ (which means that $T$ is binary), so assume that $S>0$.
Let $x$ be an internal non-binary node of $T$
such that all nodes in the path from the root $r$ to $x$, except $x$ itself, are binary.

If $x$ is the root, then we apply Lemma \ref{lem:2mdm} and we obtain a tree $T_0$ with smaller $S$ and  larger $\mathfrak{C}$.
Then, by induction, there exists a binary tree $T'$ such that $\mathfrak{C}(T)<\mathfrak{C}(T_0)\leq \mathfrak{C}(T')$.

If $x$ is not the root $r$, let $r,x_2,\ldots,x_l,x$  be the path from the root to $x$: all these nodes except $x$ are binary. By Lemma \ref{lem:3mdm}, if we rearrange the subtrees rooted at the children of the nodes $r,x_2,\ldots,x_l$ in increasing order of their $\delta$-sizes, the $\mathfrak{C}$ value of the resulting tree increases without modifying the value of $S$; let $\widehat{T}$ be the tree obtained in this way. Next, by Lemma \ref{lem:4mdm}, in $\widehat{T}$ we can either prune a subtree $T_1$ rooted at one child of $x$ and regraft it to an arc in the path from $r$ to $x$ (adding a new binary node to the tree),
or we can prune two subtrees $T_1,T_2$ rooted at two children of $x$ and regraft their star $T_1\star T_2$ to an arc in the path from $r$ to $x$ (adding two new binary nodes to the tree), in both cases in such a way that the resulting tree $T_0$ has a larger $\mathfrak{C}$ and a smaller $S$. 
Then, by induction, there exists a binary tree $T'$ such that $\mathfrak{C}(T)\leq\mathfrak{C}(\widehat{T})<\mathfrak{C}(T_0)\leq \mathfrak{C}(T')$. 

This finishes the proof by induction.
\end{proof}

Therefore, the maximum $\mathfrak{C}$ value on $\TT^*_n$ is reached at a binary tree, where, by Proposition 6 in the main text,  it is equal to $(f(0)+f(2))/2$ times the Colless index, with $f(0)+f(2)\geq f(2)>0$. 
Then, since, by Lemma 1 in the main text, the maximum Colless index of a binary tree with $n$ leaves is reached exactly at the comb $K_n$, the same is true for $\mathfrak{C}$. So, the maximum value of $\mathfrak{C}$ on $\TT_n^*$ is reached exactly at $K_n$, and it is
$$
\mathfrak{C}(K_n)=\dfrac{f(0)+f(2)}{2}C(K_n)=\dfrac{f(0)+f(2)}{4} (n-1)(n-2).
$$

\section{Proof of the thesis of Theorem 18 for $\mathfrak{C}_{\sd,f}$}

\noindent The proof of Theorem 18 for $D=\sd$, the sample standard deviation, is very similar to the one provided for $D=\MDM$  in the previous section, but simpler, because Lemmas \ref{lem:mdmodd} and \ref{lem:mdmeven} are replaced by Lemma \ref{lem:var} below, which guarantees that it is always enough to remove a suitable element in a non-constant numeric vector of length at least 3, in order to increase its variance.  To simplify the notations, we shall denote in this section $\delta_f$ and $\mathfrak{C}_{\sd,f}$ by $\delta$ and $\mathfrak{C}$, respectively, and we shall denote ${bal}_{\sd,f}$ on a tree $T$ by ${bal}_T$ or simply by ${bal}$ when it is not necessary to specify the tree.

\begin{lemma}\label{lem:var}
For every $(x_1,\ldots,x_n)\in \RR^n$ with $n\geq 3$,  if $x_i$ is the value in the set $\{x_1,\ldots,x_n\}$ closest to its mean,
then 
$$
\var (x_1,\ldots,x_n)\leq \var(x_1,\ldots,x_{i-1},x_{i+1},\ldots,x_n),
$$
and thus, taking positive square roots,
$$
\sd (x_1,\ldots,x_n)\leq \sd(x_1,\ldots,x_{i-1},x_{i+1},\ldots,x_n).
$$
Moreover, these inequalities are strict unless either $x_1=\cdots=x_n$  or $n$ is even and $\{x_1,\ldots,x_n\}$ consists of $n/2$ copies of two different  elements.
\end{lemma}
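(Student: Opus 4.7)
The plan is to compute directly the change in the (sample) variance when a single element is removed, and reduce the claimed inequality to the condition that the squared deviation of $x_i$ from the mean is at most the \emph{average} squared deviation.

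More precisely, write $\mu=\frac{1}{n}\sum_{j=1}^n x_j$, $d_j=x_j-\mu$, and $S=\sum_{j=1}^n d_j^2$, so that $\var(x_1,\ldots,x_n)=S/(n-1)$. Let $\mu'$ denote the mean of the vector with $x_i$ removed. A short calculation gives $\mu'-\mu=-d_i/(n-1)$, and then using $\sum_{j\neq i}d_j=-d_i$ I would compute
\begin{equation*}
S':=\sum_{j\neq i}(x_j-\mu')^2=\sum_{j\neq i}\!\Big(d_j+\tfrac{d_i}{n-1}\Big)^{\!2}=S-\tfrac{n}{n-1}\,d_i^2.
\end{equation*}
Hence $\var(x_1,\ldots,x_{i-1},x_{i+1},\ldots,x_n)-\var(x_1,\ldots,x_n)=\frac{S'}{n-2}-\frac{S}{n-1}$ has the same sign as $S-n\,d_i^2$. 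The desired inequality is therefore equivalent to $d_i^2\leq \frac{1}{n}\sum_{j=1}^n d_j^2$, i.e.\ $x_i$'s squared deviation is bounded by the average squared deviation.

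Since by hypothesis $x_i$ realizes $\min_j|x_j-\mu|$, we have $d_i^2=\min_j d_j^2\leq \frac{1}{n}\sum_{j=1}^n d_j^2$, which establishes $\var(x_1,\ldots,x_n)\leq \var(x_1,\ldots,x_{i-1},x_{i+1},\ldots,x_n)$; the statement for $\sd$ follows by taking positive square roots. For the equality case, the inequality $d_i^2\leq \frac{1}{n}\sum_j d_j^2$ is strict unless $d_1^2=\cdots=d_n^2$, i.e.\ $|x_j-\mu|$ is a constant $c\geq 0$. If $c=0$ then $x_1=\cdots=x_n$; if $c>0$ then each $d_j\in\{+c,-c\}$, and the constraint $\sum_j d_j=0$ forces exactly $n/2$ values equal to $\mu+c$ and $n/2$ equal to $\mu-c$, in particular $n$ must be even.

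There is no serious obstacle here: the identity $S'=S-\frac{n}{n-1}d_i^2$ is the key computation, and once it is available the inequality is immediate from the minimality of $d_i^2$. The only mildly delicate point is the characterization of equality, which I would handle by observing that ``all squared deviations equal'' together with the zero-sum constraint $\sum d_j=0$ forces either the fully constant case or the balanced two-value case stated in the lemma.
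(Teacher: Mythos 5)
Your proposal is correct and follows essentially the same route as the paper: the identity $S'=S-\frac{n}{n-1}d_i^2$ is exactly the computation the paper performs (there written as $(n-1)\sum_{j\neq i}(x_j-\overline{x}')^2=(n-1)\sum_{j\neq i}(x_j-\overline{x})^2-(x_i-\overline{x})^2$), and both arguments then reduce the inequality to $d_i^2\leq\frac{1}{n}\sum_j d_j^2$, which holds because $d_i^2$ is the minimum squared deviation. The equality analysis via the zero-sum constraint is likewise the same.
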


\begin{proof}
Let $\overline{x}=(x_1+\cdots+x_n)/n$ and, after rearranging $x_1,\ldots, x_n$ if necessary, assume that $(x_n-\overline{x})^2\leq (x_i-\overline{x})^2$, for every $i=1,\ldots,n-1$. We shall prove that 
$$
\var (x_1,\ldots,x_n)\leq\var (x_1,\ldots,x_{n-1}).
$$
Indeed, let $\overline{x}'=(x_1+\cdots+x_{n-1})/(n-1)$. Then
$$
\var(x_1,\ldots,x_{n-1})\geq \var(x_1,\ldots,x_{n})\Longleftrightarrow
(n-1)\sum_{i=1}^{n-1} (x_i-\overline{x}')^2\geq (n-2) \sum_{i=1}^{n} \Big(x_i-\overline{x})^2
$$
Now
$$
\begin{array}{l}
\displaystyle (n-1)\sum_{i=1}^{n-1} (x_i-\overline{x}')^2 =
(n-1)\sum_{i=1}^{n-1} \Big(x_i-\overline{x}+\frac{1}{n-1}(x_n-\overline{x})\Big)^2\\
\qquad\qquad  \displaystyle =
(n-1)\sum_{i=1}^{n-1} \Big((x_i-\overline{x})^2+\frac{2}{n-1}(x_i-\overline{x}) (x_n-\overline{x})+\frac{1}{(n-1)^2}(x_n-\overline{x})^2\Big)\\
\qquad\qquad  \displaystyle =
(n-1)\sum_{i=1}^{n-1} (x_i-\overline{x})^2+2(x_n-\overline{x}) \sum_{i=1}^{n-1} (x_i-\overline{x})+(x_n-\overline{x})^2\\
\qquad\qquad  \displaystyle =
(n-1)\sum_{i=1}^{n-1} (x_i-\overline{x})^2+2(x_n-\overline{x})(\overline{x}-x_n)+(x_n-\overline{x})^2\\
\qquad\qquad  \displaystyle =
(n-1)\sum_{i=1}^{n-1} (x_i-\overline{x})^2-(x_n-\overline{x})^2\\
\qquad\qquad  \displaystyle =
(n-2)\sum_{i=1}^{n-1} (x_i-\overline{x})^2+\sum_{i=1}^{n-1} (x_i-\overline{x})^2-(x_n-\overline{x})^2\\
\qquad\qquad  \displaystyle \geq 
(n-2)\sum_{i=1}^{n-1} (x_i-\overline{x})^2+(n-1) (x_n-\overline{x})^2-(x_n-\overline{x})^2 = 
(n-2)\sum_{i=1}^{n} (x_i-\overline{x})^2
\end{array}
$$
as we wanted to prove. This inequality is an equality if, and only if, $(x_i-\overline{x})^2=(x_n-\overline{x})^2$ for every $i=1,\ldots,n-1$,
and it is easy to check that this condition holds exactly when either $x_1=\cdots=x_n$  or $n$ is even and $\{x_1,\ldots,x_n\}$ consists of $n/2$ copies of two different  elements.
\end{proof}

We prove now a series of lemmas that play in this proof the same role as Lemmas \ref{lem:2mdm} to \ref{lem:4mdm} in the last section. 

\begin{lemma}\label{lem:2sd}
Let $f$ be a mapping $\NN\to \RR_{\geq 0}$ such that $f(k)>0$, for every $k\geq 2$, and
let $T$ be a tree of the form $T_1\star\cdots\star T_k$, with $k\geq 3$.
If  $\delta(T_1)$ is the value in the set $\{\delta(T_1),\ldots,\delta(T_k)\}$ closest to its mean, 
then taking  the tree $T'=T_1\star(T_2\star \cdots \star T_k)$  depicted in the left-hand side of Fig. \ref{fig:lem2mdm}, we obtain that $\mathfrak{C}(T')> \mathfrak{C}(T)$.
\end{lemma}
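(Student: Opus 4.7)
The plan is to mimic the proof of Lemma~\ref{lem:2mdm}(a), using Lemma~\ref{lem:var} in place of Lemma~\ref{lem:mdmodd}. Setting $t_i = \delta(T_i)$ and $S = \sum_{i=2}^k t_i$, I note that the only internal nodes whose balance values change from $T$ to $T'$ are the root $r$ and the new internal node $v$ of $T'$. Using that $\sd(x,y)=\frac{1}{\sqrt{2}}|x-y|$, this gives
$$
\mathfrak{C}(T')-\mathfrak{C}(T)
=\sd(t_2,\ldots,t_k)-\sd(t_1,\ldots,t_k)
+\tfrac{1}{\sqrt{2}}\bigl|S+f(k-1)-t_1\bigr|.
$$
The first pair of terms is non-negative by Lemma~\ref{lem:var}, since $t_1$ is the element of $\{t_1,\ldots,t_k\}$ closest to the mean. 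To conclude strict positivity, it then suffices to show that the last summand is strictly positive.

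The key auxiliary claim I would prove is $S\geq t_1$; combined with $f(k-1)>0$ (which holds because $k-1\geq 2$), this yields $S+f(k-1)-t_1\geq f(k-1)>0$. I would split on the sign of $t_1-\bar{t}$, where $\bar{t}=(t_1+S)/k$. If $t_1\leq\bar{t}$, then $S=k\bar{t}-t_1\geq(k-1)t_1\geq t_1$, using $t_i\geq 0$ throughout. If $t_1>\bar{t}$, I would argue by contradiction that some $t_j$ with $j\geq 2$ must satisfy $t_j\geq t_1$: otherwise, for each $i\geq 2$ the closest-to-mean hypothesis would force either $t_i\geq t_1$ (already excluded) or $t_i\leq 2\bar{t}-t_1$; summing the latter inequality gives $S\leq (k-1)(2\bar{t}-t_1)$, whereas $S=k\bar{t}-t_1$, and rearranging yields $(k-2)t_1\leq (k-2)\bar{t}$, contradicting $t_1>\bar{t}$ because $k\geq 3$. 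In that surviving case $S\geq t_j\geq t_1$. Either way, $S\geq t_1$.

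The main obstacle is exactly this auxiliary claim $S\geq t_1$. In the $\MDM$ proof, the median hypothesis immediately supplies an index $j\geq 2$ with $t_j\geq t_1$, but the closest-to-the-mean hypothesis is softer and a priori compatible with $t_1$ being much larger than every $t_j$ for $j\geq 2$; the two-case dichotomy above is what rules this out using $k\geq 3$ and the non-negativity of the $t_i$. Once $S\geq t_1$ is in hand, the rest of the argument is a direct transcription of the $\MDM$ proof with the replacements $\MDM\rightsquigarrow\sd$, $\tfrac{1}{2}|\cdot|\rightsquigarrow\tfrac{1}{\sqrt{2}}|\cdot|$, and Lemma~\ref{lem:mdmodd} replaced by Lemma~\ref{lem:var}.
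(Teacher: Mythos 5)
Your proof is correct, and it follows the paper's skeleton up to the last step: same node-level decomposition of $\mathfrak{C}(T')-\mathfrak{C}(T)$ into the contributions of $r$ and the new node $v$, and same appeal to Lemma~\ref{lem:var} to discard the term $\sd(t_2,\ldots,t_k)-\sd(t_1,\ldots,t_k)\geq 0$. Where you diverge is in establishing strictness. The paper never proves your auxiliary claim $\sum_{i\geq 2}t_i\geq t_1$; it instead uses the equality characterization (the ``Moreover'' clause) of Lemma~\ref{lem:var}: either the $\sd$ difference is strictly positive and we are done, or the $t_i$ lie in one of the two degenerate configurations (all equal, or $k=2m\geq 4$ with $m$ copies each of two values, after which one reorders so that $t_2=t_1$), and in both configurations $\big|\sum_{i=2}^k t_i+f(k-1)-t_1\big|=\sum_{i=3}^k t_i+f(k-1)>0$ by inspection. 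Your route shows instead that the absolute-value term is \emph{always} strictly positive, via the dichotomy $t_1\leq\bar t$ versus $t_1>\bar t$; I checked that argument and it is sound --- the contradiction in the second case uses $k\geq 3$ and $t_i\geq 0$ exactly as you say, and $t_i=\delta_f(T_i)\geq 0$ does hold because $f$ takes values in $\RR_{\geq 0}$. Your version buys an unconditional lower bound of $\frac{1}{\sqrt{2}}f(k-1)$ on the regrafting term and avoids the equality clause of Lemma~\ref{lem:var} altogether; the paper's version is shorter because the degenerate cases have already been classified there.
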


\begin{proof}
Let $t_i=\delta(T_i)$, for every $i=1,\ldots,k$.
The only nodes in $T$ or $T'$ with different ${bal}$ value in both trees are the roots and the new node $v$ in $T'$. Therefore,
$$
\begin{array}{l}
\mathfrak{C}(T')-\mathfrak{C}(T)  \displaystyle={bal}_{T'}(r)+{bal}_{T'}(v)-{bal}_{T}(r)\\
\qquad \displaystyle=\dfrac{1}{\sqrt{2}}\Big|\sum\limits_{i=2}^kt_i+f(k-1)-t_1\Big|+\sd(t_2,\ldots,t_k)-\sd(t_1,\ldots,t_k)\displaystyle\geq \dfrac{1}{\sqrt{2}}\Big|\sum\limits_{i=2}^kt_i+f(k-1)-t_1\Big|\geq 0
\end{array}
$$
Now, the first inequality is strict unless either $t_1=t_2=\cdots=t_k$ or (up to reordering the trees $T_2,\ldots,T_k$) $k=2m\geq 4$, $t_1=\cdots=t_m$ and $t_{m+1}=\cdots=t_k$, and in both cases
$$
\Big|\sum\limits_{i=2}^kt_i+f(k-1)-t_1\Big|=\sum\limits_{i=3}^kt_i+f(k-1)>0
$$
Therefore, we always have that $\mathfrak{C}(T')-\mathfrak{C}(T)>0$.
\end{proof}

The proof of the following lemma is the same as that of Lemma \ref{lem:3mdm} (up to replacing the fractions $1/2$ by $1/\sqrt{2}$), and we shall not repeat it here.

\begin{lemma}\label{lem:3sd}
Let $f$ be a mapping $\NN\to \RR_{\geq 0}$ such that $f(2)>0$.
Consider the trees $T$ and $T'$ depicted in Fig. \ref{fig:lem3mdm}, where  $T'$ is obtained from $T$ by simply interchanging the subtrees $T_l$ and $T_{l-1}$. If $\delta(T_l)< \delta(T_{l-1})$ 
and $\delta(T_l)\leq \delta(T_{0})$,
then $\mathfrak{C}(T')> \mathfrak{C}(T)$. \qed
\end{lemma}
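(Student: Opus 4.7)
The plan is to mirror the proof of Lemma \ref{lem:3mdm} step by step, relying on a single observation: for any reals $a,b$, $\sd(a,b)=\frac{1}{\sqrt{2}}|a-b|$, which differs from $\MDM(a,b)=\frac{1}{2}|a-b|$ only by the overall constant factor $\sqrt{2}$. Therefore, in any calculation that involves only balance values at \emph{binary} internal nodes, switching the dissimilarity from $\MDM$ to $\sd$ merely scales the whole expression by $\sqrt{2}$ (i.e.\ replaces the prefactor $\tfrac{1}{2}$ by $\tfrac{1}{\sqrt{2}}$) and preserves its sign.

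First I set $t_i=\delta(T_i)$ for $i=0,\dots,l$, so by hypothesis $t_l<t_{l-1}$ and $t_l\leq t_0$. Next I note that $\delta(T_y)=\delta(T'_y)$, because the interchange of $T_l$ and $T_{l-1}$ only permutes the $f$-sizes of the children of $x$ and of $y$, leaving the sum $\delta(T_y)$ and a fortiori the balance values of all ancestors of $y$ (as well as of all nodes strictly inside $T_0,T_l,T_{l-1}$) unchanged. Consequently the only nodes whose balance value differs in $T$ and $T'$ are the binary nodes $x$ and $y$.

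It follows that
\[
\mathfrak{C}(T')-\mathfrak{C}(T)=\frac{1}{\sqrt{2}}\bigl(|t_{l-1}-t_0|+|t_{l-1}+t_0+f(2)-t_l|-|t_l-t_0|-|t_l+t_0+f(2)-t_{l-1}|\bigr),
\]
which is exactly $\sqrt{2}$ times the quantity computed in the proof of Lemma \ref{lem:3mdm}. I then repeat verbatim the two-case analysis of that proof. In the case $t_l<t_{l-1}\leq t_0$ the four absolute values all unfold with known signs and the sum collapses to $\tfrac{1}{\sqrt{2}}(t_{l-1}-t_l)>0$. In the case $t_l\leq t_0\leq t_{l-1}$ I split further according to the sign of $t_l+t_0+f(2)-t_{l-1}$, getting either $\tfrac{1}{\sqrt{2}}(3t_{l-1}-2t_0-t_l)\geq \tfrac{1}{\sqrt{2}}(t_{l-1}-t_l)>0$ or $\tfrac{1}{\sqrt{2}}(t_{l-1}+t_l+2f(2))>0$, the latter using the hypothesis $f(2)>0$.

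There is essentially no obstacle beyond bookkeeping: the argument is a mechanical transcription of the previous proof. The only point worth double-checking is the claim that only $x$ and $y$ see their balance value change, which is immediate because the swap is an in-place interchange of two subtrees whose combined $f$-size is unaffected. In particular, no stronger hypothesis on $f$ (such as the subadditivity $f(k)<f(k-1)+f(2)$ used later in Lemma \ref{lem:4mdm}) is needed here; $f(2)>0$ suffices.
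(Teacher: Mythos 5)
Your proposal is correct and matches the paper's own treatment: the paper likewise disposes of this lemma by observing that the proof of Lemma \ref{lem:3mdm} carries over verbatim once the factor $\tfrac12$ (from $\MDM(a,b)=\tfrac12|a-b|$) is replaced by $\tfrac{1}{\sqrt2}$ (from $\sd(a,b)=\tfrac{1}{\sqrt2}|a-b|$), since the only nodes whose balance values change are the binary nodes $x$ and $y$. Your explicit justification that only $x$ and $y$ are affected, and that the hypothesis $f(2)>0$ suffices, is exactly the intended argument.
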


\begin{lemma}\label{lem:4sd}
Let $f$ be a mapping $\NN\to \RR_{\geq 0}$ such that $0<f(k)< f(k-1)+f(2)$, for every $k\geq 3$, and let $T$ be the tree depicted in Fig. \ref{fig:lem4mdm1}, where $l\geq 1$, $x_1$ is the root, all nodes in the path from $x_1$ to $x_l$ are binary,  and $k\geq 3$.
Assume moreover that $\delta(S_1)\leq \delta(S_2)\leq \cdots \leq \delta(S_l)$ and
that $\delta(T_1)$ is the value in the set $\{\delta(T_1),\ldots,\delta(T_k)\}$ closest to its mean.

\begin{enumerate}[(a)]
\item If $\delta(S_l)\leq \delta(T_1)$, then the tree $T'$ depicted in Fig. \ref{fig:lem4mdm}, obtained by pruning the subtree $T_1$ and regrafting it in the arc ending in $x$, is such that $\mathfrak{C}(T')> \mathfrak{C}(T)$.

\item If $\delta(S_l)> \delta(T_1)$, then the tree $T''$ depicted in Fig. \ref{fig:lem4mdm}, obtained by pruning the subtree $T_1$ and regrafting it in the arc ending in $x_l$, is such that $\mathfrak{C}(T'')> \mathfrak{C}(T)$.
\end{enumerate}
\end{lemma}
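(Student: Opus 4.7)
The plan is to mirror closely the proof of Lemma \ref{lem:4mdm}, exploiting Lemma \ref{lem:var} as a uniform substitute for both Lemmas \ref{lem:mdmodd} and \ref{lem:mdmeven}: no parity split on $k$ is needed, and it suffices to prune only the single subtree $T_1$ rather than a pair $T_1\star T_2$. This collapses the four sub-cases of the MDM argument into the two cases stated in the present lemma, with the factor $1/2$ there replaced throughout by $1/\sqrt{2}$.

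To set up, I would write $t_i=\delta(T_i)$, $s_i=\delta(S_i)$, $t=\sum_{j=1}^k t_j$, and recall that, by hypothesis, $s_1\leq\cdots\leq s_l$ and $t_1$ is the element of $\{t_1,\ldots,t_k\}$ closest to their arithmetic mean. In case (a), the only nodes whose ${bal}$ value differs between $T$ and $T'$ are $x,x_1,\ldots,x_l$ and the new binary node $y$ inserted on the arc ending in $x$, so
\[
\mathfrak{C}(T')-\mathfrak{C}(T)=\bigl({bal}_{T'}(x)-{bal}_T(x)\bigr)+{bal}_{T'}(y)+\sum_{i=1}^{l}\bigl({bal}_{T'}(x_i)-{bal}_T(x_i)\bigr).
\]
The first summand is $\sd(t_2,\ldots,t_k)-\sd(t_1,\ldots,t_k)\geq 0$ by Lemma \ref{lem:var}, and ${bal}_{T'}(y)\geq 0$ trivially. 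For the $i$-th term of the sum, the sibling of $S_i$ at $x_i$ has $\delta$-size $t+\sum_{j>i}s_j+f(k)+(l-i)f(2)$ in $T$ and $t+\sum_{j>i}s_j+f(k-1)+(l-i+1)f(2)$ in $T'$; the ordering $s_i\leq s_l\leq t_1\leq t$ forces both of these to exceed $s_i$, so the two absolute values drop and the contribution equals $\frac{1}{\sqrt 2}(f(k-1)+f(2)-f(k))>0$. Summing yields $\mathfrak{C}(T')-\mathfrak{C}(T)\geq\frac{l}{\sqrt 2}(f(k-1)+f(2)-f(k))>0$, proving (a).

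Case (b) proceeds by the same template, but with $y$ inserted on the arc ending in $x_l$ and with the altered nodes being $x,x_1,\ldots,x_{l-1},x_l,y$. The summands at $x_1,\ldots,x_{l-1}$ again collapse, by the same argument as in (a), to $\frac{l-1}{\sqrt 2}(f(k-1)+f(2)-f(k))\geq 0$. The main obstacle is the sign bookkeeping of the three remaining terms $({bal}_{T''}(x)-{bal}_T(x))+{bal}_{T''}(y)+({bal}_{T''}(x_l)-{bal}_T(x_l))$, each involving absolute values whose signs depend on the relative sizes of $s_l$, $t_1$, and $t+f(k)$. I would discard the first (non-negative by Lemma \ref{lem:var}), use $s_l>t_1$ to strip the absolute value appearing in ${bal}_{T''}(y)$, and drop the non-negative term $\frac{1}{\sqrt 2}\bigl|\sum_{i\geq 2}t_i+f(k-1)-s_l\bigr|$ inside the third bracket, reducing the estimate to
\[
\frac{1}{\sqrt 2}\bigl(\sum\nolimits_{i\geq 2}t_i+s_l+f(k-1)+f(2)-t_1-|t+f(k)-s_l|\bigr).
\]
A case split on the sign of $t+f(k)-s_l$ then evaluates this to either $\frac{1}{\sqrt 2}(2(s_l-t_1)+f(k-1)+f(2)-f(k))>0$ (using $s_l>t_1$ and the hypothesis on $f$) or $\frac{1}{\sqrt 2}(2\sum_{i\geq 2}t_i+f(k-1)+f(2)+f(k))>0$, concluding the argument.
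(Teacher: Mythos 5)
Your proposal is correct and follows essentially the same route as the paper's proof: in case (a) you discard the non-negative terms ${bal}_{T'}(x)-{bal}_T(x)$ (via Lemma \ref{lem:var}) and ${bal}_{T'}(y)$, and reduce the sum over $x_1,\ldots,x_l$ to $\frac{l}{\sqrt{2}}(f(k-1)+f(2)-f(k))>0$ using $s_i\leq s_l\leq t_1\leq t$; in case (b) you keep exactly the same terms the paper keeps and perform the identical case split on the sign of $t+f(k)-s_l$. You also correctly identify the structural simplification over the $\MDM$ version, namely that Lemma \ref{lem:var} removes the parity split and the need to prune a pair of subtrees.
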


\begin{proof}
For every $i=1,\ldots,l$, let $s_i=\delta(S_i)$ and, for every $i=1,\ldots,k$, $\delta(T_i)=t_i$.
Let, moreover, $t=t_1+\cdots+t_k$.
So, we are assuming that $s_1\leq\cdots\leq s_l$ and that $\sd(t_1,\ldots,t_k)\leq \sd(t_2,\ldots,t_k)$.
Moreover, for every $i=1,\ldots,l$, we shall call $x_i$ the parent of the root of $S_i$ in all three trees $T,T',T''$.

As far as assertion (a) goes, the only nodes in $T$ or $T'$ with different ${bal}$ value in both trees are $x_1,\ldots,x_l$, $x$ and the new node $y$. Therefore:
$$
\begin{array}{l}
\displaystyle \mathfrak{C}(T')-\mathfrak{C}(T)  ={bal}_{T'}(x)-{bal}_{T}(x)+{bal}_{T'}(y)+\sum\limits _{i=1}^l\big({bal}_{T'}(x_i)-{bal}_{T}(x_i)\big)
\\[2ex]
\displaystyle\quad =\sd(t_2,\ldots,t_k)-\sd(t_1,\ldots,t_k)+\dfrac{1}{\sqrt{2}}\Big|\sum\limits_{i=2}^k t_i+f(k-1)-t_1\Big|\\[1ex]
\displaystyle\quad \quad +\sum_{i=1}^l\dfrac{1}{\sqrt{2}}\Big(\Big|t+\sum\limits_{j=i+1}^l s_j+f(k-1)+(l-i+1)f(2)-s_i\Big|-\Big|t+\sum\limits_{j=i+1}^l s_j+f(k)+(l-i)f(2)-s_i\Big|\Big)\\[2ex]
\displaystyle\quad \geq \dfrac{1}{\sqrt{2}}\sum\limits _{i=1}^l\Big(\Big|t+\sum\limits_{j=i+1}^l s_j+f(k-1)+(l-i+1)f(2)-s_i\Big|-\Big|t+\sum\limits_{j=i+1}^l s_j+f(k)+(l-i)f(2)-s_i\Big|\Big)\\[2ex]
\displaystyle\quad =\frac{l}{\sqrt{2}}(f(k-1)+f(2)-f(k))>0
\end{array}
$$
where, as in the proof of Lemma \ref{lem:4mdm}.(a1),
the  last equality is a consequence of the fact that, for every $i=1,\ldots,l$,
$s_i\leq s_l\leq t_1\leq t$.

Let us prove now assertion (b). Again, the only nodes in $T$ or $T''$ with different ${bal}$ value in both trees are $x_1,\ldots,x_l$, $x$ and the new node $y$. Therefore,
$$
\begin{array}{l}
\displaystyle \mathfrak{C}(T')-\mathfrak{C}(T)  ={bal}_{T''}(x)-{bal}_{T}(x)+{bal}_{T''}(x_l)+{bal}_{T''}(y)-{bal}_{T}(x_l) +\sum\limits _{i=1}^{l-1}({bal}_{T''}(x_i)-{bal}_{T}(x_i))
\\[2ex]
\displaystyle\quad =\sd(t_2,\ldots,t_k)-\sd(t_1,\ldots,t_k)+\dfrac{1}{\sqrt{2}}\Big|\sum\limits_{i=2}^k t_i+f(k-1)-s_l\Big|\\[1ex]
\displaystyle\quad\quad +\dfrac{1}{\sqrt{2}}\Big|\sum\limits_{i=2}^k t_i+s_l+f(k-1)+f(2)-t_1\Big| -\dfrac{1}{\sqrt{2}}|t+f(k)-s_l|\\[1ex]
\displaystyle\quad\quad+\dfrac{1}{\sqrt{2}}\sum\limits _{i=1}^{l-1}\Big( \Big|t+\sum\limits_{j=i+1}^{l} s_j+f(k-1)+(l-i+1)f(2)-s_i\Big|-\Big|t+\sum\limits_{j=i+1}^{l}+f(k)+(l-i)f(2)-s_i\Big|\Big)\\[2ex]
\displaystyle\quad \geq \dfrac{1}{\sqrt{2}}\Big(\Big|\sum\limits_{i=2}^k t_i+s_l+f(k-1)+f(2)-t_1\Big|-|t+f(k)-s_l|\Big)\\[1ex]
\displaystyle\quad \quad +\dfrac{1}{\sqrt{2}}\sum\limits _{i=1}^{l-1}\Big(\Big|t+ \sum_{j=i+1}^{l}  s_j+f(k-1)+(l-i+1)f(2)-s_i\Big|\!-\!\Big|t+ \sum_{j=i+1}^{l}  s_j+f(k)+(l-i)f(2)-s_i\Big|\Big)\!>\!0\end{array}
$$
where the last strict inequality is derived as in the proof of Lemma \ref{lem:4mdm}.(a2).
\end{proof}

Then, using Lemmas \ref{lem:2sd} to \ref{lem:4sd} and arguing as in the proof of Corollary \ref{cor:basic1mdm}, we deduce
that, for every non-binary  tree $T\in \TT^*_n$, there always exists a binary  tree $T'\in 
\TT^*_n$ such that $\mathfrak{C}(T')>\mathfrak{C}(T)$.  Starting from this fact, the same argument that completes the proof of Theorem 
18 for $\mathfrak{C}_{\MDM,f}$ also completes it for $\mathfrak{C}_{\sd,f}$.

\section{Proof of the thesis of Theorem 18 for $\mathfrak{C}_{\var,f}$}

\noindent The proof is similar to those described in the previous two sections, using Lemma \ref{lem:var}
and proving a series of lemmas that show how to increase the  Colless-like index of a non-binary  tree by making it ``more binary.''
To simplify the notations, in this section, we shall denote $\delta_f$ and $\mathfrak{C}_{\var,f}$ by simply $\delta$ and $\mathfrak{C}$, respectively, and we shall denote ${bal}_{\var,f}$ on a tree $T$ by ${bal}_T$ or simply by ${bal}$ when it is not necessary to specify the tree.

\begin{lemma}\label{lem:2}
Let $f$ be a mapping $\NN\to \RR_{\geq 0}$ such that $f(k)>0$, for every $k\geq 2$, and
let $T$ be a tree of the form $T_1\star\cdots\star T_k$, with $k\geq 3$.
If $\delta(T_1)$ is the value in the set $\{\delta(T_1),\ldots,\delta(T_k)\}$ closest to its mean, 
then taking  the tree $T'=T_1\star(T_2\star \cdots \star T_k)$  depicted in the left-hand side of Fig. \ref{fig:lem2mdm}, we obtain that $\mathfrak{C}(T')> \mathfrak{C}(T)$.
\end{lemma}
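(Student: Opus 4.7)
The plan is to mimic the $\sd$-argument of Lemma \ref{lem:2sd}, trading the identity $\sd(x,y)=\frac{1}{\sqrt{2}}|x-y|$ for $\var(x,y)=\frac{1}{2}(x-y)^2$. Writing $t_i=\delta(T_i)$, I would first observe that the only nodes at which the $(\var,f)$-balance values of $T$ and $T'$ differ are the common root $r$ and the newly inserted internal node $v$ of $T'$. In $T'$ the root has exactly two children, $T_1$ of $f$-size $t_1$ and the subtree $T_2\star\cdots\star T_k$ of $f$-size $\sum_{i=2}^k t_i+f(k-1)$, and $v$ has children $T_2,\ldots,T_k$. Using $\var(x,y)=\tfrac{1}{2}(x-y)^2$ on the root, this gives
\begin{equation*}
\mathfrak{C}(T')-\mathfrak{C}(T)=\tfrac{1}{2}\Big(\sum_{i=2}^k t_i+f(k-1)-t_1\Big)^{\!2}+\var(t_2,\ldots,t_k)-\var(t_1,\ldots,t_k).
\end{equation*}

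Next, since $t_1$ is by hypothesis the element of $\{t_1,\ldots,t_k\}$ closest to its mean and $k\geq 3$, Lemma \ref{lem:var} directly supplies $\var(t_2,\ldots,t_k)\geq \var(t_1,\ldots,t_k)$. The squared summand is also manifestly non-negative, so the non-strict inequality $\mathfrak{C}(T')\geq\mathfrak{C}(T)$ is immediate; all the real work lies in upgrading this to a strict inequality.

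The main (but mild) obstacle is therefore ruling out the case in which both summands vanish simultaneously. If the inequality from Lemma \ref{lem:var} is strict, we are already done. Otherwise, that lemma forces $\{t_1,\ldots,t_k\}$ either to be constant or, when $k$ is even, to consist of $k/2$ copies of two distinct values. In the constant case, $\sum_{i=2}^k t_i+f(k-1)-t_1=(k-2)t_1+f(k-1)$, which is at least $f(k-1)$ because $k\geq 3$ and $t_1\geq 0$. In the two-value case (so $k\geq 4$), calling $t_1=a$ and $b$ the other value, the same expression equals $(k/2-2)a+(k/2)b+f(k-1)\geq f(k-1)$, since $k/2-2\geq 0$ and $a,b\geq 0$. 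Because $f(k-1)>0$ by hypothesis, the squared term is strictly positive in each of these residual cases, and hence $\mathfrak{C}(T')-\mathfrak{C}(T)>0$ unconditionally, completing the proof.
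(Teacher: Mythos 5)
Your proof is correct and follows essentially the same route as the paper's: the same decomposition into the root and the new node $v$, the same appeal to Lemma \ref{lem:var} for $\var(t_2,\ldots,t_k)\geq\var(t_1,\ldots,t_k)$, and the same case analysis (all $t_i$ equal, or $k$ even with two values each repeated $k/2$ times) to show the squared term is at least $f(k-1)^2/2>0$ in the degenerate cases. The paper merely compresses that last step into a reference to the proof of Lemma \ref{lem:2sd}, whereas you spell it out explicitly.
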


\begin{proof}
Let $t_i=\delta(T_i)$, for every $i=1,\ldots,k$.
The only nodes in $T$ or $T'$ with different ${bal}$ value in both trees are the roots and the new node $v$ in $T'$. Therefore,
$$
\begin{array}{l}
\displaystyle \mathfrak{C}(T')-\mathfrak{C}(T)  ={bal}_{T'}(v)+{bal}_{T'}(r)-{bal}_{T}(r)\\
\displaystyle \qquad =\var(t_2,\ldots,t_k)+\frac{1}{2}\Big(\sum\limits_{i=2}^kt_i+f(k-1)-t_1\Big)^2-\var(t_1,\ldots,t_k)
 \geq \frac{1}{2}\Big(\sum\limits_{i=2}^kt_i+f(k-1)-t_1\Big)^2\geq 0
\end{array}
$$
Now, the first inequality is strict unless either $t_1=t_2=\cdots=t_k$ or (up to reordering the trees $T_2,\ldots,T_k$) $k=2m\geq 4$, $t_1=\cdots=t_m$ and $t_{m+1}=\cdots=t_k$, and in both cases the last inequality is strict (cf. the proof of Lemma \ref{lem:2sd}).
\end{proof}

\begin{lemma}\label{lem:3}
Let $f$ be a mapping $\NN\to \RR_{\geq 0}$ such that $f(2)>0$.
Consider the trees $T$ and $T'$ depicted in Fig. \ref{fig:lem3mdm}, where, in  both trees, the nodes in the path connecting the root $r$ with $x$ are binary, and  $T'$ is obtained from $T$ by simply interchanging the subtrees $T_l$ and $T_{l-1}$. If $\delta(T_l)< \delta(T_{l-1})$, then $\mathfrak{C}(T')> \mathfrak{C}(T)$.
\end{lemma}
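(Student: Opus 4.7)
The plan is to mimic the proof of Lemma \ref{lem:3mdm}, but with considerably cleaner algebra: since $\var(u,v)=\frac{1}{2}(u-v)^2$, no absolute values are involved, and no case split is needed. This is also why the auxiliary hypothesis $\delta(T_l)\leq\delta(T_0)$ used in Lemma \ref{lem:3mdm} can be dropped here.

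First I would identify the only nodes whose $(\var,f)$-balance value changes between $T$ and $T'$. Since the transformation swaps only the subtrees $T_l$ and $T_{l-1}$, and the subtree rooted at the parent of $y$ has $\delta$-size equal to the symmetric quantity $\delta(T_{l-1})+\delta(T_l)+\delta(T_0)+2f(2)$ in both trees, the balance values at every ancestor of $y$ are unchanged. Hence only $x$ and $y$ contribute to $\mathfrak{C}(T')-\mathfrak{C}(T)$.

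Next, setting $a=\delta(T_{l-1})$, $b=\delta(T_l)$, $c=\delta(T_0)$, $d=f(2)$, I would write the four relevant balance values explicitly via $\var(u,v)=\frac{1}{2}(u-v)^2$, obtaining
$$2\bigl(\mathfrak{C}(T')-\mathfrak{C}(T)\bigr)=(a-c)^2+(b-a-c-d)^2-(b-c)^2-(a-b-c-d)^2.$$
Applying the identity $u^2-v^2=(u-v)(u+v)$ to the first and third summands yields $(a-b)(a+b-2c)$; doing the same to the second and fourth (where the sum is $-2(c+d)$ and the difference is $-2(a-b)$) yields $4(a-b)(c+d)$. Adding these contributions gives
$$2\bigl(\mathfrak{C}(T')-\mathfrak{C}(T)\bigr)=(a-b)(a+b+2c+4d),$$
which is strictly positive because $a>b$ by hypothesis and $d=f(2)>0$, while $a,b,c\geq 0$.

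The only real obstacle is keeping the algebraic bookkeeping clean; no case analysis is required, and the hypothesis $f(2)>0$ is used only at the very last step to guarantee strict positivity.
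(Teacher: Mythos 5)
Your proposal is correct and follows essentially the same route as the paper's proof: identify $x$ and $y$ as the only nodes whose balance values change, expand the four variances, and factor the difference as $\tfrac12(t_{l-1}-t_l)(t_{l-1}+t_l+2t_0+4f(2))>0$. The paper arrives at exactly this factorization, and your observation that the extra hypothesis $\delta(T_l)\leq\delta(T_0)$ of the $\MDM$ version is not needed here matches the statement as given.
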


\begin{proof}
Let $t_i=\delta(T_i)$, for every $i=0,\ldots,l$, so that $t_l< t_{l-1}$.
The only nodes in $T$ or $T'$ with different ${bal}$ value in both trees are $x$ and $y$, and therefore,
$$
\begin{array}{l}
\displaystyle \mathfrak{C}(T')-\mathfrak{C}(T) ={bal}_{T'}(x)+{bal}_{T'}(y)-{bal}_{T}(x)-{bal}_{T}(y)\\[1ex]
\displaystyle\qquad =\frac{1}{2}(t_{l-1}-t_0)^2+\frac{1}{2}(t_{l-1}+t_0+f(2)-t_l)^2-\frac{1}{2}(t_{l}-t_0)^2-\frac{1}{2}(t_{l}+t_0+f(2)-t_{l-1})^2
\\[2ex]
\displaystyle\qquad =\frac{1}{2}(t_{l-1}-t_l)(t_{l-1}+t_l+2t_0+4f(2))> 0
\end{array}
$$
\end{proof}

\begin{lemma}\label{lem:4}
Let $f$ be a mapping $\NN\to \RR_{\geq 0}$ such that $0<f(k)< f(k-1)+f(2)$, for every $k\geq 3$, and let $T$ be the tree depicted in Fig. \ref{fig:lem4mdm1}, where $l\geq 1$, $x_1$ is the root, all nodes in the path from $x_1$ to $x_l$ are binary,  and $k\geq 3$.
Assume moreover that $\delta(S_1)\leq \delta(S_2)\leq \cdots \leq \delta(S_l)$ and
that $\delta(T_1)$ is the value in the set $\{\delta(T_1),\ldots,\delta(T_k)\}$ closest to its mean. Then:

\begin{enumerate}[(a)]
\item If $\delta(S_l)\leq \delta(T_1)$, then the tree $T'$ depicted in Fig. \ref{fig:lem4mdm}, obtained by pruning the subtree $T_1$ and regrafting it in the arc ending in $x$, is such that $\mathfrak{C}(T')> \mathfrak{C}(T)$.

\item If $\delta(S_l)> \delta(T_1)$, then the tree $T''$ depicted in Fig. \ref{fig:lem4mdm}, obtained by pruning the subtree $T_1$ and regrafting it in the arc ending in $x_l$, is such that $\mathfrak{C}(T'')> \mathfrak{C}(T)$.
\end{enumerate}
\end{lemma}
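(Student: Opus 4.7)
The plan is to follow exactly the pattern of the proofs of Lemmas \ref{lem:4mdm} and \ref{lem:4sd}, substituting $\var$ for $\MDM$ or $\sd$ and expanding each local contribution as a sum of non-negative quadratic pieces. I set $t_i = \delta(T_i)$ for $i=1,\ldots,k$, $s_i = \delta(S_i)$ for $i=1,\ldots,l$, and $t = \sum_{i=1}^k t_i$, and write $c := f(k-1)+f(2)-f(k)$, which is strictly positive by the hypothesis on $f$. As in the companion lemmas, the nodes whose balance value differs between $T$ and $T'$ (resp. $T''$) are exactly $x$, the newly inserted binary node $y$, and the binary ancestors $x_1,\ldots,x_l$ of $x$, since the $\delta$-size of each subtree $T_{x_i}$ changes by the same constant $c$ (replacing an internal node of out-degree $k$ by one of out-degree $k-1$ plus an extra binary node contributes $f(k-1)+f(2)-f(k)$).

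For part (a), under the hypothesis $s_l \leq t_1$, I decompose
$$
\mathfrak{C}(T')-\mathfrak{C}(T) = \bigl[{bal}_{T'}(x) - {bal}_T(x)\bigr] + {bal}_{T'}(y) + \sum_{i=1}^l \bigl[{bal}_{T'}(x_i) - {bal}_T(x_i)\bigr].
$$
The first bracket equals $\var(t_2,\ldots,t_k) - \var(t_1,\ldots,t_k) \geq 0$ by Lemma \ref{lem:var} (recall that $t_1$ is the value closest to the mean), and ${bal}_{T'}(y) = \tfrac{1}{2}\bigl(\sum_{i=2}^k t_i + f(k-1) - t_1\bigr)^2 \geq 0$ trivially. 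For each $i \in \{1,\ldots,l\}$, setting $A_i := t + \sum_{j=i+1}^l s_j + f(k) + (l-i)f(2) - s_i$, the contribution at $x_i$ equals $\tfrac{1}{2}\bigl[(A_i+c)^2 - A_i^2\bigr] = c A_i + \tfrac{1}{2}c^2$. Since $s_i \leq s_l \leq t_1 \leq t$, we have $A_i \geq 0$; since $c > 0$, each such term is strictly positive. Summing, $\mathfrak{C}(T')-\mathfrak{C}(T) > 0$.

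For part (b), under the hypothesis $s_l > t_1$, the analogous decomposition involves the changes at $x$, at the new binary node $y$, at $x_l$, and at $x_1,\ldots,x_{l-1}$. The change at $x$ is again handled by Lemma \ref{lem:var}; the contributions at $x_1,\ldots,x_{l-1}$ are of the form $\tfrac{1}{2}[(A_i+c)^2 - A_i^2] = cA_i + \tfrac{1}{2}c^2 > 0$ by the same argument as in part (a); and the remaining two contributions ${bal}_{T''}(y)$ and ${bal}_{T''}(x_l) - {bal}_T(x_l)$ are the only ones whose sign is not immediate. A case split on whether $s_l \leq t + f(k)$ or $s_l \geq t+f(k)$, mirroring the one at the end of the proof of Lemma \ref{lem:4mdm}.(a2), shows that in both cases the sum of these two remaining contributions is strictly positive, using $s_l > t_1$ in the first case and the obvious positivity of all terms in the second.

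The main obstacle is purely computational: the variance produces squared differences, so each local change at $x_i$ contributes $2A_i c + c^2$ rather than the simple $c$ that appears in the MDM proof, and one must verify that $A_i \geq 0$ in every case. The hypotheses $s_1 \leq \cdots \leq s_l$ and $f(k) < f(k-1)+f(2)$ are precisely what guarantee this sign, and they are used exactly as in the MDM/sd proofs; once this is observed, no further ideas beyond those of Lemmas \ref{lem:4mdm} and \ref{lem:4sd} are needed.
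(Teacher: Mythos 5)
Your proof is correct and follows essentially the same route as the paper's: the same identification of the nodes whose balance value changes, the same appeal to Lemma \ref{lem:var} at $x$, and the same computation $\tfrac12\bigl[(A_i+c)^2-A_i^2\bigr]=cA_i+\tfrac12 c^2>0$ with $A_i\geq 0$ and $c=f(k-1)+f(2)-f(k)>0$ for the binary ancestors. The only difference is the last step of part (b): where you propose a case split on the sign of $t+f(k)-s_l$ mirroring Lemma \ref{lem:4mdm}.(a2), the paper simply factors ${bal}_{T''}(y)-{bal}_{T}(x_l)$ as a difference of squares $\tfrac12(X-Y)(X+Y)$ with $X-Y=2(s_l-t_1)+c>0$ and $X+Y=2\sum_{i=2}^{k}t_i+f(k-1)+f(2)+f(k)>0$, which avoids the case analysis altogether (and is worth adopting, since with squares in place of absolute values your second case is not literally ``positivity of all terms'': one still has to verify $X>|Y|$ there).
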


\begin{proof}
For every $i=1,\ldots,l$, let $s_i=\delta(S_i)$ and, for every $i=1,\ldots,k$, $\delta(T_i)=t_i$,
and let $t=t_1+\cdots+t_k$.
We are assuming that $s_1\leq\cdots\leq s_l$ and that $\var(t_1,\ldots,t_k)\leq \var(t_2,\ldots,t_k)$.
Moreover, for every $i=1,\ldots,l$, we shall call $x_i$ the parent of the root of $S_i$ in all three trees $T,T',T''$.

As far as assertion (a) goes, the only nodes in $T$ or $T'$ with different ${bal}$ value in both trees are $x_1,\ldots,x_l$, $x$ and the new node $y$. Therefore,
$$
\begin{array}{l}
\displaystyle\mathfrak{C}(T')-\mathfrak{C}(T)  ={bal}_{T'}(x)-{bal}_{T}(x)+{bal}_{T'}(y)+\sum\limits _{i=1}^l({bal}_{T'}(x_i)-{bal}_{T}(x_i))
\\[1ex]
\displaystyle\quad =\var(t_2,\ldots,t_k)-\var(t_1,\ldots,t_k)+\frac{1}{2}\Big(\sum\limits_{i=2}^k t_i+f(k-1)-t_1\Big)^2\\[1ex]
\displaystyle\quad \quad +\sum\limits _{i=1}^l\Big(\frac{1}{2}\Big(t+\sum\limits_{j=i+1}^l s_j+f(k-1)+(l-i+1)f(2)-s_i\Big)^2\hspace*{-1ex}-\frac{1}{2}\Big(t+\sum\limits_{j=i+1}^l s_j+f(k)+(l-i)f(2)-s_i\Big)^2\Big)\\[2ex]
\displaystyle\quad \geq \frac{1}{2}\sum\limits _{i=1}^l\Big(\Big(t+\sum\limits_{j=i+1}^l s_j+f(k-1)+(l-i+1)f(2)-s_i\Big)^2\hspace*{-1ex}-\Big(t+\sum\limits_{j=i+1}^l s_j+f(k)+(l-i)f(2)-s_i\Big)^2\Big)\\[2ex]
\displaystyle\quad = \frac{1}{2}\big(f(k-1)+f(2)-f(k)\big) \sum\limits _{i=1}^l \Big(2(t+\sum\limits_{j=i+1}^l s_j-s_i)+f(k-1)+f(k)+(2(l-i)+1)f(2)\Big)>0,
\end{array}
$$
where this last expression is $>0$ because $f(k-1)+f(2)-f(k)> 0$ and, for every $i=1,\ldots,l$,
$s_i\leq s_l\leq t_1\leq t$.

Let us prove now assertion (b). Again, the only nodes in $T$ or $T''$ with different ${bal}$ value in both trees are $x_1,\ldots,x_l$, $x$ and the new node $y$. Therefore, 
$$
\begin{array}{l}
\displaystyle\mathfrak{C}(T')-\mathfrak{C}(T)  ={bal}_{T''}(x)-{bal}_{T}(x)+{bal}_{T''}(y)+{bal}_{T''}(x_l)-{bal}_{T}(x_l) +\sum\limits _{i=1}^{l-1}({bal}_{T''}(x_i)-{bal}_{T}(x_i))
\\[1ex]
\displaystyle\quad =\var(t_2,\ldots,t_k)-\var(t_1,\ldots,t_k)+\frac{1}{2}\Big(\sum\limits_{i=2}^k t_i+s_l+f(k-1)+f(2)-t_1\Big)^2\\[1ex]
\displaystyle\qquad +\frac{1}{2}\Big(\sum\limits_{i=2}^k t_i+f(k-1)-s_l\Big)^2 -\frac{1}{2}(t+f(k)-s_l)^2\\[1ex]
\displaystyle\qquad+\frac{1}{2}\sum\limits _{i=1}^{l-1}\Big( \Big(t+\sum\limits_{j=i+1}^{l} s_j+f(k-1)+(l-i+1)f(2)-s_i\Big)^2-\Big(t+\sum\limits_{j=i+1}^{l}+f(k)+(l-i)f(2)-s_i\Big)^2\Big)\\[2ex]
\displaystyle\quad \geq \frac{1}{2}\Big(\Big(\sum\limits_{i=2}^k t_i+s_l+f(k-1)+f(2)-t_1\Big)^2-(t+f(k)-s_l)^2\Big)\\[1ex]
\displaystyle \qquad +\frac{1}{2}\sum\limits _{i=1}^{l-1}\Big(\Big(t+\sum\limits_{j=i+1}^{l} s_j+f(k-1)+(l-i+1)f(2)-s_i\Big)^2-\Big(t+\sum\limits_{j=i+1}^{l} s_j+f(k)+(l-i)f(2)-s_i\Big)^2\Big)\\[2ex]
\displaystyle\quad =\frac{1}{2}\Big(2\sum\limits_{i=2}^k t_i+f(k-1)+f(2)+f(k)\Big)(f(k-1)+f(2)-f(k)+2(s_l-t_1))\\[2ex]
\displaystyle\qquad +\frac{1}{2}(f(k-1)+f(2)-f(k)) \sum\limits _{i=1}^{l-1} \Big(2\Big(t+\sum\limits_{j=i+1}^{l} s_j-s_i\Big)+f(k-1)+f(k)+(2(l-i)+1)f(2)\Big)>0
\end{array}
$$
where this last expression is $> 0$ because $f(k-1)+f(2)-f(k)> 0$, $s_l> t_1$
 and, for every $i=1,\ldots,l-1$, $s_i\leq s_l$.
\end{proof}

Then, using Lemmas \ref{lem:2} to \ref{lem:4} and arguing as in the proof of Corollary \ref{cor:basic1mdm}, it can be proved
that, for every non-binary  tree $T\in \TT_n^*$, there always exists a binary  tree $T'\in \TT_n^*$ such that $\mathfrak{C}(T')>\mathfrak{C}(T)$. Therefore, the maximum $\mathfrak{C}$ value is reached at some binary tree. Since, for binary trees $T$, 
$\mathfrak{C}(T)=\frac{(f(0)+f(2))^2}{2}\cdot C^{(2)}(T)$ (see Proposition 7 in the main text), and $f(0)+f(2)>0$, it remains to prove that
 the binary  tree  in $\TT^*_n$ with maximum $C^{(2)}$ is exactly the comb. The proof of this fact follows closely that of
 Lemma 1 in the main text.

\begin{corollary}\label{cor:cat}
For every binary  tree   $T\in \TT^*_n$, if $T\neq K_n$, 
then $C^{(2)}(K_n)> C^{(2)}(T)$.
\end{corollary}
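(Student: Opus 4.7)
The plan is to mimic the proof of Lemma \ref{lem:cat-C} almost verbatim, replacing the absolute-value calculation with a squared-difference one. Namely, given any binary tree $T \in \TT_n^*$ with $T \neq K_n$, I would fix an internal node $x$ of smallest depth in $T$ with no leaf child, let $T_1 \star T_2$ and $T_3 \star T_4$ be the subtrees rooted at the children of $x$, and assume (after relabeling) that $t_1 \leq t_2$ and $t_1 + t_2 \leq t_3 + t_4$, where $t_i$ is the number of leaves of $T_i$. I would then form $T'$ by pruning $T_2$ from its parent $y$ and regrafting it along the arc from $x$ to the root of $T_3 \star T_4$, exactly as in Fig.~\ref{fig:cat1}, and show $C^{(2)}(T') > C^{(2)}(T)$.

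The only nodes whose children's leaf counts differ between $T$ and $T'$ are $x$, the removed node $y$, and the new node $z$. So the calculation reduces to
\begin{align*}
C^{(2)}(T') - C^{(2)}(T) &= (t_3+t_4-t_2)^2 + (t_3+t_4+t_2-t_1)^2 \\
&\qquad - (t_2-t_1)^2 - (t_3+t_4-t_2-t_1)^2.
\end{align*}
Pairing the first term with the last and the second with the third, and using the factorization $A^2 - B^2 = (A-B)(A+B)$ on each pair, routine algebra yields
$$
C^{(2)}(T') - C^{(2)}(T) = (t_3+t_4-t_1)\bigl(t_3+t_4+t_1+2t_2\bigr).
$$
Since $t_2 \geq 1$ forces $t_3+t_4 \geq t_1+t_2 > t_1$, the first factor is strictly positive, while the second is obviously positive, so $C^{(2)}(T') > C^{(2)}(T)$.

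Once this local inequality is established, the rest is immediate: the rotation takes any non-comb binary tree in $\TT_n^*$ to another binary tree in $\TT_n^*$ with strictly larger $C^{(2)}$. Since $\TT_n^*$ is finite, iterating the procedure must terminate, and the only stopping points are combs (the trees to which the procedure no longer applies, i.e., those with no internal node with two internal children). Hence $C^{(2)}$ attains its maximum on $\TT_n^*$ exclusively at $K_n$, proving the corollary.

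The main obstacle --- really the only substantive point --- is the algebraic simplification of $C^{(2)}(T') - C^{(2)}(T)$ into a manifestly positive expression; beyond that, the combinatorial skeleton of the argument is identical to that of Lemma~\ref{lem:cat-C}, so no new ideas are required.
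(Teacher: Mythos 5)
Your proposal is correct and is essentially identical to the paper's own proof: the paper likewise reduces the corollary to the local rotation of Lemma \ref{lem:cat-C} and computes $C^{(2)}(T')-C^{(2)}(T)=(t_3+t_4-t_1)(t_3+t_4+t_1+2t_2)>0$ under the same assumptions. No discrepancies to report.
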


\begin{proof}
Using the argument of the proof of Lemma 1 in the main text, it is enough to prove that if $T$ and $T'$ are the  trees depicted in Fig.~6 in the main text, then, under the assumptions therein, 
$C^{(2)}(T')>C^{(2)}(T)$. And, indeed (using the notations therein),
$$
\begin{array}{l}
C^{(2)}(T')-C^{(2)}(T)  =(t_3+t_4-t_2)^2+(t_3+t_4+t_2-t_1)^2-(t_2-t_1)^2-(t_3+t_4-t_2-t_1)^2\\
\qquad =(t_3+t_4-t_1)(t_3+t_4+t_1+2t_2)>0
\end{array}
$$
where the last inequality holds because, by assumption, $t_1,t_2,t_3,t_4>0$ and $t_1+t_2\leq t_3+t_4$.
\end{proof}

Now, it is straightforward to check that
$$
C^{(2)}(K_n)=\sum_{k=1}^{n-2} k^2=\frac{1}{6}(n-1)(n-2)(2n-3),
$$
from where we obtain
$$
\mathfrak{C}(K_n)=\frac{(f(0)+f(2))^2}{2}\cdot C^{(2)}(K_n)=\frac{(f(0)+f(2))^2}{12}(n-1)(n-2)(2n-3),
$$
as we claimed in the statement.

\section{Proof of the thesis of Theorem 19   for $\mathfrak{C}_{\MDM,e^n}$}

\noindent To simplify the notations, we shall denote in this section $\delta_{e^n}$ and $\mathfrak{C}_{\MDM,e^n}$ by $\delta$ and $\mathfrak{C}$, respectively, and we shall denote ${bal}_{\MDM,f}$ on a tree $T$ by ${bal}_T$ or simply by ${bal}$ when it is not necessary to specify the tree.

\begin{lemma}\label{lem:prelexp1}
Let $n_1,\ldots,n_k\in \NN_{>0}$ and $n=n_1+\cdots+n_k$, and assume that $2\leq k\leq n-1$. Then
$$
e^{n_1}+\cdots+e^{n_k}+e^k\leq e^{n-k+1}+e^k+(k-1)e< e^n
$$
Moreover, the first inequality is strict unless there is at most one exponent $n_i\geq 2$.
\end{lemma}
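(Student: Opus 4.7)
I would prove the two stated inequalities separately.

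For the first (middle) inequality, the term $e^k$ cancels, so it suffices to show
$$e^{n_1}+\cdots+e^{n_k}\leq e^{n-k+1}+(k-1)e$$
under the constraints $n_1+\cdots+n_k=n$ and $n_i\geq 1$. My approach is a \emph{merging} argument: whenever two entries $a=n_i$ and $b=n_j$ both satisfy $a,b\geq 2$, I replace them by $(1,a+b-1)$. This preserves both the sum $n$ and the constraint $n_i\geq 1$, while it changes $\sum_i e^{n_i}$ by
$$(e+e^{a+b-1})-(e^a+e^b)=(e^a-e)(e^{b-1}-1)>0.$$
The number of entries $\geq 2$ strictly decreases at each step, so iteration terminates at a configuration in which at most one entry exceeds $1$; under the constraints, this configuration must be $(n-k+1,1,\ldots,1)$ up to reordering, which gives exactly the claimed right-hand side. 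The strictness clause in the lemma follows at once, since the merging operation is non-vacuous precisely when at least two $n_i$ are $\geq 2$.

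For the second (strict) inequality, I would set $g(k)=e^{n-k+1}+e^k+(k-1)e$ and view $k$ as a continuous variable. Since $g''(k)=e^{n-k+1}+e^k>0$, the function $g$ is convex on $[2,n-1]$ and is therefore maximized at an endpoint. The symmetry identity $g(k)-g(n-k+1)=(2k-n-1)e$ identifies $k=n-1$ as the maximizer when $n\geq 4$, so the claim reduces to
$$e^{n-1}(e-1)>e^2+(n-2)e.$$
Writing $h(n)=e^{n-1}(e-1)-e^2-(n-2)e$, a direct computation gives $h(n+1)-h(n)=e^{n-1}(e-1)^2-e$, which is positive for every $n\geq 2$, so $h$ is strictly increasing. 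Combined with the numerical check $h(3)=e^3-2e^2-e>0$, this settles all $n\geq 3$ in the form $k=n-1$. The only boundary case left is $n=3$, $k=2$, which reduces to $2e^2+e<e^3$, immediate from $e^3-2e^2-e=h(3)>0$.

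In my view, the main obstacle is the second inequality. Unlike the first, which collapses to the clean factorization $(e^a-e)(e^{b-1}-1)$, the second is essentially numerical, comparing $e^n$ against a sum of two exponentials and a linear term. Disposing of it cleanly requires first identifying the worst value of $k\in[2,n-1]$ via convexity, so that one can pass from a two-parameter inequality in $(n,k)$ to a one-parameter claim amenable to induction on $n$.
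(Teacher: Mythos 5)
Your proof is correct. For the first inequality you use exactly the paper's idea: the paper establishes $e^a+e^b\leq e^{a+b-x}+e^x$ for $1\leq x\leq\min(a,b)$ and applies it $k-1$ times with $x=1$, which is your merging step $(a,b)\mapsto(1,a+b-1)$ with the same gain, just written in the factored form $(e^a-e)(e^{b-1}-1)$; the strictness discussion is also identical in substance. Where you genuinely diverge is the second inequality. The paper disposes of it with a one-line chain of generous overestimates: $e^{n-k+1}+e^k+(k-1)e\leq 2e^{n-1}+(n-2)e<2e^{n-1}+e^{n-3}\cdot e=e^{n-2}(2e+1)<e^{n-2}\cdot e^2=e^n$, using only $x+1<e^x$ and $2e+1<e^2$. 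You instead treat $k$ as a continuous variable, use convexity of $g(k)=e^{n-k+1}+e^k+(k-1)e$ plus the symmetry $g(k)-g(n-k+1)=(2k-n-1)e$ to locate the worst case at $k=n-1$, and then verify $g(n-1)<e^n$ by a monotonicity-in-$n$ argument. Both are valid; the paper's route is shorter and avoids calculus, while yours is sharper in that it isolates the extremal $k$ rather than bounding each term crudely (and, as a small bonus, it handles $n=3$ uniformly, whereas the paper's intermediate step $(n-2)e<e^{n-3}\cdot e$ degenerates to equality at $n=3$ and the chain only survives there because the final step is strict). Your closing worry that the second inequality is the "main obstacle" is somewhat overstated given how cheaply the paper's overestimates settle it, but your treatment is sound.
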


\begin{proof}
To begin with, notice that if $1\leq x\leq \min(a,b)$, then
\begin{equation}
e^a+e^b\leq e^{a+b-x}+e^x,\label{eq:expab}
\end{equation}
 because
$$
e^b-e^x=(e^{b-x}-1)e^x\leq (e^{b-x}-1)e^a= e^{a+b-x}-e^a.
$$
Moreover, if  $x<\min(a,b)$ then the inequality is clearly strict.

Now, applying $k-1$ times inequality (\ref{eq:expab}) with $x=1$, we obtain
$$
e^{n_1}+\cdots+e^{n_k}  \leq e^{n_1+(n_2-1)+(n_3-1)+\cdots+(n_k-1)}+(k-1)e =
e^{n-k+1}+(k-1)e,
$$
which implies the first inequality. Moreover,  this inequality is strict unless all $n_i$ but, at most, one are 1, because
(\ref{eq:expab}) is strict if $x<a$ and $x<b$, which in this situation is translated to the existence of at least two $n_i,n_j$ greater than 1.

As far as the second inequality goes,
since $2\leq k\leq n-1$, and hence, in particular, $n\geq 3$ (and using, in the second inequality, that $x+1<e^x$ for every $x\in \RR_{>0}$), we have
$$
e^{n-k+1}+(k-1)e+e^k \leq 2e^{n-1}+(n-2)e<2e^{n-1}+e^{n-3}\cdot e=e^{n-2}(2e+1)<e^{n-2}\cdot e^2=e^n
$$
as we claimed. \end{proof}

\begin{lemma}\label{lem:prelexp2}
Let $n_1,\ldots,n_k,l\in \NN$ be such that $k\geq 1$, $k+l\geq 2$, each $n_i\geq 2$, and let $n=n_1+\cdots+n_k$. Then
$$
e^{n_1}+\cdots+e^{n_k}+e^{k+l}< e^{n+l}
$$
\end{lemma}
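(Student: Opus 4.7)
The plan is to split into two cases according to whether $k=1$ or $k\geq 2$, since Lemma \ref{lem:prelexp1} requires $k\geq 2$ and must be bypassed in the boundary case.

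\textbf{Case $k\geq 2$.} Since each $n_i\geq 2$, we have $n\geq 2k\geq k+1$, so $2\leq k\leq n-1$ and Lemma \ref{lem:prelexp1} applies, yielding the key estimate
$$
e^{n_1}+\cdots+e^{n_k}+e^{k}< e^{n}.
$$
If $l=0$ this is exactly the desired inequality. If $l\geq 1$, I would rewrite what must be proved as
$$
e^{n_1}+\cdots+e^{n_k}-e^{n}< e^{n+l}-e^{k+l},
$$
and observe that the left-hand side is strictly less than $-e^{k}$ by the display above, while the right-hand side equals $e^{l}(e^{n}-e^{k})>e^{n}-e^{k}\geq -e^{k}$ (using $l\geq 1$, $n>k$, and the nonnegativity of $e^{n}-e^{k}$). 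A cleaner way is to use the Lemma \ref{lem:prelexp1} estimate in the form $e^{n_1}+\cdots+e^{n_k}<e^{n}-e^{k}$, add $e^{k+l}$ to both sides, and then note that the inequality $e^{n}-e^{k}+e^{k+l}<e^{n+l}$ is equivalent to $(e^{l}-1)e^{k}<(e^{l}-1)e^{n}$, which holds since $e^{l}-1>0$ and $k<n$.

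\textbf{Case $k=1$.} Here $n_1=n\geq 2$ and, from $k+l\geq 2$, $l\geq 1$. The inequality reduces to $e^{n}+e^{1+l}<e^{n+l}$. Factoring the right-hand side minus the second term gives
$$
e^{n+l}-e^{l+1}=e^{l+1}\bigl(e^{n-1}-1\bigr),
$$
so it suffices to show $e^{l+1}(e^{n-1}-1)>e^{n}$, i.e.\ $e^{l}(e^{n-1}-1)>e^{n-1}$. Since $l\geq 1$, bounding $e^{l}\geq e$ reduces the task to $(e-1)e^{n-1}>e$, i.e.\ $(e-1)e^{n-2}>1$, which is immediate from $n\geq 2$ because $(e-1)e^{n-2}\geq e-1>1$.

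The only real obstacle is the boundary case $k=1$, where Lemma \ref{lem:prelexp1} cannot be invoked; but the required estimate there is just a one-line manipulation using $e-1>1$. Everything else is routine algebra built on top of the previous lemma, so I would expect the whole proof to take only a few lines in the final write-up.
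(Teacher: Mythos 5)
Your proof is correct, but it takes a genuinely different route from the paper's. For $k\geq 2$ you reuse Lemma \ref{lem:prelexp1} as a black box (the hypotheses $2\leq k\leq n-1$ do follow from $n\geq 2k$, as you check) to get $e^{n_1}+\cdots+e^{n_k}<e^{n}-e^{k}$, and then dispose of the shift by $l$ via the monotonicity identity $e^{k+l}-e^{k}<e^{n+l}-e^{n}$, which is just $(e^{l}-1)e^{k}<(e^{l}-1)e^{n}$; for $k=1$ you factor $e^{n+l}-e^{l+1}=e^{l+1}(e^{n-1}-1)$ and reduce to $(e-1)e^{n-2}>1$. The paper instead re-runs the exponent-merging inequality $e^{a}+e^{b}\leq e^{a+b-x}+e^{x}$ with $x=2$ rather than $x=1$ --- which is only available here because each $n_i\geq 2$, a hypothesis absent from Lemma \ref{lem:prelexp1} --- obtaining the intermediate bound $e^{n-2(k-1)}+(k-1)e^{2}+e^{k+l}$ and then closing with explicit numeric estimates such as $3e^{n+l-2}<e^{n+l}$; its $k=1$ case likewise uses the $x=2$ merge. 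Your version is shorter and avoids the ad hoc numeric bounds by exploiting the already-proved lemma, at the cost of a slightly less uniform case split ($k=1$ versus $k\geq 2$ rather than $l=0$ versus $l\geq 1$); both arguments are sound and of comparable elementarity.
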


\begin{proof}
The case $l=0$ is a particular instance of the last lemma. So, we assume henceforth that $l\geq 1$.
If $k=1$, so that $n=n_1$, then applying inequality (\ref{eq:expab}) with $x=2$ we have
$$
e^{n}+e^{l+1}\leq e^{n+l-1}+e^2
$$
and the right hand side term is smaller than $e^{n+l}$ because $n+l\geq 3$.

Assume finally that $k\geq 2$. Applying $k-1$ times inequality (\ref{eq:expab}) with $x=2$, we obtain
$$
e^{n_1}+\cdots+e^{n_k}+e^{k+l}\leq
e^{n-2(k-1)}+(k-1)e^2+e^{k+l}
$$
Now, since $2\leq k\leq n/2$, and hence $n\geq  4$,
$$
\begin{array}{l}
e^{n-2(k-1)}+(k-1)e^2+e^{k+l}  \leq e^{n-2}+\big(\frac{n}{2}-1\big)e^2+e^{l+n/2}\\
\hspace*{2cm} < e^{n-2}+e^{n/2-2}\cdot e^2+e^{l+n/2}=e^{n-2}+e^{n/2}+e^{l+n/2} <3e^{n+l-2}<e^{n+l},
\end{array}
$$
as we claimed.
\end{proof}

\begin{lemma}\label{lem:exp1}
The largest $e^n$-size of a tree  in $\TT_n^*$ is $e^n+n$, and it is reached exactly at the star $\FS_n$.
\end{lemma}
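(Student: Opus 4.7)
The plan is to argue by induction on $n\geq 2$, using the recursion $\delta(T_1\star\cdots\star T_k)=\delta(T_1)+\cdots+\delta(T_k)+e^k$ together with Lemma~\ref{lem:prelexp1}. For the base case $n=2$, the only tree in $\TT_2^*$ is the star $\FS_2$: its root has degree $2$ and it has two leaves, so $\delta(\FS_2)=e^2+2$, as claimed.

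For the inductive step, I would fix $n\geq 3$ and $T\in \TT_n^*$, and write $T=T_1\star\cdots\star T_k$ with $T_i\in\TT_{n_i}^*$, $n_1+\cdots+n_k=n$ and $2\leq k\leq n$. For each $i$, if $n_i=1$ then $T_i$ is a single leaf and $\delta(T_i)=1<e+1=e^{n_i}+n_i$; if $n_i\geq 2$ the induction hypothesis gives $\delta(T_i)\leq e^{n_i}+n_i$, with equality precisely when $T_i=\FS_{n_i}$. Summing and adding $e^k$,
$$
\delta(T)\leq \sum_{i=1}^k e^{n_i}+n+e^k.
$$
If $k=n$, every $n_i=1$, so $T=\FS_n$ and a direct computation gives $\delta(T)=n+e^n$. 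If instead $2\leq k\leq n-1$, Lemma~\ref{lem:prelexp1} yields $\sum_{i=1}^k e^{n_i}+e^k<e^n$, so $\delta(T)<e^n+n$. In both cases $\delta(T)\leq e^n+n$, with equality only at $T=\FS_n$, which is what we wanted to prove.

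The main subtlety is handling the leaf case $n_i=1$ cleanly, since the induction hypothesis as stated targets $n\geq 2$; this is harmless because one just computes $\delta(T_i)=1$ directly and notes the strict inequality $1<e+1$. Beyond that, the entire work is already packaged inside Lemma~\ref{lem:prelexp1}, so no further serious obstacle is expected.
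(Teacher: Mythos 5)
Your proof is correct, but it follows a different route from the paper's. The paper does not induct on $n$: it takes an arbitrary $T=T_1\star\cdots\star T_m\neq \FS_n$, separates the leaf children from the non-leaf children $T_i=T_{i,1}\star\cdots\star T_{i,n_i}$, and ``flattens'' one level by reattaching all the grandchildren $T_{i,j}$ directly to the root; Lemma~\ref{lem:prelexp2} (the variant of Lemma~\ref{lem:prelexp1} with the extra parameter $l$ counting leaf children) then shows this strictly increases $\delta$, so by finiteness of $\TT_n^*$ the maximum can only be attained at $\FS_n$. Your complete induction on $n$ instead bounds each subtree by $\delta(T_i)\leq e^{n_i}+n_i$ (directly for leaves, by the inductive hypothesis otherwise), sums, and invokes Lemma~\ref{lem:prelexp1} with $2\leq k\leq n-1$ to get the strict inequality $\sum_i e^{n_i}+e^k<e^n$; the case $k=n$ forces $T=\FS_n$ and yields the exact value $e^n+n$. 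Both arguments are sound. Your version is arguably leaner: it needs only Lemma~\ref{lem:prelexp1} and would render Lemma~\ref{lem:prelexp2} unnecessary for this purpose, and it delivers the equality characterization for free from the case split on $k$ (so the ``equality precisely when $T_i=\FS_{n_i}$'' clause of your inductive hypothesis is never actually used). The paper's exchange argument, on the other hand, is of a piece with the local-modification lemmas used throughout the supplement. Your handling of the leaf case via $\delta(T_i)=1<e+1$ is exactly the right fix, and your choice to start the induction at $n=2$ sidesteps the degenerate single-node case, for which the stated formula $e^n+n$ does not literally hold anyway.
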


\begin{proof}
The cases $n=1,2$ are obvious, because $\TT^*_n$ consists of a single tree.
Let now $n\geq 3$. We shall prove that for every $T\in \TT^*_n\setminus\{\FS_n\}$, there is a tree  $T'\in \TT^*_n$ with $\delta(T')>\delta(T)$. This shows that no tree other than $\FS_n$ can have the maximum $e^n$-size.

So, let $T=T_1\star\cdots \star T_m\in \TT_n^*\setminus\{\FS_n\}$, with $m\geq 2$. Let $l\geq 0$ be such that,  for every $i=1,\ldots,l$, 
the subtree $T_i$ consists of a single node, and, for every $i=l+1,\ldots,m$, $T_{i}=T_{i,1}\star\cdots\star T_{i,n_i}$ with $n_i\geq 2$; cf. Fig. \ref{fig:maxdelta}. Since $T\neq \FS_n$, $l<m$. Let now $T'$ be the tree 
$$
T'=T_{1}\star\cdots \star T_{l}\star T_{l+1,1}\star \cdots\star T_{l+1,n_{l+1}}\star \cdots \star T_{m,1}\star\cdots\star T_{m,n_m}.
$$
Then,
$$
\delta(T)  =e^{m}+l+\sum\limits_{i=l+1}^m \Big(e^{n_i}+\sum\limits_{j=1}^{n_i} \delta(T_{i,j})\Big)<
e^{l+n_{l+1}+\cdots+n_m}+l+\sum\limits_{i=l+1}^m \sum\limits_{j=1}^{n_i} \delta(T_{i,j})=\delta(T')
$$
by Lemma \ref{lem:prelexp2}.
\end{proof}

\begin{figure}[htb]
\begin{center}
\begin{tikzpicture}[thick,>=stealth,scale=0.3]

\draw(-1,-2) node[mintre] (z1) {}; 
\draw(-1,-3) node  {\footnotesize $T_1$};
\draw(-0.3,-2) node {.}; 
\draw(0,-2) node {.}; 
\draw(0.3,-2) node {.}; 
\draw(1,-2) node[mintre] (zl) {}; 
\draw(1,-3) node  {\footnotesize $T_l$};
\draw(4,0) node[mintre] (v11) {}; 
\draw (v11)--(3,-2)--(5,-2)--(v11);
\draw(4,-3) node  {\footnotesize $T_{l+1,1}$};
\draw(5.7,0) node {.}; 
\draw(6,0) node {.}; 
\draw(6.3,0) node {.}; 
\draw(8,0) node[mintre] (v1n) {}; 
\draw (v1n)--(7,-2)--(9,-2)--(v1n);
\draw(8,-3) node  {\footnotesize $T_{l+1,n_{l+1}}$};
\draw(10,2) node {.}; 
\draw(10.3,2) node {.}; 
\draw(10.6,2) node {.}; 
\draw(10.9,2) node {.}; 
\draw(11.2,2) node {.}; 
\draw(11.5,2) node {.}; 
\draw(15,0) node[mintre] (vm1) {}; 
\draw (vm1)--(14,-2)--(16,-2)--(vm1);
\draw(15,-3) node  {\footnotesize $T_{m,1}$};
\draw(16.7,0) node {.}; 
\draw(17,0) node {.}; 
\draw(17.3,0) node {.}; 
\draw(19,0) node[mintre] (vmn) {}; 
\draw (vmn)--(18,-2)--(20,-2)--(vmn);
\draw(19,-3) node  {\footnotesize $T_{m,n_{m}}$};
\draw(6,2) node[mintre] (w1) {}; 
\draw(17,2) node[mintre] (wm) {}; 
\draw(5,4) node[mintre] (r) {}; 
\draw(3,3.5) node  {\footnotesize $T$};
\draw (r)--(w1);
\draw (r)--(wm);
\draw (r)--(z1);
\draw (r)--(zl);
\draw (w1)--(v11);
\draw (w1)--(v1n);
\draw (wm)--(vm1);
\draw (wm)--(vmn);
\end{tikzpicture}
\qquad
\begin{tikzpicture}[thick,>=stealth,scale=0.3]
\draw(-1,-2) node[mintre] (z1) {}; 
\draw(-1,-3) node  {\footnotesize $T_1$};
\draw(-0.3,-2) node {.}; 
\draw(0,-2) node {.}; 
\draw(0.3,-2) node {.}; 
\draw(1,-2) node[mintre] (zl) {}; 
\draw(1,-3) node  {\footnotesize $T_l$};
\draw(4,0) node[mintre] (v11) {}; 
\draw (v11)--(3,-2)--(5,-2)--(v11);
\draw(4,-3) node  {\footnotesize $T_{l+1,1}$};
\draw(5.7,-2) node {.}; 
\draw(6,-2) node {.}; 
\draw(6.3,-2) node {.}; 
\draw(8,0) node[mintre] (v1n) {}; 
\draw (v1n)--(7,-2)--(9,-2)--(v1n);
\draw(8,-3) node  {\footnotesize $T_{l+1,n_{l+1}}$};
\draw(11,-2) node {.}; 
\draw(11.3,-2) node {.}; 
\draw(11.6,-2) node {.}; 
\draw(11.9,-2) node {.}; 
\draw(12.2,-2) node {.}; 
\draw(12.5,-2) node {.}; 
\draw(15,0) node[mintre] (vm1) {}; 
\draw (vm1)--(14,-2)--(16,-2)--(vm1);
\draw(15,-3) node  {\footnotesize $T_{m,1}$};
\draw(16.7,-2) node {.}; 
\draw(17,-2) node {.}; 
\draw(17.3,-2) node {.}; 
\draw(19,0) node[mintre] (vmn) {}; 
\draw (vmn)--(18,-2)--(20,-2)--(vmn);
\draw(19,-3) node  {\footnotesize $T_{m,n_{m}}$};
\draw(5,3) node[mintre] (r) {}; 
\draw(3,2.5) node  {\footnotesize $T'$};
\draw (r)--(z1);
\draw (r)--(zl);
\draw (r)--(v11);
\draw (r)--(v1n);
\draw (r)--(vm1);
\draw (r)--(vmn);
\end{tikzpicture}

\end{center}
\caption{\label{fig:maxdelta} The trees $T$ and $T'$ in the proof of Lemma \ref{lem:exp1}.}
\end{figure}
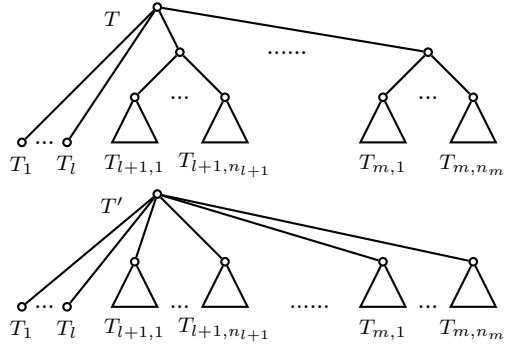

\begin{lemma}\label{lem:exp2}
For every $T\in\TT_n^*$ with $n\neq 1,3$, 
$$
2\mathfrak{C}(T)+\delta(T)\leq 2\mathfrak{C}(\FS_n)+\delta(\FS_n)= e^n+n,
$$
and the inequality is strict if $T\neq \FS_n$.

When $n=1$, the maximum of $2\mathfrak{C}+\delta$ is $2\mathfrak{C}(\FS_1)+\delta(\FS_1)=1$, and when
 $n=3$, this maximum is $2\mathfrak{C}(K_3)+\delta(K_3)=3e^2+4$ and it is reached exactly at $K_3$.
\end{lemma}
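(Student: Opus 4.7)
The plan is to prove the lemma by strong induction on $n$, with the base cases $n=1,2,3$ dispatched by direct computation. For $n=1,2$ the set $\TT_n^*$ consists of $\FS_n$ alone, giving $2\mathfrak{C}+\delta$ equal to $1$ and $e^2+2$, respectively. For $n=3$, $\TT_3^*=\{\FS_3,K_3\}$, and using Example \ref{ex:deltabin} together with Proposition \ref{lem:Phibin} I would compute $2\mathfrak{C}(\FS_3)+\delta(\FS_3)=e^3+3$ and $2\mathfrak{C}(K_3)+\delta(K_3)=(1+e^2)+(3+2e^2)=3e^2+4$, and check numerically that $3e^2+4>e^3+3$, so the maximum is uniquely attained at $K_3$.

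For the inductive step fix $n\geq 4$ and any $T=T_1\star\cdots\star T_m\in\TT_n^*$ with $m\geq 2$; let $n_i$ and $t_i=\delta(T_i)$ denote the leaf count and $e^n$-size of $T_i$. The recursions for $\delta$ and $\mathfrak{C}$ give
\[
2\mathfrak{C}(T)+\delta(T) = \sum_{i=1}^m \bigl(2\mathfrak{C}(T_i)+t_i\bigr) + 2\,\MDM(t_1,\ldots,t_m) + e^m.
\]
The induction hypothesis supplies the key bound on each summand: $2\mathfrak{C}(T_i)+t_i\leq e^{n_i}+n_i$ if $n_i\neq 1,3$, while $2\mathfrak{C}(T_i)+t_i=1$ if $n_i=1$ and $2\mathfrak{C}(T_i)+t_i\leq 3e^2+4$ if $n_i=3$. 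Lemma \ref{lem:exp1} moreover gives $t_i\leq e^{n_i}+n_i$, and a short elementary argument (sorting the $t_i$'s and splitting around the median) yields $2\,\MDM(t_1,\ldots,t_m)\leq\max_i t_i\leq e^{n^\ast}+n^\ast$, where $n^\ast=\max_i n_i$. Substituting everything, the lemma reduces to an inequality of the shape
\[
\sum_{i=1}^m e^{n_i}+e^m+\bigl(e^{n^\ast}+n^\ast\bigr)+\varepsilon\cdot q\;\leq\;e^n,
\]
where $q=\#\{i:n_i=3\}$ and $\varepsilon:=3e^2+1-e^3$ is the (positive but small) excess of $F(3)=3e^2+4$ over $e^3+3$. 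This is exactly the type of estimate provided by Lemmas \ref{lem:prelexp1} and \ref{lem:prelexp2}: they yield $\sum e^{n_i}+e^m<e^n$ with a quantifiable exponential slack, which absorbs the remaining lower-order terms $e^{n^\ast}+n^\ast+\varepsilon q$.

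The main obstacle is the equality clause. The tree $T=\FS_n$ corresponds to $m=n$ and all $n_i=1$; there the weaker ``$n_i=1$'' bound $2\mathfrak{C}(T_i)+t_i=1$ is crucial, since using the nominal $e^{n_i}+n_i=e+1$ would lose a factor of $e$ at each leaf. With this tight bound $\MDM=0$ and every inequality in the chain becomes an equality, recovering the target value $n+e^n$. Conversely, if $T\neq \FS_n$ then either some $n_i\geq 2$ (which forces $t_{i_0}$ strictly larger than $1$ and activates the strict inequality already present in Lemmas \ref{lem:prelexp1}--\ref{lem:prelexp2}), or the tree shape within some $T_i$ is non-extremal (invoking strict inequality in the inductive hypothesis for that factor); in either event the combined bound is strict. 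The only genuinely delicate point is verifying that the slack in $e^n-\sum e^{n_i}-e^m$ dominates the bonus $\varepsilon q$ coming from possible $n_i=3$ children; this can be checked by noting that $e^n\geq e^{n_i+1}$ for each $i$, so $(e-1)e^{n_i}$ alone dwarfs $\varepsilon$ for any $n_i\geq 3$.
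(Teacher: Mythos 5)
Your overall strategy is the same as the paper's (induct on $n$, decompose $T=T_1\star\cdots\star T_m$ at the root, bound each $2\mathfrak{C}(T_i)+\delta(T_i)$ by the induction hypothesis, bound the $\MDM$ term separately, and finish with the exponential-sum estimates of Lemmas \ref{lem:prelexp1}--\ref{lem:prelexp2}), and your base-case computations for $n=1,2,3$ and your bound $2\,\MDM(t_1,\ldots,t_m)\leq\max_i t_i$ are both correct. But there is a genuine gap: the inductive step does not work at $n=4$, so you cannot start the induction there. Take $n=4$ and the decomposition $m=2$, $(n_1,n_2)=(1,3)$. Your reduced inequality reads
$$
\sum_{i=1}^m e^{n_i}+e^m+\big(e^{n^\ast}+n^\ast\big)+\varepsilon q
=(e+e^{3})+e^{2}+(e^{3}+3)+\varepsilon
=2e^{3}+e+e^{2}+3+\varepsilon\approx 56.4,
$$
while $e^{4}\approx 54.6$, so the inequality you reduce to is false (the statement itself is still true for these trees --- the slack you discarded by replacing the exact value $1$ for a leaf child by $e^{1}+1$, and $\MDM$ by its crude bound, is exactly what is missing). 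This is precisely why the paper verifies $n=3,4,5$ directly from the tables and only starts the induction at $n\geq 6$. Relatedly, your closing claim that the slack absorbs $\varepsilon q$ because ``$(e-1)e^{n_i}$ alone dwarfs $\varepsilon$'' compares the wrong quantities: the relevant comparison is between the \emph{total} slack $e^{n}-2e^{n-1}-(\text{lower order})=(e-2)e^{n-1}-\cdots$ and the sum $e^{n^\ast}+n^\ast+\varepsilon q$ of everything you added, and establishing that this is positive for all admissible $m$ requires something like the paper's concavity-plus-endpoint argument ($f_n(2)>0$ and $f_n(n-1)>0$ for $n\geq 6$), not a termwise estimate. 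With $n=4$ (and, to be safe, $n=5$) moved into the base cases and the final inequality actually verified over the full range $2\leq m\leq n-1$, your argument goes through.
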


\begin{proof}
The cases $n=1,2$ are obvious, because then $\TT_n^*$ consists of a single tree, and the cases $n=3,4,5$ can be checked in the tables in supplementary file S2.
Notice that  $1<e^1+1$ and $3e^2+4< (e^3+3)+4$; we shall use these two inequalities below.
We prove now the general case $n\geq 6$  using the cases $n=1,\ldots,5$ and complete induction on $n$. 

Let $T=T_1\star \cdots \star T_k$, with $k\geq 2$ and $T_i\in \TT_{n_i}^*$ for every $i=1,\ldots,k$, so that $n=n_1+\cdots+n_k\geq 6$. If $k=n$, then $n_i=1$ for every $i$ and $T=\FS_n$, in which case $2\mathfrak{C}(T)+\delta(T)=e^n+n$. So, we shall assume that $k\leq n-1$, and we shall prove that, in this case
$2\mathfrak{C}(T)+\delta(T)< e^n+n$.

After renumbering the subtrees $T_i$ if necessary, assume that  there exists $l\geq 0$ such that $n_i=3$ for every $i\leq l$ 
and $n_i\neq 3$ for every $i=l+1,\ldots,k$. We shall prove first of all that
\begin{equation}
\MDM(\delta(T_1),\ldots,\delta(T_k))\leq \frac{1}{k}\sum\limits_{i=1}^k (e^{n_i}+n_i-2)
\label{ineq:mdm}
\end{equation}
Indeed, let $M=\mathit{Median}(\delta(T_1),\ldots,\delta(T_k))$. Then,
$$
\MDM(\delta(T_1),\ldots,\delta(T_k)) = \frac{1}{k}\sum\limits_{i=1}^k |\delta(T_i)-M|
\leq  \frac{1}{k} \sum\limits_{i=1}^k |\delta(T_i)-2|
\leq \frac{1}{k} \sum\limits_{i=1}^k (e^{n_i}+n_i-2)
$$
where the first inequality is due to the fact that $M$ is the real number that minimizes the function
$x\mapsto \sum\limits_{i=1}^k |\delta(T_i)-x|$, and the second inequality holds because $|\delta(T_i)-2|\leq e^{n_i}+n_i-2$ for every $i=1,\ldots,k$; on its turn, this inequality is a consequence, when $n_i\geq 2$,
 of Lemma \ref{lem:exp1}, and, when $n_i=1$, of the fact that if $T_1\in \TT_1^*$, then $\delta (T_i)=1$ and hence $|\delta(T_i)-2|=1<e^1+1-2$.
Now,
$$
\begin{array}{l}
\displaystyle 2\mathfrak{C}(T)+\delta(T)  =
2\Big(\sum\limits_{i=1}^k\mathfrak{C}(T_i)+\MDM(\delta(T_1),\ldots,\delta(T_k))\Big)+\sum\limits_{i=1}^k\delta(T_i)+e^k\\[1ex]
\displaystyle\qquad =
\sum\limits_{i=1}^k(2\mathfrak{C}(T_i)+\delta(T_i))+2\MDM(\delta(T_1),\ldots,\delta(T_k))+e^k\\[1ex]
\displaystyle\qquad =\sum\limits_{i=1}^l(2\mathfrak{C}(T_i)+\delta(T_i))
+\sum\limits_{i=l+1}^k(2\mathfrak{C}(T_i)+\delta(T_i)) +2\MDM(\delta(T_1),\ldots,\delta(T_k))+e^k\\[1ex]
\displaystyle\qquad \leq 
\sum\limits_{i=1}^l(3e^{2}+4)+\sum\limits_{i=l+1}^k(e^{n_i}+n_i)+\frac{2}{k}\sum\limits_{i=1}^k (e^{n_i}+n_i-2)+e^k\\[1ex]
\qquad\quad\mbox{(by the case $n=3$, the induction hypothesis, and inequality (\ref{ineq:mdm}))}\\[1ex]
\displaystyle\qquad \leq 
\sum\limits_{i=1}^l(e^3+3+4)+\sum\limits_{i=l+1}^k(e^{n_i}+n_i)+\sum\limits_{i=1}^k (e^{n_i}+n_i-2)+e^k\\[1ex]
\qquad\quad\mbox{(because $k\geq 2$ and $3e^2+4< e^3+3+4$)}\\
\displaystyle\qquad = 
\sum\limits_{i=1}^k(e^{n_i}+n_i)+4l+\sum\limits_{i=1}^k (e^{n_i}+n_i-2)+e^k
\\[1ex]
\displaystyle\qquad=2\sum\limits_{i=1}^k e^{n_i}+2n+4l-2k +e^k \leq 2\sum\limits_{i=1}^k e^{n_i}+e^k +2n+2k\\[1ex]
\displaystyle\qquad
 \leq 2 e^{n-(k-1)}+e^k+2(e+1)k+2n-2e\\
\qquad\quad\mbox{(by the first inequality in Lemma \ref{lem:prelexp1})}\\
\end{array}
$$
Thus, it remains to prove that, for every $n\geq 6$ and for every $2\leq k\leq n-1$,
\begin{equation}
2 e^{n-(k-1)}+e^k+2(e+1)k+2n-2e< e^n+n
\label{ineq:3}
\end{equation}
Since, for every $n\geq 6$, the function
$$
f_n(x)=e^n+n-\big(2 e^{n-(x-1)}+e^x+2(e+1)x+2n-2e\big)=e^n-2 e^{n-(x-1)}-e^x-2(e+1)x-n+2e
$$
is concave, because
$f_n''(x)<0$, its minimum value on the closed interval $[2,n-1]$ is reached at one of its ends. So,  in order to prove inequality (\ref{ineq:3}) for every $n\geq 6$ and for every $k=2,\ldots,n-1$, it is enough to prove that  $f_n(2)>0$ and $f_n(n-1)>0$ for every $n\geq 6$. And, indeed
\begin{itemize}
\item $f_n(2)=e^n-2 e^{n-1}-n-e^2-2e-4>0$ because the function
$g(x)=e^x-2 e^{x-1}-x-e^2-2e-4$ is increasing on $\RR_{\geq 2}$ and $g(5)>0$.

\item $f_n(n-1)=e^n-e^{n-1}-(2e+3)n-2e^2+4e+2>0$ by a similar reason.
\end{itemize}
This finishes the proof of the statement.
\end{proof}

Now we can proceed with the proof of Theorem  19.(a) for $D=\MDM$. The cases $n=2,3,4,5$ can be checked in the tables in the supplementary file S2. Notice in particular that, when $n=4$, the maximum is 
$$
\mathfrak{C}(K_4)=\frac{3}{2}(e^2+1)<\frac{1}{2}(e^{3}+2)+2;
$$
 we shall use this inequality below. We prove now, using the cases $n=1,\ldots,5$ and complete induction on $n$, that, for every $n\geq 6$,
\begin{quote}
\it The tree in $\TT_n^*$ with maximum $\mathfrak{C}$ is $\FS_1\star \FS_{n-1}$, with
$\mathfrak{C}(\FS_1\star \FS_{n-1})= \frac{1}{2}(e^{n-1}+n-2)$
\end{quote}
Recall  that, as in the previous sections,  Lemma \ref{lem:2mdm} implies that the maximum $\mathfrak{C}$ value on $\TT_n^*$ is reached at a tree with binary root. So, let $T=T_1\star T_2\in \TT_n^*$, with $T_1\in \TT_{n_1}^*$ and $T_2\in \TT_{n_2}^*$. We must distinguish two cases:
\medskip

\noindent\textit{a)} Assume that $n_1=1$, and therefore $n_2=n-1\geq 5$. In this case,
$$
\mathfrak{C}(T)=\mathfrak{C}(T_2)+\dfrac{1}{2}(\delta(T_2)-1)=
\dfrac{1}{2}(2\mathfrak{C}(T_2)+\delta(T_2)-1)\leq \dfrac{1}{2}(e^{n_2}+n_2-1)= \dfrac{1}{2}(e^{n-1}+n-2)
$$
by Lemma \ref{lem:exp2}. Moreover, the equality holds only when $T_2=\FS_{n-1}$.
\medskip

\noindent\textit{b)} Assume that $n_1,n_2\geq 2$ and, without any loss of generality, that $\delta(T_2)\leq \delta(T_1)$. Then,
$$
\begin{array}{rl}
\mathfrak{C}(T)&=\mathfrak{C}(T_1)+\mathfrak{C}(T_2)+\dfrac{1}{2}\big(\delta(T_1)-\delta(T_2)\big)\\[2ex]
&< \dfrac{1}{2}(e^{n_1-1}+n_1-2)+2+\dfrac{1}{2}(e^{n_2-1}+n_2-2)+2+\dfrac{1}{2}(e^{n_1}+n_1-n_2)=(*)\\
\end{array}
$$
This inequality is due to the following facts. On the one hand, $n_2\leq \delta(T_2)$ and $ \delta(T_1)\leq e^{n_1}+n_1$, by Lemma \ref{lem:exp1}, and hence $\delta(T_1)-\delta(T_2)\leq e^{n_1}+n_1-n_2$.
 On the other hand, by the induction hypothesis,
$\mathfrak{C}(T_i)\leq \frac{1}{2}(e^{n_i-1}+n_i-2)<\frac{1}{2}(e^{n_i-1}+n_i-2)+2$, unless $n_i=4$, in which case we still have
$\mathfrak{C}(T_i)\leq\mathfrak{C}(K_4) < \frac{1}{2}(e^{n_i-1}+n_i-2)+2$.

Let us continue
$$
(*) = \dfrac{1}{2}((1+e)e^{n_1-1}+e^{n_2-1}+2n_1+4)
\leq  \dfrac{1}{2}((2+e)e^{n-3}+2n)
$$
because $n_1,n_2\leq n-2$.
So, it remains to prove that, for every $n\geq 6$,
$$
(2+e)e^{n-3}+2n<e^{n-1}+n-2
$$
This is equivalent to
$$
(e^2-e-2)e^{n-3}-n-2>0,
$$
which is easy to prove, for instance noticing that 
$f(x)=(e^2-e-2)e^{x-3}-x-2$ is increasing on $\RR_{\geq 3}$ and that $f(5)>0$.
This finishes the proof of Theorem  19 for $D=\MDM$.

\section{Proof of the thesis of Theorem 19   for $\mathfrak{C}_{\sd,e^n}$}

\noindent  The proof of this case follows closely that of the case when $D=\MDM$ given in the last section. To begin with, it turns out that a key lemma similar to Lemma \ref{lem:exp2} also holds when $D=\sd$.
To simplify the notations, we shall  denote in this section $\delta_{e^n}$ and $\mathfrak{C}_{\sd,e^n}$ by $\delta$ and $\mathfrak{C}$, respectively, and we shall denote ${bal}_{\sd,f}$ on a tree $T$ by ${bal}_T$ or simply by ${bal}$ when it is not necessary to specify the tree.

\begin{lemma}\label{lem:exp2sd}
For every $T\in\TT_n^*$ with $n\neq 1,3$, 
$$
\sqrt{2}\cdot \mathfrak{C}(T)+\delta(T)\leq \sqrt{2}\cdot \mathfrak{C}(\FS_n)+\delta(\FS_n)= e^n+n.
$$
and the inequality is strict if $T\neq \FS_n$.

When $n=1$, the maximum of $\sqrt{2}\cdot \mathfrak{C}+\delta$ is $\sqrt{2}\cdot \mathfrak{C}(\FS_1)+\delta(\FS_1)=1+e$, and when
 $n=3$, this maximum is $\sqrt{2}\cdot \mathfrak{C}(K_3)+\delta(K_3)=3e^2+4$.
\end{lemma}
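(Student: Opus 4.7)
The plan is to mirror the structure of the proof of Lemma \ref{lem:exp2} given for $D=\MDM$, replacing the bound on $\MDM$ by an analogous bound on $\sd$ and then relying on the very same final numerical inequality that was verified in the previous section. First I would dispose of the small cases $n=1,2,3,4,5$ by the tables in the supplementary file S2 (and use that $1 < e+1$ and $3e^2+4 < e^3+3+4$, the two numerical facts invoked in the MDM case), and then argue by complete induction on $n$ for $n\geq 6$, writing a non-star tree as $T=T_1\star\cdots\star T_k$ with $2\leq k\leq n-1$ and using the recursion
$$
\sqrt{2}\,\mathfrak{C}(T)+\delta(T)=\sum_{i=1}^k\bigl(\sqrt{2}\,\mathfrak{C}(T_i)+\delta(T_i)\bigr)+\sqrt{2}\,\sd(\delta(T_1),\ldots,\delta(T_k))+e^k.
$$

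The core new step is to establish the following analogue of inequality (\ref{ineq:mdm}) for $\sd$ in place of $\MDM$:
$$
\sqrt{2}\,\sd(\delta(T_1),\ldots,\delta(T_k))\leq \sum_{i=1}^k\bigl(e^{n_i}+n_i-2\bigr).
$$
For $k=2$, this uses the exact identity $\sqrt{2}\,\sd(x_1,x_2)=|x_1-x_2|$ and the triangle inequality $|x_1-x_2|\leq |x_1-2|+|x_2-2|$, combined with $|\delta(T_i)-2|\leq e^{n_i}+n_i-2$ (a consequence of $1\leq \delta(T_i)\leq e^{n_i}+n_i$ by Lemma \ref{lem:exp1}). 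For $k\geq 3$, I would use that the mean minimizes the sum of squared deviations, so $\sd^2\leq \frac{1}{k-1}\sum(\delta(T_i)-2)^2$, and since $\frac{2}{k-1}\leq 1$ when $k\geq 3$,
$$
\sqrt{2}\,\sd\leq \sqrt{\textstyle\sum(\delta(T_i)-2)^2}\leq \sum|\delta(T_i)-2|\leq \sum(e^{n_i}+n_i-2),
$$
where the middle inequality is $\sqrt{\sum a_i^2}\leq \sum|a_i|$.

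Plugging this bound together with the inductive hypothesis into the recursion above, and treating the subtrees with $n_i=3$ separately by $\sqrt{2}\,\mathfrak{C}(T_i)+\delta(T_i)\leq 3e^2+4 < e^3+7$ exactly as in the MDM proof, the estimate collapses to the very same expression
$$
\sqrt{2}\,\mathfrak{C}(T)+\delta(T)\leq 2\,e^{n-(k-1)}+e^k+2(e+1)k+2n-2e,
$$
using Lemma \ref{lem:prelexp1} to bound $\sum e^{n_i}$. The final step is then exactly the inequality (\ref{ineq:3}) already verified in the MDM case, which completes the induction; the strictness when $T\neq \FS_n$ is tracked through the inductive step as in the MDM proof. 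The main obstacle is obtaining a linear bound on $\sqrt{2}\,\sd$ of the form $\sum(e^{n_i}+n_i-2)$ that is tight enough so that the downstream estimates match those of the MDM case exactly; once this is in place, the rest of the proof is a word-for-word adaptation.
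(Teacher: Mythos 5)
Your proposal is correct, and the global architecture (small cases from the tables, complete induction on $n$, decomposition $T=T_1\star\cdots\star T_k$, a bound on the dispersion term, reduction to a numerical inequality handled by a monotonicity/concavity check) is the same as the paper's. But the key internal estimate is genuinely different, and yours is arguably cleaner. The paper bounds the standard-deviation term by $\sd(\delta(T_1),\ldots,\delta(T_k))\leq \frac{1}{\sqrt{k-1}}\big(\sum_i(e^{n_i}+n_i-2)-5l\big)$, keeping the factor $\frac{\sqrt2}{\sqrt{k-1}}$ (which exceeds $1$ when $k=2$) and compensating with a $-5l$ correction that requires the extra numerical fact $\delta(K_3)=2e^2+3<(e^3+3)-5$; as a consequence the induction terminates in a new inequality, $(1+\sqrt2)e^{n-(k-1)}+(1+\sqrt2)n+e^k+(e+\sqrt2 e-2\sqrt2)k-(1+\sqrt2)e<e^n+n$, which must be verified by a separate concavity argument. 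You instead split $k=2$ (where $\sqrt2\,\sd(x_1,x_2)=|x_1-x_2|$ exactly, and the triangle inequality suffices) from $k\geq 3$ (where $\tfrac{2}{k-1}\leq 1$ kills the normalization), obtaining the coefficient-free bound $\sqrt2\,\sd(\delta(T_1),\ldots,\delta(T_k))\leq\sum_i(e^{n_i}+n_i-2)$ with no $-5l$ correction needed; the $+4l$ from the $K_3$ subtrees is then absorbed by $4l\leq 4k$ exactly as in the $\MDM$ case, and the chain terminates in the already-proved inequality (\ref{ineq:3}). This buys you a shorter proof that reuses more of Section 4 verbatim and avoids both inequality (19.2) and the verification of (19.4); the only price is the case split on $k$. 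One small point to make explicit when writing it up: strictness for $T\neq\FS_n$ comes from the fact that any such $T$ has $k\leq n-1$ at the root, and (\ref{ineq:3}) is strict, so the whole chain is strict — you gesture at this but should state it.
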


\begin{proof}
The cases $n=1,2$ are obvious, and the cases $n=3,4,5$ can be checked in Table 2 in the supplementary file S2.
We shall use that $1<e^1+1$ and the following inequalities:
\begin{subequations}
   \begin{align}
   &\sqrt{2}\cdot \mathfrak{C}(K_3)+\delta(K_3)=3e^2+4<(e^3+3)+4 \label{19.1}\\
    &\delta(K_3)=2e^2+3<(e^3+3)-5 \label{19.2}
\end{align}
\end{subequations}

%
%

We prove the general case $n\geq 6$ by induction on $n$ using an argument very similar to the one given in
the proof of Lemma \ref{lem:exp2}. Let $T=T_1\star \cdots \star T_k$, with $k\geq 2$ and $T_i\in \TT_{n_i}^*$ for every $i=1,\ldots,k$, so that $n=n_1+\cdots+n_k$. If $k=n$, then $n_i=1$ for every $i$ and $T=\FS_n$, in which case $\sqrt{2}\cdot \mathfrak{C}(T)+\delta(T)=e^n+n$. So, we shall assume that $k\leq n-1$.
Without any loss of generality, we assume that there exists $l\geq 0$ such that $T_i=K_3$ if, and only if, $i\leq l$.

Now, it turns out that
\begin{equation}
\displaystyle \sd(\delta(T_1),\ldots,\delta(T_k))\leq \frac{1}{\sqrt{k-1}}\Big(\sum\limits_{i=1}^k (e^{n_i}+n_i-2)-5l\Big)
\label{19.3}
\end{equation}
Indeed, let $m=(\delta(T_1)+\ldots+\delta(T_k))/k$. Then,
$$
\var(\delta(T_1),\ldots,\delta(T_k)) = \frac{1}{k-1}\sum\limits_{i=1}^k (\delta(T_i)-m)^2
\leq  \frac{1}{k-1} \sum\limits_{i=1}^k (\delta(T_i)-2)^2
$$
because $m$ is the real number that minimizes the function
$x\mapsto \sum\limits_{i=1}^k (\delta(T_i)-x)^2$. Taking square roots,
$$
\begin{array}{rl}
\displaystyle \sd(\delta(T_1),\ldots,\delta(T_k)) &\displaystyle \leq \sqrt{ \frac{1}{k-1} \sum\limits_{i=1}^k (\delta(T_i)-2)^2}
\leq \frac{1}{\sqrt{k-1}}\sum\limits_{i=1}^k |\delta(T_i)-2|\\[2ex]
&\displaystyle \leq \frac{1}{\sqrt{k-1}}\Big(\sum\limits_{i=1}^k (e^{n_i}+n_i-2)-5l\Big)
\end{array}
$$
where the second last inequality is a  consequence of Lemma \ref{lem:exp1}, inequality (\ref{19.2}), and the fact, already used in the previous section,  that if $T_1\in \TT_1^*$, then $|\delta(T_1)-2|=e-1= e+1-2$.
Then
$$
\begin{array}{l}
\displaystyle \sqrt{2}\mathfrak{C}(T)+\delta(T)  =
\sqrt{2}\Big(\sum\limits_{i=1}^k\mathfrak{C}(T_i)+\sd(\delta(T_1),\ldots,\delta(T_k))\Big)+\sum\limits_{i=1}^k\delta(T_i)+e^k\\[2ex]
\displaystyle\qquad =
\sum\limits_{i=1}^k(\sqrt{2}\mathfrak{C}(T_i)+\delta(T_i))+\sqrt{2}\sd(\delta(T_1),\ldots,\delta(T_k))+e^k\\[2ex]
\displaystyle\qquad \leq \sum\limits_{i=1}^k(e^{n_i}+n_i)+4 l+\frac{\sqrt{2}}{\sqrt{k-1}}\Big(\sum\limits_{i=1}^k (e^{n_i}+n_i-2)-5l\Big)+e^k\\[1ex]
\qquad\quad\mbox{(by the induction hypothesis and inequalities (\ref{19.1}) and (\ref{19.3}))}\\[2ex]
\displaystyle\qquad \leq \sum\limits_{i=1}^ke^{n_i}+n+4 l+\sqrt{2}\Big(\sum\limits_{i=1}^k e^{n_i}+n-2k-5l\Big)+e^k\\[1ex]
\displaystyle\qquad \leq (1+\sqrt{2})\sum\limits_{i=1}^k e^{n_i} +(1+\sqrt{2})n-2\sqrt{2}k+e^k\\[2ex]
\displaystyle\qquad \leq (1+\sqrt{2}) e^{n-(k-1)}+(1+\sqrt{2})(k-1)e+e^k+(1+\sqrt{2})n-2\sqrt{2}k\\[1ex]
\qquad\quad\mbox{(because of the first inequality in Lemma \ref{lem:prelexp1})}\\[1ex]
\displaystyle\qquad = (1+\sqrt{2}) e^{n-(k-1)}+(1+\sqrt{2})n+e^k+(e+\sqrt{2}e-2\sqrt{2})k -(1+\sqrt{2})e
\end{array}
$$
Thus, it remains to prove that, for every $n\geq 6$ and for every $2\leq k\leq n-1$,
\begin{equation}
(1+\sqrt{2}) e^{n-(k-1)}+(1+\sqrt{2})n+e^k+(e+\sqrt{2}e-2\sqrt{2})k-(1+\sqrt{2})e< e^n+n
\label{19.4}
\end{equation}

Now, for every $n\geq 1$,  the function
$$
\begin{array}{rl}
f_n(x) & =e^n+n-\big((1+\sqrt{2}) e^{n-(x-1)}+e^x+(1+\sqrt{2})n+(e+\sqrt{2}e-2\sqrt{2})x-(1+\sqrt{2})e\big)\\[1ex]
 & =e^n-(1+\sqrt{2}) e^{n-(x-1)}-e^x- \sqrt{2}n-(e+\sqrt{2}e-2\sqrt{2})x+(1+\sqrt{2})e
\end{array}
$$
 is concave, and therefore the minimum value of $f_n(x)$ on the closed interval $[2,n-1]$ will be reached at one of its ends. So,
 in order to prove inequality (\ref{19.4}) for every $n\geq 6$ and every $k=2,\ldots,n-1$, 
  it is enough to prove that $f_n(2)>0$ and $f_n(n-1)>0$ for every $n\geq 6$. And, indeed
\begin{itemize}
\item $f_n(2)=e^n-(1+\sqrt{2})e^{n-1}-\sqrt{2}n-(e^2+\sqrt{2}e+e-4\sqrt{2})>0$ because the function
$g(x)=e^x-(1+\sqrt{2})e^{x-1}-\sqrt{2}x-(e^2+\sqrt{2}e+e-4\sqrt{2})$ is increasing on $\RR_{\geq 3}$ and $g(5)>0$.

\item $f_n(n-1)=e^n-(1+\sqrt{2})e^2-e^{n-1}-\sqrt{2}n-(e+\sqrt{2}e-2\sqrt{2})(n-1)+(1+\sqrt{2})e$ by a similar reason.
\end{itemize}
This finishes the proof of the lemma.
\end{proof}

From here on, the proof  of Theorem  19 for $D=\sd$ proceeds as the one for $D=\MDM$ given in the previous section, using Lemma \ref{lem:exp2sd} instead of Lemma \ref{lem:exp2}; to ease the task of the reader we provide it.  The cases $n=2,3,4,5$ can be checked in Table 2 in the supplementary file S2. Notice in particular that, when $n=4$, the maximum $\mathfrak{C}$ value is 
\begin{equation}
\mathfrak{C}(K_4)=\frac{3}{\sqrt{2}}(e^2+1)<\frac{1}{\sqrt{2}}(e^{3}+2)+2.5 \label{19.5}
\end{equation}
 we shall use it below.

We prove now, using the cases $n=1,\ldots,5$ and complete induction on $n$, that, for every $n\geq 6$,
\begin{quote}
\it The tree in $\TT_n^*$ with maximum $\mathfrak{C}$ is $\FS_1\star \FS_{n-1}$, with
$\mathfrak{C}(\FS_1\star \FS_{n-1})= \frac{1}{\sqrt{2}}(e^{n-1}+n-2)$
\end{quote}
To begin with, notice that Lemma \ref{lem:2sd} implies that the maximum $\mathfrak{C}$ value on $\TT_n^*$ is reached at a tree with binary root.
So, let $T=T_1\star T_2\in \TT_n^*$, with $T_1\in \TT_{n_1}^*$ and $T_2\in \TT_{n_2}^*$. We must distinguish two cases:
\medskip

\noindent\textit{a)} Assume that $n_1=1$, and therefore $n_2=n-1\geq 5$. In this case,
$$
\mathfrak{C}(T)=\mathfrak{C}(T_2)+\dfrac{1}{\sqrt{2}}(\delta(T_2)-1)=
\dfrac{1}{\sqrt{2}}(\sqrt{2}\mathfrak{C}(T_2)+\delta(T_2)-1)\leq \dfrac{1}{\sqrt{2}}(e^{n-1}+n-1-1)= \dfrac{1}{\sqrt{2}}(e^{n-1}+n-2)
$$
by Lemma \ref{lem:exp2}. Moreover, the equality holds only when $T_2=\FS_{n-1}$.
\medskip

\noindent\textit{b)} Assume that $n_1,n_2\geq 2$ and, without any loss of generality, that $\delta(T_2)\leq \delta(T_1)$. Then,
$$
\begin{array}{rl}
\mathfrak{C}(T)&=\mathfrak{C}(T_1)+\mathfrak{C}(T_2)+\dfrac{1}{\sqrt{2}}\big(\delta(T_1)-\delta(T_2)\big)\\[2ex]
&< \dfrac{1}{\sqrt{2}}(e^{n_1-1}+n_1-2)+2.5+\dfrac{1}{\sqrt{2}}(e^{n_2-1}+n_2-2)+2.5+\dfrac{1}{\sqrt{2}}(e^{n_1}+n_1-n_2)=(*)\\
\end{array}
$$
This inequality is due to the following facts. On the one hand, $n_2\leq \delta(T_2)$ and $ \delta(T_1)\leq e^{n_1}+n_1$, by Lemma \ref{lem:exp1}, and hence $\delta(T_1)-\delta(T_2)\leq e^{n_1}+n_1-n_2$.
 On the other hand, by the induction hypothesis,
$\mathfrak{C}(T_i)\leq \frac{1}{\sqrt{2}}(e^{n_i-1}+n_i-2)<\frac{1}{\sqrt{2}}(e^{n_i-1}+n_i-2)+2.5$, unless $n_i=4$, in which case we still have
$\mathfrak{C}(T_i)\leq\mathfrak{C}(K_4) < \frac{1}{\sqrt{2}}(e^{n_i-1}+n_i-2)+2.5$ by inequality (\ref{19.5}).

Let us continue
$$
\begin{array}{rl}
(*) \hspace*{-2ex} &\displaystyle = \dfrac{1}{\sqrt{2}}((1+e)e^{n_1-1}+e^{n_2-1}+2n_1-4)+5
\leq  \dfrac{1}{\sqrt{2}}((2+e)e^{n-3}+2n+5\sqrt{2}-8)\\[1ex]
& \mbox{(because $n_1,n_2\leq n-2$)}\\[1ex]
&\displaystyle < \dfrac{1}{\sqrt{2}}((2+e)e^{n-3}+2n)<\frac{1}{\sqrt{2}}(e^{n-1}+n-2)
\end{array}
$$
because $(2+e)e^{n-3}+2n<e^{n-1}+n-2$, as it was proven in the last step of the proof of Theorem 19 for $D=\MDM$ in the last section.
This finishes the proof of Theorem  19  for $D=\sd$.

\section{Proof of the thesis of Theorem 19    for $\mathfrak{C}_{\var,e^n}$}

\noindent The stated maximum value of $\mathfrak{C}_{\var,e^n}$ on $\TT_n^*$, for $n=2,\ldots,5$, can be checked 
in Table 2 in the supplementary file S2. As far as the case when $n\geq 6$, it is a direct consequence of the corresponding result for $D=\sd$, established in the previous section.

Indeed, to begin with, notice that, since, for every node $v$ in a  tree $T$, ${bal}_{\var,f}(v)={bal}_{\sd,f}(v)^2$, we have that, for every  tree $T$,
$$
\mathfrak{C}_{\var,f}(T)=\sum_{v\in V_{int}(T)}{bal}_{\sd,f}(v)^2\leq \Big(\sum_{v\in V_{int}(T)}{bal}_{\sd,f}(v)\Big)^2=
\mathfrak{C}_{\sd,f}(T)^2
$$
So, for every $T\in \TT_n^*$ with $n\geq 6$,
$$

\end{center}
\caption{Numerical values (rounded to 4 decimal places) of $\delta_f$ (for $f(n)=\ln(n+e)$ and $f(n)=e^n$) 
and  of $\mathfrak{C}_{D,f}$ (for every combination of $D=\MDM$, $\var$ or $\sd$ and $f(n)=\ln(n+e)$ or $f(n)=e^n$) on $\TT^*_n$, for every $n=2,3,4,5$. To win some horizontal space, 
in the subscripts we have replaced $\MDM$ by simply M, and we have denoted
${\ln(n+e)}$ by ${\ln}$.
The columns labelled ``Pos.'' give the position of the tree  in its $\TT^*_n$ in increasing order of the 
Colless-like balance index corresponding to the column on its left. The rows are sorted, for each $n$, in increasing 
order of $\mathfrak{C}_{\MDM,{\ln}}$.
\label{table:concr}}
\end{table}

\end{document}